\documentclass[ba]{imsart}

\makeatother

\RequirePackage{amsthm,amsmath,amsfonts,amssymb, array, float}
\RequirePackage{natbib}
\usepackage{algorithm}
\usepackage{algpseudocode}
\usepackage{longtable}
\RequirePackage[colorlinks,citecolor=blue,urlcolor=blue,backref=page,backref=page]{hyperref}
\RequirePackage{graphicx}

\usepackage{booktabs}
\usepackage{multirow}
\usepackage{float}
\usepackage{array}
\usepackage{placeins}
\usepackage{longtable}
\usepackage{ragged2e}

\startlocaldefs
\theoremstyle{plain}

\newtheorem{theorem}{Theorem}[section]
\newtheorem{lemma}[theorem]{Lemma}

\theoremstyle{definition}
\newtheorem{definition}[theorem]{Definition}
\newtheorem{assumption}{Assumption}

\newtheorem{proposition}[theorem]{Proposition}
\theoremstyle{remark}

\endlocaldefs

\begin{document}

\begin{frontmatter}
\title{Calibrated Bayesian Inference for Stochastic Intervention Effects}
\runtitle{Calibrated Bayesian Inference for Stochastic Intervention Effects}

\begin{aug}
\author[A]{\fnms{Tyler M.}~\snm{Schmidt}\ead[label=e1]{tyler-schmidt-1@uiowa.edu}},
\author[A]{\fnms{Nathan B.}~\snm{Wikle}\ead[label=e2]{nathan-wikle@uiowa.edu}} 

\address[A]{Department of Statistics and Actuarial Science,
University of Iowa\printead[presep={,\ }]{e1}\printead[presep={;\ }]{e2}}

\runauthor{T. M. Schmidt and N. B. Wikle}
\end{aug}

\begin{abstract}
Causal inference increasingly extends beyond classical causal effects defined by deterministic treatment assignments, such as the average treatment effect, to stochastic intervention effects that can weaken positivity requirements and offer greater policy relevance. Nonparametric Bayesian models are attractive for estimating these effects due to their flexibility and inherent uncertainty propagation, but this posterior uncertainty need not be well calibrated for the causal effect of interest. We develop a simple post-processing correction that can be applied to posterior samples without changing the prior or fitting algorithm. We prove that, for a broad class of stochastic interventions, the corrected posterior yields asymptotically efficient inference and credible intervals with asymptotically valid frequentist coverage; formally, it satisfies a semiparametric Bernstein-von Mises theorem. The theory covers interventions specified independently of the observed treatment process, as well as interventions that modify it, including incremental propensity score interventions and a new power-tilt intervention. A central contribution is new theory for SoftBART, including conditions under which this flexible tree-based Bayesian model supports calibrated Bayesian inference for stochastic intervention effects. In simulations, the correction reduces bias and improves coverage relative to the uncorrected Bayesian analysis while remaining competitive with frequentist alternatives. We illustrate the method by estimating how expected LDL cholesterol would change under hypothetical increases or decreases in the odds of receiving statin therapy.
\end{abstract}

\begin{keyword}
\kwd{Bayesian Causal Inference}
\kwd{Stochastic Interventions}
\kwd{Bayesian Additive Regression Trees}
\kwd{Semiparametric Theory}
\kwd{Bernstein-von Mises}
\end{keyword}

\begin{keyword}[class=MSC]
\kwd[Primary ]{62F15}
\kwd[; secondary ]{62E20}
\end{keyword}
\end{frontmatter}

\section{Introduction} \label{sec:intro}

Many scientifically relevant interventions do not assign treatments deterministically. For example, new physician guidelines or a public health campaign may substantially alter the odds of treatment within a population, while the treatment received by any particular individual remains somewhat random, reflecting clinical judgment, patient preference, and the variable nature of human decision making. Similarly, regulations intended to limit point-source air pollution may induce well-understood changes in the distribution of pollution exposure, but the exact exposure level at a given location is the result of a complex, physical-chemical stochastic process. In such settings, causal estimands based on fixed treatment assignments --- such as contrasts between outcomes if all individuals were assigned treatment versus control, or if all individuals followed a prespecified longitudinal treatment regime --- may not align with the scientific or regulatory objectives of interest. In response to these limitations, a growing number of causal estimands are defined using \textit{stochastic interventions}, which quantify how outcomes might change under modified treatment distributions. These estimands can weaken positivity requirements, support more realistic policy questions, and produce interpretable causal effect curves, but they also raise new questions about how to conduct Bayesian inference with desirable asymptotic guarantees.

In this paper, we consider stochastic interventions in this broad sense. Rather than focusing on a single deterministic contrast, we study causal estimands that replace the observed treatment mechanism with a hypothetical intervention distribution, which may be fixed externally or may depend on features of the observed data-generating process. The resulting estimands are often a population mean outcome, or a collection of population mean outcomes, under a family of modified treatment distributions. The basic premise is simple and has been recognized as a way to accommodate the inherent  uncertainty in policy interventions since at least \citet{stock1989nonparametric}. In recent years, such estimands have become increasingly common in the causal inference literature. Early work by \citet{Robins2004, didelez2006direct}, and \citet{Tian2008} described probabilistic interventions that assign treatment according to a user-specified distribution, and extensions have been used in the analysis of static \citep{Diaz2012, Diaz2013, haneuse2013modified} and dynamic longitudinal treatment regimes \citep{Young2011, Young2014, Kennedy2018ipsi}; in mediation analysis \citep{didelez2006direct, diaz2020causal, hejazi2023mediation}; and in settings with network \citep{Ogburn2024} and spatial interference \citep{Papadogeorgou2019, Papadogeorgou2022}, among others.

The increased use of stochastic interventions can be partly attributed to their generality. Estimands have been proposed for a wide range of exposure types, ranging from the routine --- e.g., (time-varying) binary treatments \citep{Robins2004, Young2011, Young2014, Kennedy2018ipsi, Wu2024} and continuous exposures \citep{Diaz2012, Diaz2013, haneuse2013modified, sani2020shift, schindl2026incremental} --- to the non-standard, such as multivariate environmental mixtures \citep{Huang2026} and spatio-temporal point process exposures \citep{Papadogeorgou2022}. This generality is more than notational: stochastic estimands often offer analytic and inferential advantages over their deterministic counterparts. Because the hypothetical exposure distribution can be constrained to remain close to, or absolutely continuous with respect to, the observed treatment mechanism, stochastic intervention effects may require weaker support conditions than interventions that set treatment to a fixed value. Incremental propensity score interventions, for example, reweight each unit's odds of treatment rather than imposing a common treatment rule, and are identifiable without requiring propensity scores to be bounded away from zero and one \citep{Kennedy2018ipsi}. Analogously, for continuous treatments, so-called shift, modified treatment policy, and exponential tilt estimands target feasible perturbations of the exposure distribution rather than a dose-response contrast evaluated at fixed exposure values, helping to reduce reliance on extrapolation outside the observed range of covariate data \citep{Diaz2012, haneuse2013modified, sani2020shift, schindl2026incremental}. This generality has led to the use of stochastic interventions in a variety of applied domains, including economics, political science, public health, environmental epidemiology, infectious disease, and criminology.

It is in the context of this rising popularity that we consider Bayesian estimation of stochastic intervention effects. Bayesian methods for causal inference have received increased attention in settings with deterministic interventions, due in part to the flexibility of modern Bayesian nonparametric priors \citep{hill2011bnp, hahn2020bcf, roy2018bnpmar, alaa2017itegp, ray2019debiased}, the inherent uncertainty quantification of Bayesian methods (particularly in finite samples), and their favorable empirical performance when compared against common non-Bayesian alternatives \citep{dorie2019competition, Yiu2025pcorrect}; see \citet{daniels2023bnp}, \citet{oganisian2021bayesian}, and \citet{li2023bayesian} for general reviews on Bayesian causal inference. The use of these methods in the stochastic intervention setting has a natural appeal --- estimands based on stochastic interventions are meant to account for uncertainty in the treatment assignment process, and in the Bayesian framework this coincides nicely with posterior predictive uncertainty quantification. 

Despite this natural compatibility, the use of Bayesian inference for stochastic intervention effects is rare. Most existing inferential work is frequentist, relying on popular estimating equation and semiparametric-based methods such as inverse probability weighting, targeted learning, or double machine learning estimators \citep{Kennedy2018ipsi, Diaz2012, Diaz2013}. Notable exceptions include \citet{Comment2022}, who extend the Bayesian parametric g-formula \citep{Keil2018} to the estimation of dynamic and stochastic longitudinal treatment regimes. Similarly, \citet{Chen2025} introduce a general Bayesian g-formula estimator that utilizes Bayesian additive regression tree (BART) priors to estimate deterministic and stochastic longitudinal treatment effects. \citet{Tang2025} use stochastic interventions to evaluate the safety of alternative radiation dosages received by normal tissue in cancer treatment, and propose a Bayesian monotone marginal structural model for inference. Finally, \citet{Josefsson2026} introduce an incremental threshold intervention to investigate the effect of reductions in systolic blood pressure on memory, which they estimate using the Bayesian semiparametric g-formula equipped with modified SoftBART priors; the intervention is framed as a deterministic downward shift in blood pressure by an unknown quantity, however, the assignment of a prior to this quantity induces a corresponding distributional change to the conditional intervention density. 

These examples establish the feasibility of Bayesian stochastic intervention analyses, but they do not resolve a basic calibration question, in the sense of \citet{little2011calibrated}: when do posterior summaries for stochastic intervention effects yield inferences with desirable frequentist guarantees? This question is nontrivial because stochastic intervention effects are semiparametric functionals of the observed data distribution, and many popular stochastic interventions, such as incremental propensity score interventions and shift interventions, may themselves depend on the unknown treatment mechanism. As a result, regularization from flexible Bayesian priors can distort the induced marginal posterior for the causal effect \citep{linero2023bnpcause, hahn2020bcf}, and the resulting plug-in credible intervals may not have nominal frequentist coverage \citep{rousseau2016frequentist, ray2020semiparametric, Yiu2025pcorrect}. A formal treatment is therefore needed to connect Bayesian posterior uncertainty with semiparametric efficiency theory for stochastic intervention effects.

Motivated by this gap, we develop a calibrated Bayesian procedure for stochastic intervention effects. Building on the semiparametric posterior correction introduced by \citet{Yiu2025pcorrect}, we prove that the resulting one-step corrected posterior satisfies a Bernstein--von Mises (BvM) theorem for a broad class of stochastic intervention effects. In particular, the result applies to popular data-fixed and data-adaptive time-varying stochastic interventions for binary and categorical treatments, including incremental propensity score interventions. 
Furthermore, we show that the conditions of our BvM theorem can be verified when the unknown outcome regression and treatment assignment mechanism are modeled using flexible SoftBART priors \citep{Linero2018brte}. Specifically, we derive a bracketing entropy bound for the SoftBART function class and combine it with existing posterior contraction results to establish the empirical process and convergence requirements needed for calibrated posterior inference. 
Together, these results give conditions under which posterior credible regions for stochastic intervention effects are asymptotically equivalent to efficient frequentist confidence regions, while retaining the finite-sample uncertainty propagation and modeling flexibility of Bayesian nonparametric methods. We compare the favorable finite-sample performance of the proposed estimator against frequentist and Bayesian competitors via a simulation study. Finally, we illustrate its use in an analysis of the effect of statin therapy on lowering LDL cholesterol levels among adults in the United States.

The remainder of the article is organized as follows. Section~\ref{sec:stochastic_int} formally defines stochastic intervention effects 
and introduces two Bayesian estimators: a plug-in estimator and the proposed one-step posterior correction. Section~\ref{sec:Correction} establishes the main semiparametric Bernstein--von Mises result for stochastic intervention effects estimated with the one-step posterior correction. Section~\ref{sec:SoftBART} verifies the empirical process and entropy conditions needed for this result when the outcome regression and treatment assignment mechanism are modeled using SoftBART priors. Section~\ref{sec:simulation_study} evaluates the finite-sample performance of the proposed correction in simulations, and Section~\ref{sec:Real_data} applies the method to the statin therapy analysis. Finally, Section~\ref{sec:Discussion} concludes with a discussion of limitations and promising extensions of our work.

\section{Stochastic Interventions} \label{sec:stochastic_int}

\subsection{Notation and Preliminaries}

We consider the case where we observe independent and identically distributed data, $(Z_1, \ldots, Z_n)$, from an unknown distribution $P_0 \in \mathcal{P}$, where $Z = (X, A, Y)$. Here, $X$ denotes a vector of covariates measured prior to treatment $A$, and $Y$ is the outcome of interest. Throughout, we assume binary treatments (so the support of $A$ is $\mathcal{A} = \{0,1\}$), although our results can be easily generalized to any set of categorical treatments. For ease of exposition, in the main text we restrict our focus to a single-time-point intervention, however, all results and corresponding proofs are extended to the longitudinal setting in the Supplementary Material. 

 Let $\mathcal{P}$ denote a collection of probability measures on a Polish sample space $(\mathcal{Z}, \mathcal{A}_{\mathcal{Z}})$. We denote $P(B)$ as the probability of some arbitrary event $B$, while $P[f] = \int f(z) dP(z)$ represents the expectation of $f$ under $P \in \mathcal{P}$. Additionally, $\mathbb{P}_n[f] = n^{-1}\sum_{i=1}^n f(Z_i)$, where $\mathbb{P}_n$ is the empirical measure, and $\mathbb{G}_n = \sqrt{n}(\mathbb{P}_n - P_0)$ denotes the empirical process. For $P \in \mathcal{P}$, let $L_2(P)$ be the Hilbert space consisting of all measurable functions $h: \mathcal{Z} \rightarrow \mathbb{R}$ with $P[h^2] < \infty$ equipped with the inner product $\langle h_1, h_2 \rangle _P = P[h_1h_2]$ and norm $\| h \|_{L_2(P)} = \sqrt{P[h^2]}$. We adopt a Bayesian framework and equip $\mathcal{P}$ with a $\sigma$-field $\mathcal{B}$ such that $(\mathcal{P}, \mathcal{B})$ is a standard Borel space. We further specify a prior probability distribution as $\Pi$ on $\mathcal{P}$ and let $\Pi(\cdot \mid Z_1, \ldots, Z_n)$ denote a version of the posterior distribution given $(Z_1, \ldots, Z_n)$. 


We adopt the potential outcomes framework for causal inference \citep{rubin1974causal}, letting $Y^{a}$ denote the outcome that would have occurred had the treatment $A = a$ been received. This represents a counterfactual outcome under a deterministic intervention, where treatment is assigned to level $a$ with probability one. The canonical example of an estimand based on deterministic interventions is the average treatment effect, which compares the average difference in potential outcomes when all units are assigned to treatment versus control.

\subsection{Stochastic Intervention Effects: Definitions and Identifiability} \label{subsec:stoch-int-defintion}

While deterministic interventions are conceptually appealing, they can be unrealistic or ill-defined in many applied settings. In practice, treatments are often not assigned deterministically, but instead arise through complex mechanisms involving eligibility criteria, institutional discretion, risk stratification, or resource constraints. Moreover, deterministic interventions may require strong positivity assumptions, particularly when some individuals have near-zero or near-one probability of treatment. These limitations motivate the study of stochastic interventions, which define counterfactual treatment regimes through modified treatment distributions rather than fixed treatment assignments. Rather than considering counterfactuals that assign treatment uniformly for all individuals, stochastic interventions alter the probability of treatment in ways that remain anchored to the observed data-generating process, often yielding more realistic and statistically stable causal estimands.

More generally, a \textit{stochastic intervention} replaces the observed treatment assignment mechanism with an alternative (possibly conditional) treatment distribution, which we denote as $Q(a \mid x)$. Under such an intervention, treatment is no longer deterministically assigned, but instead sampled according to Q, yielding the counterfactual \textit{random} outcome, $Y^Q$. We define the corresponding population mean outcome under intervention $Q$ as 
\begin{equation} \label{eq:stoch-int-def}
    \psi(P_0; Q) \equiv E[Y^Q] = \int_{\mathcal{X}} \int_{\mathcal{A}} E[Y^a \mid X = x] dQ(a \mid x) dP_0(x).
\end{equation}
Deterministic interventions arise as a special case of \eqref{eq:stoch-int-def} when $Q(a \mid x)$ assigns a fixed treatment with probability one. Consequently, $\psi(P_0; Q)$ can be viewed as an extension of the commonly used average potential outcome, $E[Y^a]$, to a setting that respects possible stochasticity in the treatment assignment each unit receives. 

In practice, we likely wish to estimate $\psi(P_0; Q)$ across a sequence or family of distributions, $\{Q_{\delta}\}_{\delta \in D}$, which we index by $\delta$. We separate $Q$ into two broad classes of interventions. First, interventions may be defined independently of the observed-data distribution, $P_0$, which we refer to as \textit{$\pi$-fixed interventions}. For example, if there exists \textit{a priori} knowledge of the impact of a policy or regulatory action on the likelihood an individual receives treatment, $Q$ may be defined accordingly. Alternatively, $Q$ may be defined as a function of the observed-data distribution, $P_0$. For example, the intervention might be a transformation of the propensity score, $\pi(x) = P_0(A = 1 \mid X = x)$. We refer to these as \textit{$\pi$-dependent interventions}, and present two illustrative examples.

\noindent\textit{Example One} (Incremental propensity score intervention). The IPSI \citep{Kennedy2018ipsi} rescales the odds of treatment by a factor $\delta$, such that
\begin{equation}\label{eq:IPSI_density}
dQ_{\delta}(A=1\mid X=x)=\frac{\delta\pi(x)}{\delta\pi(x)+1-\pi(x)},\quad 0<\delta<\infty,
\end{equation}
or equivalently, $\frac{dQ_{\delta}(A=1\mid x)}{1-dQ_{\delta}(A=1\mid x)}=\delta\,\frac{\pi(x)}{1-\pi(x)}$. For binary treatment this is an exponential tilt of the treatment mechanism \citep{diaz2020causal}. Because it multiplies the odds rather than imposing absolute treatment probabilities, it avoids the positivity violations that arise when baseline probabilities are near zero \citep{Kennedy2018ipsi}.

\noindent\textit{Example Two} (Power tilt intervention). The PTI instead raises the odds of treatment to a power $\delta$, such that
\begin{equation}\label{eq:PT_density}
dQ_{\delta}(A=1\mid X=x)=\frac{\pi(x)^\delta}{\pi(x)^\delta+\{1-\pi(x)\}^\delta},\quad 0<\delta<\infty,
\end{equation}
or equivalently, $\frac{dQ_{\delta}(A=1\mid x)}{1-dQ_{\delta}(A=1\mid x)}=\big(\frac{\pi(x)}{1-\pi(x)}\big)^\delta$. Note that $\delta>1$ concentrates treatment among high-propensity units, while $\delta<1$ shrinks the propensity toward $1/2$, yielding a more uniform assignment. Whereas the IPSI shifts the log-odds additively, the power tilt rescales it which, to our knowledge, has not previously been used as a causal intervention. 
A visual comparison of these two $\pi$-dependent interventions is shown in Figure~\ref{fig:interventions-plot}.

\begin{figure}
  \centering
  \includegraphics[width=0.75\textwidth]{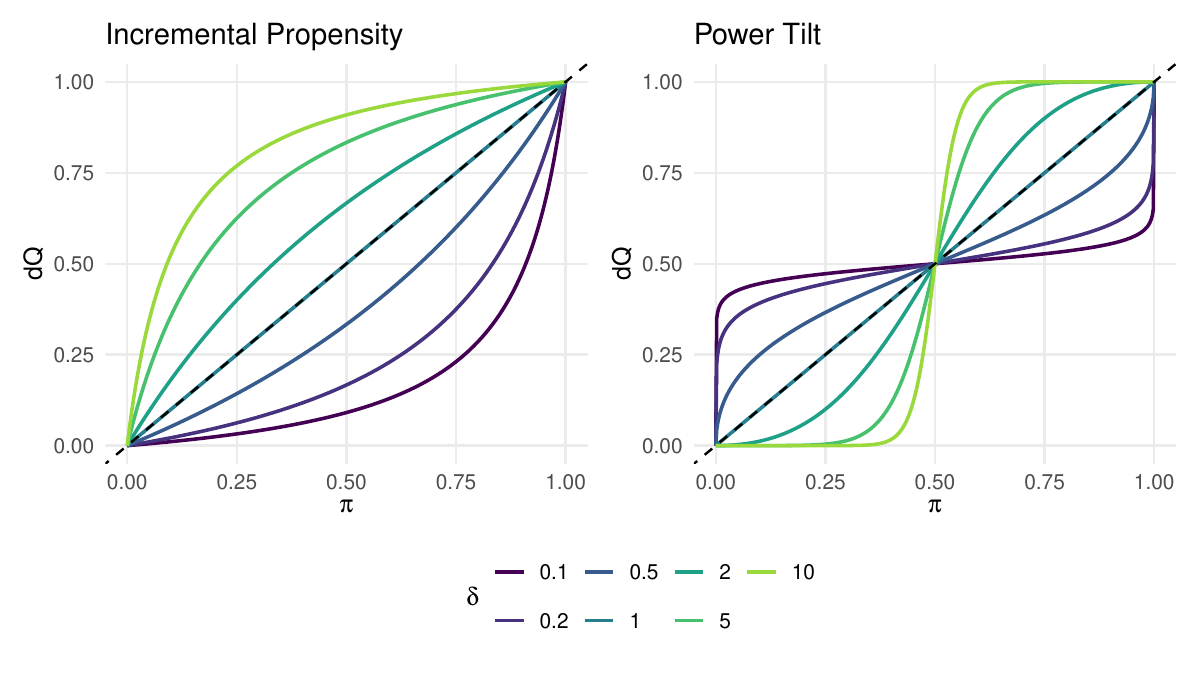}
  \caption{Changes to $dQ_{\delta}(A = 1 \mid \pi)$ as $\pi$ ranges from 0 to 1. On the left, $Q_{\delta}$ is defined as an incremental propensity score intervention, where $\delta$ is an additive shift to the log-odds of treatment; on the right, $Q_{\delta}$ is defined as a power tilt intervention, where $\delta$ rescales the log-odds of treatment.}
  \label{fig:interventions-plot}
\end{figure}

Although the stochastic intervention effect in \eqref{eq:stoch-int-def} is a well-defined \textit{counterfactual} quantity, additional identifying assumptions are required to link it to the observed-data distribution, $P_0$. 
\begin{assumption}  \label{ass:identifying}
\text{ }
\begin{enumerate}
    \item[(a)] (Consistency) $Y = Y^{a}$ if $A = a$.
    \item[(b)] (Exchangeability) $A \perp\!\!\!\perp Y^{a} \mid X$.
    \item[(c)] (Weak Positivity) $dP_0(A \mid X) = 0 \implies dQ (A\mid X) = 0$.
\end{enumerate}
\end{assumption}

Consistency requires that the observed outcome equals the counterfactual outcome corresponding to the observed treatment, ensuring that interventions are well defined and implicitly eliminating interference between units. Exchangeability assumes that treatment assignment is independent of the potential outcomes conditional on the set of observed confounders, ensuring that there is no unmeasured confounding. Finally, 
weak positivity states that conditional on $X$, the intervention $Q$ assigns positive probability only to treatment values that are possible under the observed treatment mechanism $P_0$.
If these assumptions are satisfied, the following identification result holds:  

\begin{definition}  \label{def:identification}
     Under Assumption~\ref{ass:identifying}, the estimand for a stochastic intervention, $\psi(P_0;Q)$, is identified by 
    \begin{equation} \label{eq:stoch-int-P0}
        \psi(P_0;Q) =  \int_{\mathcal{X}}\int_{\mathcal{A}} \mu(x,a) dQ(a \mid x)dP_0(x),
    \end{equation}
    where $\mu(x,a) =  E\left[Y \mid X = x, A = a\right]$.
\end{definition}

A proof of identification is given in \citet{Kennedy2018ipsi}; the result follows from a straightforward application of the g-formula \citep{robins1986gformula}. While identification only requires weak positivity, estimation of the propensity score through a probit-linked model requires strong positivity that vanishes polynomially, which we briefly discuss in Section~\ref{sec:SoftBART}. Having established identification, the remaining challenge is estimation of the intervention functional $\psi(P_0;Q)$. A natural approach is Bayesian inference, which provides a coherent framework for uncertainty quantification by placing a likelihood and priors on the observed data-generating law and propagating posterior uncertainty to the causal estimand. We therefore focus on Bayesian estimation of stochastic intervention effects.

\subsection{Bayesian Estimation of Stochastic Intervention Effects}

At first glance, Bayesian estimation of $\psi(P_0; Q)$ appears straightforward. Under Assumption~\ref{ass:identifying}, the causal estimand is identified as a low-dimensional functional of the observed-data law $P_0$. Consequently, a natural Bayesian procedure is to place a prior on the components of $P_0$, obtain posterior draws from $\Pi(\cdot \mid Z_{1:n})$, and push these draws through the identifying functional, $P \mapsto \psi(P;Q_P)$, where $Q_P$ denote the intervention distribution associated with $P$. 

In practice, posterior draws $\{P^{(1)}, \ldots, P^{(B)} \}$ are often obtained by specifying a likelihood based on the observed-data factorization and independent priors for the corresponding nuisance components of the estimand \citep{linero2024}. For a $\pi$-fixed stochastic intervention (i.e., $Q_P \equiv Q$ for all $P \in \mathcal{P}$), evaluating the identifying functional requires posterior draws of the outcome regression $\mu$ and the covariate distribution $P_X$. For a $\pi$-dependent intervention, it also requires posterior draws of the propensity score $\pi$. Evaluating $\psi(P^{(b)}; Q_{P^{(b)}})$ for $b = 1, \ldots, B$ yields a marginal posterior sample for the stochastic intervention effect; we refer to this distribution as the \textit{plug-in posterior}. This procedure is summarized in Algorithm~\ref{alg:plug-in}, which assumes a flat Dirichlet prior (i.e., Bayesian bootstrap (BB)) for $P_X$ \citep{rubin1981bayesianbootstrap}.

\begin{algorithm} 
\caption{Plug-In Posterior Sampling Scheme}
\label{alg:plug-in}
\begin{algorithmic}[]
\State \textbf{Input:} number of posterior samples $B$, stochastic intervention $Q_P$;
\State Sample $\{\mu^{(1)}, \ldots, \mu^{(B)}\}$ and from the initial posterior $\Pi(\mu \mid Z_{1:n})$;
\State If $Q_P$ is $\pi$-dependent, sample $\{\pi^{(1)}, \ldots, \pi^{(B)}\}$ and compute $\{dQ^{(1)}, \ldots, dQ^{(B)}\}$;
\For{$b \gets 1$ to $B$}
  \State Sample 
  $(W^{(b)}_1, \ldots, W^{(b)}_n) \sim \text{Dir}(n; 1, \ldots, 1)$, the BB posterior for $P_X$;
  \State Compute
  \begin{equation*}
    \psi^{(b)}_{\text{plug-in}} = \sum_{i=1}^n W^{(b)}_i \,  \int_{\mathcal{A}} \mu^{(b)}(X_i, a) \, dQ^{(b)}(a \mid X_i);
  \end{equation*}
\EndFor
\State \textbf{Return:} $\big\{\psi^{(b)}_{\text{plug-in}} : b = 1,\dots,B \}$.
\end{algorithmic}
\end{algorithm}
\FloatBarrier

Bayesian nonparametric methods are especially appealing in this setting because they allow the nuisance components of $P_0$ to be modeled flexibly while propagating uncertainty through the causal functional. Common choices in causal applications include Bayesian Additive Regression Trees (BART) and related variants such as Bayesian causal forests (BCF) and SoftBART, Gaussian process priors, and Dirichlet process priors \citep{hill2011bnp, dorie2019competition, hahn2020bcf, linero2023bnpcause, Yiu2025pcorrect}. These priors are often used to model function-valued nuisance components rather than the causal estimand directly. Thus, even when $\psi(P_0; Q)$ is scalar, or a finite-dimensional curve indexed by $\delta$, its estimation depends on potentially infinite-dimensional quantities such as $\mu$, $P_X$, and, for $\pi$-dependent interventions, $\pi$.

Thus, estimation of $\psi(P_0; Q)$ confronts a central tension in semiparametric Bayesian causal inference: the target parameter is low-dimensional, but the posterior is placed on a much larger object, namely the observed-data law or its nuisance components. Unlike with parametric regular models, where the influence of the prior disappears asymptotically to first-order under the celebrated Bernstein-von Mises (BvM) theorem \citep{vandervaart1998asymptotic}, for infinite dimensional models the impact of the prior can remain asymptotically non-negligible \citep{diaconis1986consistency, rousseau2016frequentist}. Furthermore, even if certain nonparametric models have desirable adaptive posterior contraction rates \citep{ghosal2017fundamentals}, these largely apply to the estimation of the nuisance functions themselves and do not automatically extend to the lower-dimensional plug-in marginal posterior distribution, even to first-order \citep{Yiu2025pcorrect, rousseau2016frequentist}. This is closely related to the concept of regularization induced confounding (RIC) in the causal inference literature \citep{hahn2018regularization}, in which the regularization used to control the complexity of nuisance parameters can indirectly regularize the causal estimand of interest, sometimes with an undesirable effect on the frequentist properties of the associated marginal posterior \citep{linero2024}. Relating this to our stochastic intervention estimand, the choice of prior for $\mu$ and $\pi$ may indirectly affect whether the posterior for $\psi(P; Q)$ is centered correctly, has the correct asymptotic spread, or yields credible intervals with nominal frequentist coverage \citep{castillo2015bvm, rousseau2016frequentist, ray2020semiparametric, Yiu2025pcorrect}.

These considerations motivate Bayesian procedures that explicitly target $\psi(P; Q)$, rather than relying solely on the uncorrected plug-in posterior. One strategy is to tailor the prior specification so that the marginal posterior for a particular functional satisfies a semiparametric BvM theorem, for example through undersmoothing or augmentation in a ``least favorable'' direction \citep{castillo2015bvm, ray2020semiparametric}. In causal inference, related approaches include prior specifications designed to mitigate regularization induced confounding, such as BCF-style decompositions that separate prognostic and treatment effect components \citep{hahn2020bcf, linero2024}. While these approaches have merit, they are often tied to a particular estimand, prior, or model implementation. Instead, we adopt the semiparametric posterior correction framework of \citet{Yiu2025pcorrect}. Starting from the initial posterior $\Pi(\cdot \mid Z_{1:n})$, this approach leaves the prior specification unchanged and instead applies an efficient influence function-based \citep{vandervaart1998asymptotic} stochastic correction to the marginal plug-in posterior. This serves as a Bayesian analog to the classical one-step estimator \citep{pfanzagl1982, newey1998}, as it corrects each plug-in draw in the direction of a first-order semiparametric expansion of $\psi(P, Q)$ while retaining posterior uncertainty in $\mu$, $\pi$, and $P_X$ with desirable frequentist properties.


\subsection{A One-Step Posterior Correction for Stochastic Interventions}

We now present the one-step posterior correction for the stochastic intervention setting. 
We assume that $\psi$ is sufficiently regular to admit an \textit{efficient influence function} (EIF), $\phi_P$, for all $P \in \mathcal{P}$ \citep{vandervaart1998asymptotic}. Informally, $\phi_P$ acts as a mean-zero gradient that describes the local sensitivity of $\psi$ to perturbations of $P$. 
Among all influence functions for $\psi$, the EIF has the smallest variance, attaining the semiparametric efficiency bound \citep{tsiatis2006semiparametric}. Most importantly for our purposes, it yields a first-order von Mises expansion,\footnote{This is a functional analog to the first-order Taylor series expansion of a function.}
\begin{equation} \label{eq:eif-von-mises}
    \psi(P_0;Q) - \psi(P;Q) = P_0[\phi_P] + R_2(P_0, P),
\end{equation}
where $R_2(P_0, P)$ is a higher-order remainder and we have used the fact that $P[\phi_P] = 0$ by construction. Thus, $P_0[\phi_P]$ captures the leading bias of $\psi(P;Q)$. In the non-Bayesian setting, this motivates the \textit{one-step estimator},
\begin{equation} \label{eq:freq-one-step}
    \hat{\psi}_{\text{1-step}} = \psi(\hat{P}) + \mathbb{P}_n[\phi_{\hat{P}}],
\end{equation}
wherein an initial estimate $\hat{P}$ of the data distribution is plugged in to the functional $\psi(\hat{P})$ and then ``corrected'' with an empirical estimate of the first-order term in \eqref{eq:eif-von-mises}. Notably, under sufficient conditions on the form of the estimand and the choice of nonparametric estimators of $\hat{P}$, the one-step estimator can be shown to be asymptotically efficient, with limiting distribution $\mathcal{N}(0, \|\phi_{P_0} \|_{L_2(P_0)}^2 )$. The one-step estimator and related EIF methods are a core component of many modern causal inference methods, as they often facilitate the use of flexible machine learning algorithms to estimate $\hat{P}$ while retaining $\sqrt{n}$-consistent estimation of the causal estimand; see \citet{kennedy2024semiparametric} for a review of this theory.

The \textit{one-step posterior} can be viewed as a Bayesian analog to \eqref{eq:freq-one-step}. Once again, denote the stochastic intervention effect as $\psi(P; Q_P)$, where $Q_P$ may be fixed independently of $P$ (i.e., $\pi$-fixed) or may depend on $P$ ($\pi$-dependent). Similarly, let $\phi_{P}(Z_i) := \phi_P(Z_i; Q_P)$ represent the efficient influence function of $\psi(P; Q_P)$, evaluated at $Z_i$. Given a draw $P^{(b)}$ from the initial posterior and an independent set of Bayesian bootstrap weights, 
\begin{equation*}
    (\widetilde{W}_{1}^{(b)}, \dots, \widetilde{W}_n^{(b)}) \sim \text{Dirichlet}(1, \ldots, 1),
\end{equation*}
the corresponding one-step posterior draw is
\begin{equation}
    \psi_{\text{1-step}}^{(b)} = \psi_{\text{plug-in}}^{(b)} + \sum_{i = 1}^{n} \widetilde{W}_i^{(b)} \phi_{P^{(b)}}(Z_i),
\end{equation}
where $\psi_{\text{plug-in}}^{(b)}$ is the plug-in estimate from the $b$'th posterior draw (Algorithm~\ref{alg:plug-in}). In other words, each plug-in posterior draw is augmented by a \textit{randomized} efficient influence function-based correction. Importantly, this randomization is accomplished via a draw from a Bayesian bootstrap posterior, $\Pi_{BB}(\cdot \mid Z_{1:n})$, that is conditionally independent from $\Pi(\cdot \mid Z_{1:n})$. The resulting collection, $\{\psi_{\text{1-step}}^{(1)}, \ldots, \psi_{\text{1-step}}^{(B)}\}$, forms a sample from the one-step posterior for $\psi(P; Q_P)$; the procedure is summarized in Algorithm~\ref{alg:onestep}.

\begin{algorithm}[th]
\caption{One-Step Posterior Sampling Scheme}
\label{alg:onestep}
\begin{algorithmic}
\State \textbf{Input:} number of posterior samples $B$, stochastic intervention $Q_P$;
\State Sample $\{P^{(1)}, \ldots, P^{(B)}\}$ from the initial posterior $\Pi(\cdot \mid Z_{1:n})$;
\For{$b \gets 1$ to $B$}
  \State Sample $(\widetilde{W}^{(b)}_1, \ldots, \widetilde{W}^{(b)}_n) \sim \text{Dir}(n; 1, \ldots, 1)$ independently of $P^{(b)}$;
  \State Compute
     \begin{equation*}
     \psi^{(b)}_{\text{1-step}} =  \psi_{\text{plug-in}}^{(b)} + \sum_{i=1}^n \widetilde{W}^{(b)}_i \, \phi_{P^{(b)}}(Z_i),
    \end{equation*}
  \State where $\psi_{\text{plug-in}}^{(b)}$ is computed as in Algorithm~\ref{alg:plug-in};
  \State $\phi_{P^{(b)}}(Z_i)$ is the efficient influence function evaluated at $Z_i$;
\EndFor
\State \textbf{Return:} $\big\{\psi^{(b)}_{\text{1-step}} : b = 1,\dots,B \big\}$.
\end{algorithmic}
\end{algorithm}

The advantage of the one-step posterior correction is two-fold. First, it operates as a post-processing step on the original posterior distribution $\Pi(\cdot \mid Z_{1:n})$, meaning that the initial prior does not need to be tailored to the specific estimand of interest. In fact, the posterior samples from $\Pi(\cdot \mid Z_{1:n})$ can be reused to target different estimands, such as those corresponding to a sequence of stochastic interventions, $\{Q_{\delta}\}_{\delta \in D}$. Second, \citet{Yiu2025pcorrect} outline general conditions under which the one-step posterior satisfies the semiparametric Bernstein-von Mises (BvM) theorem, which we define as follows.
\begin{definition} \label{def:BVM-def}
    Let $\mathcal{L}_{\Pi \times \Pi_{BB}}(\sqrt{n}(\psi_{\text{1-step}} - \hat{\psi}_n) \mid Z_{1:n})$ denote the posterior law of $\sqrt{n}(\psi_{\text{1-step}} - \hat{\psi}_n)$, where $\hat{\psi}_n$ is the asymptotically efficient sequence $\hat{\psi}_n = \psi(P_0) + \mathbb{P}_n[\phi_{P_0}]$.  The one-step posterior satisfies the \textit{semiparametric Bernstein-von Mises} theorem if
    \begin{equation*}
         d_{BL} \left(\mathcal{L}_{\Pi \times \Pi_{\text{BB}}}
         (\sqrt{n}(\psi_{\text{1-step}} - \hat{\psi}_n) \mid Z_{1:n} ),\;
         \mathcal{N}(0, \|\phi_{P_0} \|_{L_2(P_0)}^2 )
    \right)
    \xrightarrow{P_0} 0,
    \end{equation*}
    where $d_{BL}$ denotes the bounded Lipschitz distance and $\Pi \times \Pi_{BB}$ denotes the product posterior of $\Pi(\cdot \mid Z_{1:n})$ and the Bayesian bootstrap posterior $\Pi_{BB}(\cdot \mid Z_{1:n})$.
\end{definition}
The semiparametric BvM is analogous to the parametric one in that the posterior law of $\sqrt{n}(\psi_{\text{1-step}} - \hat{\psi}_n)$ converges weakly to a Gaussian distribution, except that, in place of the inverse Fisher information, the limiting variance is equal to the squared $L_2(P_0)$ norm of the efficient influence function of $\psi$ evaluated at the true distribution $P_0$. Consequently, the one-step corrected credible intervals of $\psi$ are asymptotically equivalent to efficient frequentist confidence intervals \citep{rousseau2016frequentist}, so long as the following assumption is satisfied \citep{Yiu2025pcorrect}.

\begin{assumption} 
\label{ass:yiu-main}
There exists a sequence of measurable subsets $(\mathcal{S}_n)_n \subset \mathcal{P}$ satisfying
\begin{equation*}
\Pi(P \in \mathcal{S}_n \mid Z_{1:n}) \xrightarrow{P_0} 1,
\end{equation*}
with the following properties:
\begin{enumerate}
    \item[(a)] \textit{No second-order bias}.
    \begin{equation*}
\sup_{P \in \mathcal{S}_n}
\Big|
\sqrt{n} \, r_2(P_0, P)
\Big|
=
\sup_{P \in \mathcal{S}_n}
\Big|
\sqrt{n} \, \big\{ \psi(P_0;Q) - \psi(P;Q) - P_0[\phi_P] \big\}
\Big|
\to 0.
\end{equation*}
    \item[(b)] \textit{$L_2$-convergence}.
    \begin{equation*}
  \sup_{P \in \mathcal{S}_n}
\| \phi_P - \phi_{P_0} \|_{L_2(P_0)} \to 0.
\end{equation*}

    \item[(c)] \textit{Donsker class or empirical process conditions}.
    
    Either the sets $\{\phi_P : P \in \mathcal{S}_n\}$ are eventually contained in a fixed
$P_0$-Donsker class \citep{vandervaartwellner2023weak},  
or both of the following hold:
    \begin{enumerate}
        \item[(i)] (Convergence of $\phi_P$ under the empirical process).
        \begin{equation*}
\sup_{P \in \mathcal{S}_n}
\big|
\mathbb{G}_n[\phi_P - \phi_{P_0}]
\big|
\xrightarrow{P_0} 0.
\end{equation*}
This statement is interpreted in terms of outer probability \citep{vandervaartwellner2023weak} if the supremum is not measurable.
        \item[(ii)] (Bounding of envelope functions).
        The sets $\{\phi_P: P \in \mathcal{S}_n\}$ have envelope functions $G_n$ satisfying
\begin{equation*}
\lim_{C \to \infty} \limsup_{n \to \infty}
P_0\!\left(G_n^2 \mathbf{1}\{G_n^2 > C\}\right) = 0,
\quad
P_0 G_n^4 = o(n).
\end{equation*}
    \end{enumerate}
\end{enumerate}
\end{assumption}

Whether Assumption~\ref{ass:yiu-main} is satisfied for a general stochastic intervention effect requires careful consideration and is the focus of Section~\ref{sec:Correction}. In practice, even when the assumption is satisfied, the implementation of the one-step posterior algorithm requires knowledge of the EIF of $\psi$. When $Q$ is a $\pi$-fixed intervention (i.e., $Q_P \equiv Q$ for all $P \in \mathcal{P}$), \citet[Lemma 2]{Kennedy2018ipsi} gives the EIF for $\psi(P_0;Q)$ as follows.
\begin{definition} \label{def:static_eif}
     The efficient influence function (EIF) for a $\pi$-fixed stochastic intervention 
     is
     \begin{equation} \label{eq:eif-fixed}
         \phi_{\text{fix}, P_0} = \int_{\mathcal{A}} \mu(X,a)dQ(a \mid X) +   \frac{dQ(A\mid X)}{dP_0(A\mid X)} \left[  Y - \mu(X,A) \right] - \psi(P_0; Q).
     \end{equation}
\end{definition}
When $Q$ is a $\pi$-dependent stochastic intervention, such as the IPSI and PTI interventions from Section~\ref{subsec:stoch-int-defintion}, the intervention itself depends on features of $P$ and may therefore vary according to local perturbations of $P$. Consequently, in this setting the causal estimand is a pathwise-dependent functional whose variation reflects both changes in the outcome regression and changes in the intervention distribution induced by changes in $P$. This dependence alters the tangent space of the model and requires an adjustment to the efficient influence function \citep{Kennedy2018ipsi}.
\begin{definition}\label{def:dynamic_eif}
The efficient influence function (EIF) for a $\pi$-dependent stochastic intervention is
\begin{equation} \label{eq:eif-dependent}
    \phi_{\text{dep}, P_0} = \phi^{*}_{\text{fix}, P_0} + 
    \int_{\mathcal{A}} \chi(X,A;a)\mu(X,a)d\nu(a) - \psi(P_0; Q)
 \end{equation}
 where $\phi^{*}_{\text{fix}, P_0}$ is the uncentered $\pi$-fixed efficient influence function,\footnote{i.e., it does not contain the centering term $-\psi(P_0;Q)$ in \eqref{eq:eif-fixed}.} $\nu$ is a dominating measure for the distribution of $A$, and $\{ \mathbf{1}(X = x)/ dP(x) \} \chi(X, A; a)$ is the efficient influence function for $Q(a \mid x)$.
\end{definition}
Thus, the $\pi$-dependent efficient influence function decomposes into two additive components: the (uncentered) $\pi$-fixed EIF and an additional term that accounts for the sensitivity of $Q_P$ to perturbations of $P$. 
This second term requires an EIF for $Q(a \mid x)$, which we write generally as 
\begin{equation*}
    \frac{\mathbf{1}(X = x)}{dP(x)} \chi(X, A; a).
\end{equation*} 
For binary treatments, a sufficient condition for deriving $\chi$ is that $q_{\delta}(a, \pi_P(x)):= dQ_{\delta,P}(a \mid x)$ is a twice-differentiable function of the propensity score. It then follows from the chain-rule property of influence functions \citep{kennedy2024semiparametric} that 
\begin{equation}
\label{eq:chi-def}
    \chi(X, A; a) = \bigg\{\frac{\partial}{\partial \pi} q(a, \pi(X))\bigg\}\{A - \pi(X)\}.
\end{equation}
This is particularly relevant when considering IPSI and PTI causal estimands.

\section{Semiparametric Bernstein-von Mises Theory for Stochastic Intervention Effects} \label{sec:Correction}

As outlined above, \citet{Yiu2025pcorrect} show that,
under Assumption~\ref{ass:yiu-main}, the one-step posterior satisfies the semiparametric BvM theorem. Although general, these conditions are formulated in terms of the estimand-specific second-order remainder and posterior-indexed class of efficient influence functions, and can therefore be difficult to verify directly. In particular, the no second-order bias and $L_2$-convergence conditions depend on the stochastic intervention under study. Moreover, when the nuisance functions $\mu$ and $\pi$ are assigned flexible nonparametric priors, the resulting EIF class may not be contained in a fixed $P_0$-Donsker class. In that case, convergence of the EIF under the empirical process, together with suitable envelope conditions, must be established directly. Consequently, it may be unclear whether the general BvM result applies for a particular stochastic intervention and choice of nuisance function priors.

This motivates sufficient conditions for the BvM stated directly in terms of the posterior behavior of the nuisance functions and the regularity of the stochastic intervention. Our result relies on two main properties. First, under suitable regularity conditions, the second-order remainder can be bounded by products of the estimation errors in $\pi$ and $\mu$. Second, smoothness of the intervention map and bounded intervention weights imply that the EIF is Lipschitz in these nuisance functions. This Lipschitz property transfers posterior contraction and local entropy control from the nuisance function classes to the EIF class, yielding the $L_2(P_0)$-convergence and empirical process control required by Assumption~\ref{ass:yiu-main}; the accompanying boundedness conditions provide the required envelope control. 
Assumption~\ref{ass:main-bvm} formalizes these requirements and provides a tractable route for verifying the semiparametric BVM for the one-step posterior under a given stochastic intervention and nuisance function prior.

\begin{assumption} \label{ass:main-bvm}
Fix $\delta \in D$. For each $P \in \mathcal{P}$, let
$$\pi_P(x) := P(A = 1 \mid X = x), \quad \mu_{P,a}(x) := E_P[Y \mid X = x, A = a], \; a \in \{0,1\}.$$
There exists a sequence of measurable subsets $(S_n)_n \subset \mathcal{P}$ satisfying
\begin{equation*}
    \Pi(S_n \mid Z_{1:n}) \xrightarrow{P_0} 1,
\end{equation*}
with the following properties:
\begin{enumerate}
    \item[(a)] \textit{Nuisance contraction and second-order rate}. 
    
    There exist sequences $\rho_{n}, \varepsilon_{0,n}, \varepsilon_{1,n} \xrightarrow{} 0$ such that, for all $a \in \{0,1\}$,
    \begin{equation*}
        \sup_{P \in S_n}\|\pi_P - \pi_{P_0}\|_{L_2(P_0)} \le \rho_{n} , 
        \quad
        \sup_{P \in S_n} 
        \|\mu_{P,a}- \mu_{P_0,a}\|_{L_2(P_0)} \le \varepsilon_{a,n}. 
    \end{equation*}
    In addition,
    \begin{equation*}
        \sqrt{n} \rho_n\left(\rho_n + \varepsilon_{0,n} + \varepsilon_{1,n}\right)
        \xrightarrow{} 0.
    \end{equation*}

    \item[(b)] \textit{Uniform boundedness and intervention stability.}

    For some $C < \infty$ and all sufficiently large $n$,
    \begin{enumerate}
        \item[(i)] (Bounded outcomes). 
        \begin{equation*}
            P_0(|Y| \leq C) = 1, \quad \sup_{P \in S_n} \max_{a \in \{0,1\}} \lVert \mu_{P,a} \rVert_{\infty} \leq C.
        \end{equation*}
        \item[(ii)] (Smooth intervention map). Let $q_{\delta}(a, \pi_P(x)):= dQ_{\delta,P}(a \mid x) $, such that for each $a \in 
        \{0,1\}$ the intervention map $q_{\delta}$ is twice continuously differentiable in $\pi$ with
        \begin{equation*}
            \max_{a \in \{0,1\}} \Bigg\{ \Bigg\lVert \frac{\partial q_{\delta}(a, \pi)}{\partial \pi} \Bigg\rVert_{\infty} + \Bigg\lVert \frac{\partial^2 q_{\delta}(a, \pi)}{\partial \pi^2} \Bigg\rVert_{\infty} \Bigg\} \leq C.
        \end{equation*}
        \item[(iii)] (Bounded intervention weights).
        \begin{equation*}
        \sup_{P \in S_{n}}
        \frac{dQ_P(A\mid X)}{dP(A \mid X)} \le C, \quad P_0\text{-a.s.}
        \end{equation*}
    \end{enumerate}

    \item[(c)] \textit{Local complexity of nuisance classes}.

    Define the centered nuisance function classes,
    \begin{equation*}
    \Delta_n^\pi = \{ \pi_P - \pi_{P_0} : P \in S_n \}, \qquad 
    \Delta_{a,n}^\mu = \{ \mu_{P,a} - \mu_{P_0,a} : P \in S_n\},
    \end{equation*}
    and let $\gamma_n \asymp \rho_n + \varepsilon_{0,n} + \varepsilon_{1,n}$. Then
    \begin{equation*}
    	J_{[\,]}\!\left(\gamma_n,\Delta_n^\pi,L_2(P_0) \right) + \sum_{a=0}^1 J_{[\,]}\!\left(\gamma_n,\Delta_{a,n}^\mu,L_2(P_0) \right) = o\!\left(1 \wedge \gamma_n  n^{1/4} \right),
    \end{equation*}
    where $J_{[\,]}(\gamma_n, \mathcal{F}_{n}, L_2(P_0))$ is the $L_2(P_0)$ bracketing integral \citep{vandervaartwellner2023weak}.
\end{enumerate}
\end{assumption}

Assumption~\ref{ass:main-bvm} provides primitive conditions under which the high-level requirements of Assumption~\ref{ass:yiu-main} can be verified. These conditions do not correspond one-to-one with the clauses of Assumption~\ref{ass:yiu-main}. Rather, the nuisance contraction rates in part (a), the boundedness and intervention stability conditions in part (b), and the local complexity condition in part (c) work jointly: parts (a) and (b) control the second-order remainder and the $L_2(P_0)$ behavior of the efficient influence function, while parts (b) and (c) together provide the required empirical process and envelope control. We formalize this implication in the following theorem.

\begin{theorem}\label{thm:bvm_single_time}
    Fix $\delta \in D$. Under Assumption~\ref{ass:main-bvm}, 
    the one-step posterior of a stochastic intervention effect satisfies Definition~\ref{def:BVM-def}, the semiparametric Bernstein-von Mises theorem. 
\end{theorem}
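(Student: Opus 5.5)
The plan is to verify Assumption~\ref{ass:yiu-main} with $\mathcal{S}_n = S_n$ and then invoke the general result of \citet{Yiu2025pcorrect}, which gives Definition~\ref{def:BVM-def}. The posterior mass condition is shared by the two assumptions. It remains to establish, in order, (a) the remainder bound, (b) $L_2$ convergence of the EIF, and (c)(i)--(ii) the empirical process and envelope conditions. Throughout we write $\phi_P = \phi_{\text{dep},P}$ and handle the $\pi$-fixed case as the special case $\partial q_\delta/\partial\pi \equiv 0$, so $\chi \equiv 0$.

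\emph{Step 1: second-order remainder.} We compute $r_2(P_0,P) = \psi(P_0;Q_{P_0}) - \psi(P;Q_P) - P_0[\phi_P]$ explicitly. First integrate out $Y$ given $(X,A)$ under $P_0$, then $A$ given $X$, using the binary structure $dP(A\mid X) = \pi_P^A(1-\pi_P)^{1-A}$. This produces a sum of three kinds of terms. The first kind is $\sum_a P_0\big[\{q_\delta(a,\pi_P)/dP(a\mid X)\}\{dP(a\mid X) - dP_0(a\mid X)\}(\mu_{P_0,a}-\mu_{P,a})\big]$. The second is the Taylor remainder $P_0\big[\sum_a \mu_{P_0,a}\{q_\delta(a,\pi_{P_0}) - q_\delta(a,\pi_P) - \partial_\pi q_\delta(a,\pi_P)(\pi_{P_0}-\pi_P)\}\big]$. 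The third is a cross term $P_0\big[\sum_a \partial_\pi q_\delta(a,\pi_P)(\pi_{P_0}-\pi_P)(\mu_{P,a}-\mu_{P_0,a})\big]$. The $P_X$ part cancels exactly, because $P_X$ enters $\psi$ linearly.

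We then bound each kind. By (b)(iii) and Cauchy--Schwarz, the first is at most $C\rho_n\sum_a\varepsilon_{a,n}$. By (b)(i)--(ii) and the mean value theorem, the second is at most $C^2\|\pi_P-\pi_{P_0}\|^2_{L_2(P_0)} \le C^2\rho_n^2$. The third is at most $C\rho_n(\varepsilon_{0,n}+\varepsilon_{1,n})$. Multiplying by $\sqrt n$ and using (a) gives Assumption~\ref{ass:yiu-main}(a) uniformly over $S_n$.

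\emph{Step 2: Lipschitz property of the EIF.} We show that for $P\in S_n$,
\[
|\phi_P-\phi_{P_0}| \le L\Big(|\pi_P-\pi_{P_0}|(X) + \sum_a |\mu_{P,a}-\mu_{P_0,a}|(X)\Big) + |\psi(P;Q_P)-\psi(P_0;Q_{P_0})|,
\]
with $L$ depending only on $C$. Each piece of $\phi_P$ is a product of bounded factors. These are $q_\delta$, $\partial_\pi q_\delta$, $\mu_{P,a}$, $Y$, and the weight $q_\delta(A,\pi_P)/dP(A\mid X)$. The delicate factor is the weight: its difference is not obviously Lipschitz in $\pi_P$ when $\pi_P$ is near $0$ or $1$.

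To handle it we write the weight for $a=1$ as $q_\delta(1,\pi)/\pi$, and similarly for $a=0$. We then use the smoothness in (b)(ii) together with the uniform bound (b)(iii), noting that the bound holds at $P_0$ as well, since $P_0$ lies in the closure of $S_n$ or can be added to $S_n$. If a direct Lipschitz bound fails, the fallback is to control the weight difference by $C^2|dP(A\mid X)-dP_0(A\mid X)|\cdot$ const. This follows from $w_P - w_{P_0} = w_Pw_{P_0}(\cdot)$-type identities for the IPSI and PTI, and more generally from a mean value bound on $q/\pi$ using $\partial_\pi q$ bounded.

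The difference in $\psi$ is $O(\rho_n+\varepsilon_{0,n}+\varepsilon_{1,n})$ by the same argument. This yields Assumption~\ref{ass:yiu-main}(b) with $\sup_{S_n}\|\phi_P-\phi_{P_0}\|_{L_2(P_0)}\lesssim\gamma_n\to0$. It also gives the envelope condition (c)(ii) immediately, since $|\phi_P|\le G$ for a constant $G$.

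\emph{Step 3: empirical process condition.} This is the main obstacle. By Step 2, brackets for $\Delta_n^\pi$ and $\Delta_{a,n}^\mu$ of $L_2(P_0)$-size $\epsilon$ combine into brackets for $\{\phi_P-\phi_{P_0}:P\in S_n\}$ of size $\lesssim\epsilon$. The constant term $\psi(P)-\psi(P_0)$ ranges over an interval of length $O(\gamma_n)$ and adds only a $\log(\gamma_n/\epsilon)$ term. Hence $J_{[\,]}(\gamma_n,\{\phi_P-\phi_{P_0}\})\lesssim$ the sum of the nuisance bracketing integrals plus $O(\gamma_n)$.

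We then apply the maximal inequality with bracketing \citep[Lemma 3.4.2]{vandervaartwellner2023weak} for a class with $L_2$ radius $\lesssim\gamma_n$ and uniform bound $M$:
\[
E^*\sup|\mathbb G_n f|\lesssim J_{[\,]}(\gamma_n)\Big(1+\frac{J_{[\,]}(\gamma_n)M}{\gamma_n^2\sqrt n}\Big).
\]
The condition $J_{[\,]} = o(1)$ kills the leading factor. The condition $J_{[\,]} = o(\gamma_n n^{1/4})$ gives $J_{[\,]}^2/(\gamma_n^2\sqrt n) = o(1)$, so the second term is also $o(1)$. Markov's inequality then gives (c)(i) in outer probability.

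The care needed is in making the bracket transfer rigorous for products and for the nonlinear maps $q_\delta(a,\cdot)$ and the weight. Here we use monotone bracket endpoints truncated to $[0,1]$ and $[-C,C]$, so that the endpoints stay in the regions where the Lipschitz constants of Step 2 apply.
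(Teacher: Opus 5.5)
Your architecture matches the paper's. You verify Assumption~\ref{ass:yiu-main} on $S_n$ through an explicit remainder bound, a pointwise Lipschitz bound on the EIF, and bracket transfer plus the bracketing maximal inequality. Step~1 is correct: your three-term decomposition is a clean single-time-point version of the paper's appeal to Lemmas~5--6 of \citet{Kennedy2018ipsi} together with the Taylor bound of Lemma~\ref{lem:smoothness}.

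The gap is in Step~2. There you keep the centered EIF and assert $\sup_{P\in S_n}|\psi(P;Q_P)-\psi(P_0;Q_{P_0})|=O(\rho_n+\varepsilon_{0,n}+\varepsilon_{1,n})$ ``by the same argument.'' This does not follow from Assumption~\ref{ass:main-bvm}. With $h_P=\sum_a\mu_{P,a}\,q_\delta(a,\pi_P)$ one has $\psi(P;Q_P)=\int h_P\,dP_X$, so the difference contains $\int h_P\,d(P_X-P_{0,X})$. Assumption~\ref{ass:main-bvm} restricts only $\pi_P$ and $\mu_{P,a}$ on $S_n$ and says nothing about the covariate marginal. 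The paper's $S_n$ is built purely from nuisance sieves, so this supremum need not even tend to zero. Your own Step~1 shows that $r_2$ is free of $P_X$, so it gives no handle on $\psi(P)$. Hence your check of Assumption~\ref{ass:yiu-main}(b) is unsupported, and so is the $L_2$ radius $\gamma_n$ you feed into the maximal inequality in Step~3.

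The missing idea is to work with the uncentered EIF $\phi^*_P=\phi_P+\psi(P;Q_P)$, which depends only on $(\pi_P,\mu_{P,0},\mu_{P,1})$. This is legitimate: since $\sum_i\widetilde W_i=1$, you get $\psi^{(b)}_{\text{1-step}}=\sum_i\widetilde W^{(b)}_i\phi^*_{P^{(b)}}(Z_i)$, so the one-step posterior never sees $\psi(P)$. Accordingly \citet{Yiu2025pcorrect} allow the $L_2$ and empirical-process conditions to be stated for $\phi^*_P$, and $\mathbb{G}_n$ annihilates constants anyway. That is how the paper avoids any condition on $P_X$.

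A second step is asserted rather than proved: the Lipschitz property of the weight. Your fallbacks do not cover the stated class.
\begin{itemize}
\item A bound on $\partial_\pi q_\delta$ alone gives only $|\partial_\pi\{q_\delta(1,\pi)/\pi\}|\le 2C/\pi$.
\item Your $w_Pw_{P_0}$ identity, with $w_Pw_{P_0}\le C^2$, needs $\pi/q_\delta(1,\pi)$ to be Lipschitz. This fails for the power tilt with $\delta\ge2$, a case the theorem covers, where $\pi/q_\delta(1,\pi)=\pi+\pi^{1-\delta}(1-\pi)^{\delta}$.
\end{itemize}
What works, as in Lemma~\ref{lem:weight_lip}, is the second-derivative bound. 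If $\pi_P$ can approach $0$, bounded weights force $q_\delta(1,0)=0$; if it stays away from $0$, the $1/\pi$ bound already suffices. In the first case, $\partial_\pi\{q_\delta(1,\pi)/\pi\}=\pi^{-2}\int_0^\pi\{\partial_\pi q_\delta(1,\pi)-\partial_\pi q_\delta(1,s)\}\,ds$, which is bounded by $\tfrac12\|\partial^2_\pi q_\delta\|_\infty$. The case $a=0$ near $\pi=1$ is symmetric.

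Finally, bracket transfer requires the Lipschitz bound pairwise for $P,\tilde P\in S_n$, not only relative to $P_0$. Given that, the paper's device works directly: choose a representative $\tilde P$ in each nonempty cell of the product of nuisance brackets and use $[\phi^*_{\tilde P}-\Delta,\phi^*_{\tilde P}+\Delta]$. Monotone-endpoint brackets do not work, since $\phi^*_P$ is not monotone in the nuisances.
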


The full proof is given in the Supplementary Material, where we establish a longitudinal version of the result and recover Theorem~\ref{thm:bvm_single_time} as the single-time-point special case. The proof proceeds by verifying that the three conditions of Assumption~\ref{ass:yiu-main} hold under Assumption~\ref{ass:main-bvm}.  


The verification of no second-order bias is nuanced and depends on the form of the intervention, and this poses the greatest challenge towards generalizing our theorem further. Utilizing the remainder expansions of \citet[Lemmas 5 and 6]{Kennedy2018ipsi} along with a second-order Taylor expansion of the intervention map, we obtain
\begin{equation}\label{eq:remainder-outline}
    \sup_{P \in S_n}
    \left|R_{2}(P_0,P)\right| \lesssim \rho_n \left(\rho_n + \varepsilon_{0,n} + \varepsilon_{1,n} \right).
\end{equation}
Assumption~\ref{ass:main-bvm}(a) therefore implies the no second-order bias condition. For a $\pi$-fixed intervention, the intervention distribution does not vary with the propensity score and the $\rho_n^2$ component in \eqref{eq:remainder-outline} is absent; thus, the stated rate condition is sufficient but not sharp for that
subclass.


Second, Assumption~\ref{ass:main-bvm}(b) implies that the uncentered EIF for a stochastic intervention is pointwise Lipschitz with respect to the nuisance functions for $P, \tilde P \in S_n$:  
\begin{equation} \label{eq:eif-lipschitz-outline}
| \phi^*_P - \phi^*_{\tilde P} | \lesssim
    \left( \sum_{a=0}^1 |\mu_{P,a} - \mu_{\tilde P,a}| +  
    |\pi_P - \pi_{\tilde P}| \right).
\end{equation}
Because \citet{Yiu2025pcorrect} allow the centered EIF to be replaced by its uncentered representation in their $L_2$ and empirical-process conditions, it follows that \eqref{eq:eif-lipschitz-outline}, the contraction bounds in Assumption~\ref{ass:main-bvm}(a) yield
\begin{equation*}
    \sup_{P \in S_n} \| \phi_P^{*} - \phi_{P_0}^{*} \|_{L_2(P_0)} \longrightarrow 0.
\end{equation*}

Finally, the same Lipschitz property transfers brackets for the nuisance
classes to brackets for
\begin{equation*}
    \mathcal{F}_n^* := \{\phi_P^* - \phi_{P_0}^* : P \in S_n\}.
\end{equation*}
In particular
\begin{equation*}
    J_{[\,]}\!\left(\gamma_n, \mathcal{F}_n^*, L_2(P_0)\right) \lesssim\; J_{[\,]}\!\left(\gamma_n, \Delta_n^\pi, L_2(P_0) \right) + \sum_{a=0}^1 J_{[\,]}\!\left( \gamma_n, \Delta_{a,n}^\mu, L_2(P_0)\right).
\end{equation*}
Assumption~\ref{ass:main-bvm}(b) also supplies a constant envelope for $\mathcal{F}_n^*$. The maximal inequality of \citet[Theorem~2.14.17]{vandervaartwellner2023weak}, combined with Assumption~\ref{ass:main-bvm}(c), then yields
\begin{equation*}
    \sup_{P \in S_n} \left| \mathbb{G}_n(\phi_P^* - \phi_{P_0}^*) \right| \xrightarrow{P_0} 0.
\end{equation*}
All conditions of Assumption~\ref{ass:yiu-main} follow, and the result is an application of the general theorem of \citet{Yiu2025pcorrect}.


Theorem~\ref{thm:bvm_single_time} is useful because it recasts the high-level BvM requirements in terms of the posterior behavior of the nuisance functions and the choice of stochastic intervention. In particular, the intervention-specific requirements of Assumption~\ref{ass:main-bvm} are concentrated in parts (b)(ii) and (b)(iii): smooth dependence of the intervention on the propensity score and bounded intervention likelihood ratios. For a $\pi$-fixed intervention, $Q_{P} \equiv Q$, so the first and second derivatives with respect to $\pi$ vanish and part (b)(ii) holds automatically. Part (b)(iii) imposes an intervention-specific overlap condition, in which case strong positivity is generally required to satisfy this condition for an arbitrary $\pi$-fixed intervention. 

For a $\pi$-dependent intervention, both smoothness and boundedness must instead be established from the map $\pi \mapsto q(\cdot, \pi)$. The next result identifies a broad intervention class for which both properties follow directly from construction.


\begin{proposition}\label{prop:logodds_family}
Let $s:[0,1]\to\mathbb{R}$ be twice continuously differentiable, and write $\pi = \pi_P(x)$. Define the log-odds shift intervention by
\begin{equation*}
dQ_{s}(A = 1 \mid x)=\frac{\pi\,e^{s(\pi)}}{\pi\,e^{s(\pi)}+1-\pi}.
\end{equation*}
Then $Q_{s, P}$ satisfies Assumption~\ref{ass:main-bvm}(b)(ii)--(iii). In particular, the incremental propensity score intervention is obtained by taking $s(\pi) \equiv \log\delta$, and the required bounds hold uniformly for $\delta$ in any compact subset of $(0,\infty)$.
\end{proposition}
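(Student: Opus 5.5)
The plan is to verify the two conditions directly from the closed form of $q_s(a,\pi)$. The one structural fact I will lean on throughout is that the denominator is a convex combination: writing $D(\pi) := \pi e^{s(\pi)} + 1 - \pi$, we have $D(\pi) \ge \min\{e^{s(\pi)},1\}$. Because $s$ is continuous on the compact interval $[0,1]$, $\|s\|_\infty =: M < \infty$. Hence
\begin{equation*}
D(\pi) \ge e^{-M} > 0 \quad \text{uniformly in } \pi \in [0,1].
\end{equation*}
All subsequent bounds should be expressible in terms of $M$ and the sup-norms of $s'$ and $s''$. This matters because the constant $C$ in Assumption~\ref{ass:main-bvm}(b) must not depend on $P$ or $n$.

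For (b)(ii), I note that $q_s(1,\pi) = N(\pi)/D(\pi)$ with $N(\pi) = \pi e^{s(\pi)}$, and that $q_s(0,\pi) = 1 - q_s(1,\pi)$ has the same derivatives up to sign. Both $N$ and $D$ are $C^2$ on $[0,1]$, being compositions and products of $C^2$ functions, with
\begin{equation*}
N' = e^{s}(1 + \pi s'), \qquad D' = e^{s}(1 + \pi s') - 1,
\end{equation*}
and analogous expressions for the second derivatives. The quotient rule then gives $q_s'$ and $q_s''$ as polynomials in $N, N', N'', D', D''$ divided by $D^2$ and $D^3$. With $D \ge e^{-M}$, these are continuous on a compact set and therefore bounded by a constant depending only on $M$, $\|s'\|_\infty$ and $\|s''\|_\infty$. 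I do not plan to write out $q_s''$ explicitly; the continuity-on-a-compact-set argument suffices.

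For (b)(iii), the key observation is that the likelihood ratio uses the same $P$ in numerator and denominator. The ratio therefore simplifies algebraically, and the propensity cancels:
\begin{equation*}
\frac{q_s(1,\pi)}{\pi} = \frac{e^{s(\pi)}}{D(\pi)} \le \frac{e^{s}}{\min\{e^{s},1\}} \le e^{M},
\qquad
\frac{q_s(0,\pi)}{1-\pi} = \frac{1}{D(\pi)} \le e^{M}.
\end{equation*}
This holds for every $P$, not only for $P \in S_n$, with no overlap condition required. The one delicate step, which I expect to be the main obstacle, is the boundary $\pi \in \{0,1\}$, where the ratio is formally $0/0$. There I would use that $q_s(1,0) = 0$ and $q_s(0,1) = 0$, so the intervention places no mass where $P$ places none; this is weak positivity. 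I would then interpret the ratio by its continuous extension, or equivalently note that the event $\{A = a,\ \pi(X) \text{ degenerate against } a\}$ is $P$-null. This should be stated carefully to hold $P_0$-a.s. as required, so the supremum over $S_n$ is bounded by $e^{M}$.

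Finally, for the IPSI I take $s \equiv \log\delta$, so $s' = s'' = 0$ and $M = |\log\delta|$. The ratio bound becomes $\max(\delta, 1/\delta)$, and the derivative bounds are continuous functions of $(\delta,\pi)$ on $K \times [0,1]$ for any compact $K \subset (0,\infty)$. Joint continuity on this compact product then gives bounds uniform over $\delta \in K$.
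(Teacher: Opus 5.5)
Your proposal is correct and matches the paper's proof essentially step for step. Both use the convex-combination lower bound $D(\pi)\ge e^{-\|s\|_\infty}>0$, the cancellation that reduces the weights to $e^{s(\pi)}/D(\pi)$ and $1/D(\pi)$, the quotient-of-$C^2$-functions argument for bounded first and second derivatives, and uniformity in $\delta$ from boundedness of $\log\delta$ on compact sets.

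Your handling of the boundary $\pi\in\{0,1\}$ through the continuous extension $e^{s(\pi)}/D(\pi)$ is a detail the paper leaves implicit; it uses the same extension convention in its weight-Lipschitz lemma. Prefer that route over your null-event alternative, because the event $\{A=1,\ \pi_P(X)=0\}$ is guaranteed to be $P$-null but not $P_0$-null.
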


A full proof is given in the Supplementary Material. Because $s \in C^{2}[0,1]$, the functions $s$, $s'$, and $s''$ are bounded, and $\exp\{s(\pi)\}$ is bounded above and away from zero. The denominator in Proposition~\ref{prop:logodds_family} is therefore uniformly bounded away from zero, which yields both bounded intervention weights and bounded first and second derivatives. While the IPSI is the simplest intervention that Proposition~\ref{prop:logodds_family} allows, many other interventions may arise that are of scientific interest. As mentioned in the discussion of \citet{Kennedy2018ipsi}, $s(\pi)$ could depend on covariates. For example, we may only want to shift the odds of treatment for some subset of the population of interest.  

Although the IPSI is covered by Proposition~\ref{prop:logodds_family}, the power tilt intervention defined in \eqref{eq:PT_density} is not. In particular, because the power tilt instead rescales the log-odds, its weights and second derivative blow up like $\pi^{\delta-1}$ and $\pi^{\delta-2}$ near the boundary. Therefore for the PTI, Assumption~\ref{ass:main-bvm}(b)(ii)--(iii) holds \textit{only} for $\delta \ge 2$; when $0 < \delta < 2$, the conditions are satisfied under an additional assumption of strong overlap: $c \le \pi(x) \le 1 - c$, for some $c > 0$.

Finally, we note that Assumption~\ref{ass:main-bvm}(b)(ii)--(iii) gives sufficient, rather than necessary, conditions for Theorem~\ref{thm:bvm_single_time}. It deliberately excludes intervention maps that are invalid, nonsmooth, or induce unbounded likelihood ratios. For example, $q(\pi) = \delta \pi$ need not take values in $[0,1]$ when $\delta > 1$ and
may violate weak positivity at the boundary. The capped modification,
\begin{equation*}
    q(1, \pi)=\min\{1,\delta\pi\}
\end{equation*}
has a kink at $\pi = 1 / \delta$, while the threshold rule
\begin{equation*}
    q(1, \pi)=\mathbf{1}\{\pi>c\}
\end{equation*}
is discontinuous. For these maps, the Taylor expansion and Lipschitz arguments used above do not apply without additional assumptions. Such interventions may still admit specialized analyses, but they fall outside the scope of Theorem~\ref{thm:bvm_single_time}.

Together, Theorem~\ref{thm:bvm_single_time} and Proposition~\ref{prop:logodds_family} separate the intervention-specific regularity conditions and prior-dependent conditions governing posterior concentration and complexity. Although developed for stochastic intervention effects, the arguments establishing $L_2$ convergence and empirical-process control extend more generally to functionals whose uncentered EIF is pointwise Lipschitz in the relevant nuisance functions. For example, once the second-order remainder is controlled, the same reasoning applies to the average treatment effect. Importantly, these conditions are local to the nuisance-function sieves rather than the full model class: it is sufficient that the posterior concentrate on subsets controlling $\pi$ and $\mu$, with no separate contraction or complexity condition imposed on the marginal covariate distribution. Thus, once the boundedness and intervention-stability conditions in Assumption~\ref{ass:main-bvm}(b) have been verified, the remaining prior-dependent requirements are the nuisance contraction and second-order rate in part (a) and the local complexity condition in part (c). We verify these conditions for SoftBART priors in Section~\ref{sec:SoftBART}.


\section{SoftBART Priors for Stochastic Intervention Effects} \label{sec:SoftBART}

\citet{chipman2010bart} introduced Bayesian Additive Regression Trees (BART), which is a powerful Bayesian machine learning method that is commonly used for causal inference problems \citep{hill2011bnp, dorie2019competition, hahn2020bcf}. Broadly, BART was developed in the context of the model,
\begin{equation*}
    Y = f(x) + \varepsilon, \quad \varepsilon \sim N(0, \sigma^2),
\end{equation*}
where $f(x) = E[Y \mid X = x]$ is an unknown regression function. BART approximates $f(x)$ using a sum of $K_n$ regression trees; that is, 
\begin{equation*}
    f(x) = \sum_{k=1}^{K_n} g_k(x; \mathcal{K}_k, \mathcal{L}_k),
\end{equation*}
where each $g_k(x; \mathcal{K}_k, \mathcal{L}_k)$ has tree structure $\mathcal{K}_k$ and leaf parameters $\mathcal{L}_k$. This setup is analogous to the decision tree boosting framework introduced by \citet{friedman2001gbm}, where many weak learners get aggregated into a single flexible function approximation. 

The tree structure $\mathcal{K}_k$ specifies a binary partition of the predictor space with at most  $L_n$ terminal nodes (leaves). BART imposes priors on tree topologies, splitting variables, split points, and leaf values, thereby strongly regularizing the model toward shallow trees that capture additive effects and low-order interactions. Inference is then performed using Markov chain Monte Carlo, which yields a posterior distribution for $f(x)$.

Although BART has been shown to achieve posterior consistency and a near-optimal contraction rate under H\"older smoothness and sparsity assumptions \citep{rockova2019theory}, its reliance on piecewise-constant tree ensembles limits its ability to efficiently approximate smooth functions. Furthermore, the piecewise constant trees make bounds on their bracketing number prohibitively difficult. SoftBART addresses this limitation by introducing smooth decision rules, yielding a prior that is well aligned with smoother function classes and capable of achieving optimal posterior contraction rates across a range of H\"older smoothness levels \citep{Linero2018brte}. Compared to the hard-decision rules behind BART, SoftBART utilizes probabilistic paths, with $x$ going left at branch $b$ with probability
\begin{equation*}
    s(c, \tau) = \sigma \left(\frac{x_j - c}{\tau} \right),
\end{equation*}
for some covariate $x_j \in X$. Here $\tau$ represents the smoothness of the split 
and $\sigma\left( \cdot\right)$ is some function mapping $\mathbb{R}$ to $[0,1]$. While the logistic function is most commonly used, other smooth functions may also be employed. If SoftBART is to be used in our setup, a bound on the log-bracketing number is required. 
\begin{assumption}\label{ass:SoftBART}
The SoftBART model satisfies:
\begin{enumerate}
    \item[(a)] (Bounded trees, temperature, and split-locations).
        Our SoftBART model has a bounded number of trees, $K_n$; bounded temperature, $\tau \in [\tau_n^L, \tau_n^U]$; and bounded split-locations, $c \in [-C, C]$, for some $C \in \mathbb{R}$.
    \item[(b)] (Bounded leaf-weights).
    The SoftBART leaf-weights are uniformly bounded, $\ell \in [-B_n, B_n]$, for some $B_n \in \mathbb{R}$.
    \item[(c)] (Bounded depth). The maximum number of leaves in a SoftBART tree, $L_n$, is bounded.
    \item[(d)] The splitting function
        \begin{equation*}
        s(c,\tau)=\sigma\left(\frac{x_j-c}{\tau}\right)
        \end{equation*}
        uses a continuously differentiable link $\sigma$ with
        $\sup_z|\sigma'(z)|<\infty$. On the sieve
        $\tau\in[\tau_n^L,\tau_n^U]$, its derivatives satisfy
        \begin{equation*}
        \left|\frac{\partial s}{\partial c}\right|
        \lesssim \frac{1}{\tau_n^L},
        \qquad
        \left|\frac{\partial s}{\partial \tau}\right|
        \lesssim \frac{1}{(\tau_n^L)^2}.
        \end{equation*}
    
\end{enumerate}
\end{assumption}

\begin{lemma} \label{lem:SoftBART-Bracketing-Bound}
    Under Assumption~\ref{ass:SoftBART}, the log-bracketing number of SoftBART admits the following bound, 
    \begin{equation*}
        \log N_{[ \text{ }]}\left(\varepsilon, \tilde{\mathcal{F}}_n, L_2(P)\right) \lesssim 
     L_n K_n  \log \left(\frac{\tau_n^U \, B_n \, L_n \, K_n} {\varepsilon \, \tau_n^L} \right),
    \end{equation*}
    with $\tilde{\mathcal{F}}_n$ being some general function class. 
\end{lemma}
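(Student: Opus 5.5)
The plan is to treat the SoftBART sieve $\tilde{\mathcal{F}}_n$ as a finite-dimensional class indexed by a bounded parameter, show that the map from parameters to functions is Lipschitz in sup-norm, and then turn a grid in parameter space into brackets. Conditional on a tree topology and on the splitting variables, each tree $g_k$ is determined by at most $L_n$ split locations $c\in[-C,C]$, one temperature $\tau\in[\tau_n^L,\tau_n^U]$, and at most $L_n$ leaf weights $\ell\in[-B_n,B_n]$. The function has the form $g_k(x)=\sum_{\text{leaves } m}\ell_m\prod_{b\in\text{path}(m)} s_b(x)$, where each $s_b$ equals either $\sigma((x_j-c_b)/\tau)$ or $1-\sigma(\cdot)$. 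So the whole ensemble is parametrized by a vector $\theta$ of dimension $d_n\lesssim L_nK_n$ lying in a box of side at most $\max\{2C,\tau_n^U,2B_n\}$. The discrete choices (topologies with at most $L_n$ leaves, and splitting coordinates among the $p$ covariates) number at most $(c\,p)^{L_nK_n}$. These contribute only an additive $O(L_nK_n)$ to the log-bracketing number, so they are absorbed into the stated bound.

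The key step is a uniform Lipschitz estimate. Fix two parameter vectors $\theta,\theta'$ sharing the same discrete structure. The path probabilities for a leaf satisfy $\sum_m \prod_b s_b = 1$, and each factor lies in $[0,1]$. Telescoping the product along a path, which has length at most $L_n$, and using Assumption~\ref{ass:SoftBART}(d), gives
\begin{equation*}
\Big|\prod_b s_b(x;\theta)-\prod_b s_b(x;\theta')\Big| \lesssim L_n\Big(\frac{|c-c'|_\infty}{\tau_n^L}+\frac{|\tau-\tau'|}{(\tau_n^L)^2}\Big).
\end{equation*}
Combining this with the leaf weights yields
\begin{equation*}
\|g_k(\cdot;\theta)-g_k(\cdot;\theta')\|_\infty \lesssim |\ell-\ell'|_\infty + B_nL_n^2\Big(\frac{|c-c'|_\infty}{\tau_n^L}+\frac{|\tau-\tau'|}{(\tau_n^L)^2}\Big).
\end{equation*}
Summing over $K_n$ trees gives a sup-norm Lipschitz constant $\Lambda_n\lesssim K_nB_nL_n^2/(\tau_n^L)^2$ with respect to the $\ell_\infty$ norm on $\theta$.

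Next comes the covering. Take an $\eta$-grid of the parameter box with $\eta=\varepsilon/(2\Lambda_n)$; its cardinality is at most $(R_n/\eta)^{d_n}$, where $R_n$ is the box diameter. For each grid point $\theta_j$, form the bracket $[f_{\theta_j}-\varepsilon/2,\;f_{\theta_j}+\varepsilon/2]$. Its $L_2(P)$ size is at most $\varepsilon$ for any probability $P$, and the Lipschitz bound shows these brackets cover the sieve. Taking logarithms gives
\begin{equation*}
\log N_{[\,]}\lesssim L_nK_n\log\big(R_n\Lambda_n/\varepsilon\big)+L_nK_n.
\end{equation*}
Since $R_n\Lambda_n$ is polynomial in $\tau_n^U, B_n, L_n, K_n, 1/\tau_n^L$, the logarithm of a polynomial is bounded by a constant times $\log(\tau_n^UB_nL_nK_n/(\varepsilon\tau_n^L))$, which is the claimed form. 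Here the bounded $L_n$ from Assumption~\ref{ass:SoftBART}(c) and the implicit constant absorb the exponents.

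I expect the main obstacle to be the Lipschitz step: the dependence on $\tau$ and the matching of the $\tau_n^U/\tau_n^L$ ratio in the statement. A naive bound produces $(\tau_n^L)^{-2}$ rather than $\tau_n^U/\tau_n^L$. To get the stated form, I would first try gridding $\log\tau$ instead of $\tau$. The relevant derivative is then $|\partial s/\partial\log\tau|=|\tau\,\partial s/\partial\tau|\lesssim \tau_n^U/\tau_n^L$, and $\log\tau$ ranges over an interval of length $\log(\tau_n^U/\tau_n^L)$. Either way, the discrepancy only changes constants inside the logarithm, so I would present the $\log\tau$ version and note that it suffices.
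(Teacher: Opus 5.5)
Your argument is correct, but it is organized differently from the paper's. The paper's proof is written in $L_2(P_0)$ and assembles brackets from the building blocks:
\begin{itemize}
\item it grids $(c,\tau)$ for a single splitting function, using the derivative bounds of Assumption~\ref{ass:SoftBART}(d);
\item it multiplies split brackets into gating-function brackets via the telescoping inequality;
\item it grids the leaf weights separately and brackets each product $w_l\ell_l$ by the min/max of the four endpoint products;
\item it sums over leaves and then over trees, allocating $\varepsilon/K_n$ per tree;
\item it multiplies by the Catalan-number/$p^{L_n}$ count of discrete configurations.
\end{itemize}

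You instead prove one sup-norm Lipschitz bound for the whole parameter-to-function map. You then turn a parameter grid into a sup-norm cover, and hence into brackets of $L_2(P)$-size $\varepsilon$, via $N_{[\,]}(2\epsilon,\mathcal F,L_2(P))\le N(\epsilon,\mathcal F,\|\cdot\|_\infty)$.

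What your route buys:
\begin{itemize}
\item it is shorter and avoids bracket arithmetic for products of signed quantities;
\item it uses $\sum_l w_l=1$ to avoid an extra factor $L_n$ on the leaf-weight term;
\item it gives a measure-free bound valid for every $P$ at once, matching the $L_2(P)$ in the statement, and as a by-product a sup-norm covering bound of the type used in the probit contraction lemma.
\end{itemize}
The paper's modular construction gains no sharpness, since it also relies on sup-norm Lipschitz bounds for the splits.

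Both proofs end with polynomial factors inside the logarithm. The paper has $\tau_n^U B_n^3L_n^5K_n^3/(\varepsilon^3(\tau_n^L)^3)$ and you have $\max\{\tau_n^U,B_n\}B_nK_nL_n^2/(\varepsilon(\tau_n^L)^2)$. Both are absorbed into the implicit constant, which tacitly requires the sieve quantities to be at least one and $\varepsilon$ to be small.

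For that reason your $\log\tau$ reparametrization is unnecessary. Note also that Assumption~\ref{ass:SoftBART}(d) alone gives only $\tau|\partial s/\partial\tau|\lesssim\tau_n^U/(\tau_n^L)^2$. Your $\tau_n^U/\tau_n^L$ bound needs the direct computation $|\partial s/\partial\tau|\lesssim|x_j-c|/\tau^2$ with bounded covariates. This does not affect the conclusion.
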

The full proof is contained in the Supplementary Material, but we briefly provide an illustrative overview. Due to its additive nature, the bracketing number of SoftBART can be bounded by the product of the bracketing numbers of the individual trees, 
\begin{equation*}
    N_{[ \text{ }]}\left(\varepsilon, \tilde{\mathcal{F}}_n, L_2(P)\right) \le \prod_{k=1}^{K_n} N_{[ \text{ }]}\left(\varepsilon/K_n, \mathcal{G}_n, L_2(P)\right).
\end{equation*}
Here $\tilde{\mathcal{F}}_n$ is the class of possible functions for SoftBART, and $\mathcal{G}_n$ is the class of possible functions for a single tree. For a single SoftBART tree, the model can be expressed as
\begin{equation*}
    g(x; \mathcal{K}, \mathcal{L}) = \sum_{l = 1}^{L_n} w_l \ell_l,
\end{equation*}
for the splits $w_l$ and leaf-weights $\ell_l$. We then construct brackets for $g$ in terms of $w_l$ and $\ell_l$ in order to obtain 
\begin{equation*}
    N_{[ \text{ }]}\left(\varepsilon/K_n, \mathcal{G}_n, L_2(P)\right) \le 
    N_{[ \text{ }]}\left(\frac{\varepsilon}{2B_nL_n K_n}, \mathcal{W}_n, L_2(P)\right) 
    N_{[ \text{ }]}\left(\frac{\varepsilon}{2L_n K_n}, \mathcal{B}_n, L_2(P)\right),
\end{equation*}
where $\mathcal{W}_n$ denotes the class of all SoftBART gating functions for a fixed tree topology with $L_n$ leaves, each of which is a product of at most $L_n - 1$ splitting functions $s(c,\tau)$, and $\mathcal{B}_n$ is the class of all possible leaf-weights. Brackets for $\mathcal{W}_n$ are obtained by bracketing the splitting functions $\mathcal{W}_{\text{split},n}$ and combining them through a telescoping product. Bounds on the bracketing numbers of the gating functions and possible leaf-weights are then multiplied by all possible tree topologies, which gives the desired result. It's important to note that this argument cannot be extended to BART, due to its use of hard splitting functions, which prohibits a simple bound on the bracketing number of splitting parameters.

Of particular interest is the posterior contraction rate induced by the prior. For SoftBART, \citet{Linero2018brte} show that, when $f$ belongs to a H\"older space and $p$ denotes the number of covariates, the posterior contracts at the minimax rate $n^{-\alpha/(2\alpha+p)}$, derived by \citet{yang2015minimax}, up to logarithmic factors. Furthermore, under a sparsity assumption, the posterior contraction rate becomes
$n^{-\alpha/(2\alpha+d)} + \sqrt{n^{-1}d\log p},$
where $d$ denotes the effective dimension. Note that this is done in the empirical $L_2$ norm, and in the Supplementary Material we provide a lemma to transfer this to the population $L_2$ norm, utilizing the above Lemma~\ref{lem:SoftBART-Bracketing-Bound}. 
In order to obtain the above posterior contraction rate, SoftBART places priors on the number of trees, temperatures, split-locations, leaf-weights, tree depth, and splitting proportions. To simultaneously satisfy the prior conditions required for posterior contraction and the entropy conditions required for Assumption~\ref{ass:main-bvm}, we restrict attention to a sequence of growing sieves satisfying Assumption~\ref{ass:SoftBART}. In particular, the sieve bounds 
    $K_n,  L_n, B_n,\tau_n^U, \text{and }(\tau_n^L)^{-1}$
are allowed to increase in $n$, while $\tau_n^L \downarrow 0$. This ensures that the model class remains sufficiently rich to preserve the SoftBART posterior contraction rate, while retaining controlled entropy.
Combining Lemma~\ref{lem:SoftBART-Bracketing-Bound} and the SoftBART contraction rate we arive at the following result.

\begin{lemma} \label{lem: SoftBART_smoothness}
    Assume an average minimum level of smoothness with respect to the number of covariates, $\alpha > p/2$, across the outcome and propensity score models. Under a sparsity assumption we have $\alpha > d/2$ and $\sqrt{n^{-1}d\log p}
=
o(n^{-1/4})$ for the effective dimension $d$.
Also assume vanishing strong positivity: $\pi_{P_0} \in [ \kappa_n, 1 - \kappa_n]$ for some $\kappa_n \to 0$ polynomially with highest order $\le c+1$\footnote{The term $c$ denotes the polylogarithmic growth rate of the latent model estimated by SoftBART, which for simplicity we assume $c \le d+1$.}. Then the standard SoftBART prior satisfies Assumption~\ref{ass:main-bvm}(a) and (c) up to logarithmic factors.  
\end{lemma}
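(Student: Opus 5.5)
We verify Assumption~\ref{ass:main-bvm}(a) and (c) separately. For each we define the sieve $S_n$ as the set of $P$ whose outcome regressions $\mu_{P,a}$ and latent propensity function lie in the SoftBART sieve of Assumption~\ref{ass:SoftBART} and within a multiple of the contraction rate of the truth. The propensity is $\pi_P = \Phi(f_P)$ with $f_P$ a SoftBART function.

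\textbf{Contraction, part (a).} The SoftBART contraction results of \citet{Linero2018brte} give rates in the empirical norm. For the outcome model, $\varepsilon_{a,n} \asymp n^{-\alpha_\mu/(2\alpha_\mu+d)}(\log n)^{t}$. For the latent propensity model the rate is analogous with smoothness $\alpha_\pi$; the term $\sqrt{n^{-1}d\log p}$ is added in the sparse case. We transfer these rates to $L_2(P_0)$ using the supplementary lemma based on Lemma~\ref{lem:SoftBART-Bracketing-Bound}. That argument is a uniform ratio-type bound, $\sup_{f}\big|\|f\|_n^2-\|f\|_{L_2(P_0)}^2\big|$, controlled by the bracketing entropy $L_nK_n\log(\cdot)$. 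Since $L_nK_n$ grows only polylogarithmically times $n^{d/(2\alpha+d)}$, this is of smaller order than the rate squared.

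Because $\Phi$ is Lipschitz, the latent rate passes to $\pi_P$. Vanishing positivity $\pi_{P_0}\in[\kappa_n,1-\kappa_n]$ with $\kappa_n$ polynomial ensures that the latent truth $\Phi^{-1}(\pi_{P_0})$ has sup norm of order $\sqrt{\log n}$. This order is absorbed into the polylogarithmic factor $c$, and the SoftBART prior-mass condition still holds with only a log-factor loss. This gives $\rho_n$.

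The product condition then reads $\sqrt n\,\rho_n(\rho_n+\varepsilon_{0,n}+\varepsilon_{1,n})\to 0$. It holds up to log factors whenever $\alpha_\pi/(2\alpha_\pi+d)+\min(\alpha_\pi,\alpha_\mu)/(2\cdot+d)>1/2$. The assumption that average smoothness exceeds $d/2$ guarantees each exponent is near $1/4$ or above. The sparsity term is $o(n^{-1/4})$ by hypothesis.

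\textbf{Complexity, part (c).} Centering by a fixed function does not change bracketing numbers, so $\Delta^\mu_{a,n}$ inherits the bound of Lemma~\ref{lem:SoftBART-Bracketing-Bound}. For $\Delta^\pi_n$, brackets $[l,u]$ for $f$ map to brackets $[\Phi(l),\Phi(u)]$ because $\Phi$ is monotone and Lipschitz. Integrating the entropy bound gives
\[
J_{[\,]}(\gamma_n,\cdot,L_2(P_0))\lesssim \gamma_n\sqrt{L_nK_n\log(\tau_n^UB_nL_nK_n/(\gamma_n\tau_n^L))}.
\]
The sieve sizes are chosen as $L_nK_n\asymp n\gamma_n^2$ up to logs, the usual sieve dimension matching the contraction rate, with $B_n,\tau_n^U,1/\tau_n^L$ polynomial in $n$. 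The bracketing integral is then $\lesssim \sqrt n\,\gamma_n^2$ up to logs. This is $o(\gamma_n n^{1/4})$ iff $\gamma_n=o(n^{-1/4})$, which follows from the same smoothness condition, and it is $o(1)$ because $\sqrt n\gamma_n^2\to0$ up to logs. The last step uses the rate product of part (a), in the balanced case.

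Finally, we show the posterior puts mass tending to one on $S_n$. The posterior mass of the sieve complement is controlled by the prior tails on the number of trees, depth, leaf weights and temperatures, through the standard $\exp(-Cn\gamma_n^2)$ remaining-mass argument of Ghosal--van der Vaart.

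\textbf{Main obstacle.} We expect the hardest step to be the choice of sieve sizes. They must make the SoftBART prior tails satisfy the remaining-mass condition while also keeping the entropy in part (c) at order $n\gamma_n^2$. The polynomial growth of $1/\tau_n^L$ and $B_n$ (the latter needed for the unbounded latent probit truth under vanishing positivity) must contribute only logarithmically. This is where the ``up to logarithmic factors'' qualification enters. We would first try $\tau_n^L=n^{-k}$ and $B_n\asymp\sqrt{\log n}$, and check that the truncated prior still satisfies Linero's small-ball bound.
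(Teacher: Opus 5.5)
Your treatment of part (c) and of the outcome model follows the paper's proof. You integrate the entropy of Lemma~\ref{lem:SoftBART-Bracketing-Bound} to get $J_{[\,]}\lesssim\gamma_n\sqrt{L_nK_n\log(\iota_n/\gamma_n)}$, take $L_nK_n\log(\cdot)\lesssim n\gamma_n^2$ with polynomial $B_n,\tau_n^U,1/\tau_n^L$, push brackets through the monotone Lipschitz probit link, and use a union bound for $\Pi(S_n\mid Z_{1:n})\to1$.

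\textbf{The gap: the propensity rate $\rho_n$.} You posit a latent-scale rate for $f$ ``analogous'' to the regression case, transfer it to $L_2(P_0)$, and push it through $\Phi$. Vanishing positivity then enters only via the approximation radius in the prior-mass bound. This does not work as stated:
\begin{enumerate}
\item The SoftBART rate you cite is proved for regression with Gaussian errors. Only its prior-side conditions (small-ball mass, remaining mass, entropy) carry over to a Bernoulli likelihood; the likelihood side must be redone.
\item Under the Bernoulli likelihood, tests and hence contraction are in the average Hellinger distance $d_n$ on the probability scale. Near $\pi_0\in\{0,1\}$, $\Phi'$ is polynomially small, so $d_n$ does not dominate $|f-f_0|$ and no latent rate is available. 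What you can read off is $\|\pi-\pi_0\|_n\le\sqrt{8}\,d_n$.
\item Most importantly, vanishing positivity bites in the Kullback--Leibler step. You need $\{\|f-f_0\|_\infty\le\epsilon_n\}$ to lie inside a KL neighbourhood of radius $\asymp\epsilon_n$. The naive bound $K_i\le(\pi_0-\pi)^2/\{\pi(1-\pi)\}$ loses a factor $1/\kappa_n$, polynomial in $n$. That would require prior mass at radius $\epsilon_n\sqrt{\kappa_n}$ and destroy the rate.
\end{enumerate}
The paper fills this with a Mills-ratio argument (Lemma~\ref{lem:probit-kl}). It shows $n^{-1}\sum_iK_i$ and $n^{-1}\sum_iV_{2;i}$ are $\lesssim\|f-f_0\|_\infty^2$ uniformly as $\pi_0\to0,1$, provided $\epsilon_nR_n=o(1)$. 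It then runs Theorem~13 of \citet{ghosal2007noniid} (Lemma~\ref{lem:probit-contraction}).

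The order of operations also matters for the norm transfer. Lemma~\ref{lem:emp2pop} needs $\|h\|_\infty\le M$ with $M=O(1)$. The latent SoftBART class is bounded only by something like $K_nB_n$, so the remainder $M^2V_n\log n/n$ would swamp $\gamma_n^2$. The lemma must therefore be applied to $\pi-\pi_0\in[-1,1]$ after the link, as the paper does.

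\textbf{The smoothness condition.} Your claim that average smoothness above $d/2$ makes each exponent ``near $1/4$ or above'' is not right. For a $\pi$-dependent intervention, the term $\sqrt n\,\rho_n^2$ alone requires $\alpha_\pi>d/2$. Your own part (c) bound $J_{[\,]}\lesssim\sqrt n\,\gamma_n^2$ (up to logs) must be $o(1)$. That forces $\gamma_n=o(n^{-1/4})$: every nuisance rate must be $o(n^{-1/4})$, so each smoothness index must exceed $d/2$. This is exactly what the paper assumes (a common $\alpha$ with $\gamma_n=o(n^{-1/4})$). Your min-type product condition is correct for part (a) alone, but it is not what the lemma ultimately needs.

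Finally, only $\log B_n$ enters the entropy, so there is no need to insist on $B_n\asymp\sqrt{\log n}$. The paper lets $B_n\lesssim R_n\tau^{-d}$ grow polynomially.
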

The proof is provided in the Supplementary Material. The main argument first computes a bound on the entropy integral, then combines the SoftBART posterior contraction rate with the entropy bound in Lemma~\ref{lem:SoftBART-Bracketing-Bound}. Notably, the required smoothness condition is modest: root-$n$ inference requires only $\alpha >p/2$ in the dense setting, with the condition relaxing to $\alpha >d/2$ when sparsity is assumed. The minimum smoothness is due to Assumption~\ref{ass:yiu-main}(b) and future work could relax this, following ideas from \citet{wang2026higherorder}. 
Vanishing positivity is needed because the propensity score is estimated via a probit-linked SoftBART model. Therefore attaining $\pi \in \{0,1\}$ exactly would require the latent function to diverge to $\pm\infty$. The original SoftBART contraction proof assumes a bounded function class. We relax this in the Suplementary Material to allow logarithmic growth, which yields the polynomial vanishing positivity bound after the probit link. This relaxation still maintains the original SoftBART contraction rate, but it inflates the logarithmic factor. Moreover, the standard SoftBART prior of \citet{Linero2018brte} admits a sieve on which the posterior concentrates asymptotically. This can be used to obtain $\mathcal{S}_n$, by intersecting over the sieves for the propensity score and outcome regression models. By a standard union bound argument we can then show that show $\Pi(S_n \mid Z_{1:n}) \xrightarrow{P_0} 1$, ensuring its existence in Assumption~\ref{ass:main-bvm}.

Following the Bayesian causal forest model of \citet{hahn2020bcf}, a variant of SoftBART is SoftBCF. This decomposes the outcome regression function, $\mu(x,a)$, into a prognostic function and a treatment effect function. Each are assigned independent SoftBART priors and the propensity score model  would also be assigned a probit-linked SoftBART prior. Additionally,  posterior mean propensity score estimates are inputted as an additional feature in the prognostic function. Therefore the above analysis for SoftBART holds, with the models for the propensity score, prognostic function, and treatment effect function treated separately.

Much of this analysis will be useful for other nonparametric priors, leaving many avenues for future work. The argument in the proof of Lemma~\ref{lem:SoftBART-Bracketing-Bound} could prove useful in extending this result to Bayesian neural networks \citep{egels2025posterior}. Many nonparametric priors attain minimax-optimal posterior contraction rates, including Bayesian neural networks \citep{egels2025posterior}, Gaussian processes \citep{vandervaart2008gp}, and Dirichlet process mixture models \citep{ghosal2007dirichlet,shen2013dirichlet}. Therefore, given a suitable bracketing number bound, Lemma~\ref{lem: SoftBART_smoothness} can be extended to any prior that attains the above rate.

In summary, we have derived a bound on the bracketing number for SoftBART in Lemma~\ref{lem:SoftBART-Bracketing-Bound}. We then utilize the bracketing bound to show that SoftBART may be used as a prior on the nuisance functions in Lemma~\ref{lem: SoftBART_smoothness} up to logarithmic factors, given sufficient smoothness of the nuisance factors. 
Therefore, assuming bounded outcomes and sufficiently smooth nuisance functions, the one-step posterior with SoftBART nuisance function priors satisfies the semiparametric Bernstein-von Mises theorem for a broad class of stochastic interventions. 

\section{Simulation Study}\label{sec:simulation_study}

While asymptotic results are important, in this section we investigate the finite sample performance of the one-step posterior estimator in the incremental propensity score intervention setting. The setup is based on the simulation study considered in \citet{Kennedy2018ipsi}, which borrows the following data-generating process from \citet{Kang2007dr}:
\begin{equation*}
    (X_1, X_2, X_3, X_4) \sim N(0, I), 
\end{equation*}
\begin{equation*}
    P_0(A = 1\mid X) = \text{expit}(-X_1 + 0.5X_2 - 0.25X_3 - 0.1X_4),
\end{equation*}
\begin{equation*}
    (Y\mid X , A) \sim N(\mu(X,A), 1),
\end{equation*}
where the outcome mean is $\mu(x, a) = 200 + a\{10 + 13.7(2x_1 + x_2 + x_3 + x_4)\}$. 
We further consider a case with model misspecification, where nuisance functions are estimated using nonlinearly transformed covariates $t(X)$, as defined in \citet{Kang2007dr}.

We focus on estimating $\psi_{IPSI}(P_0;Q_{\delta})$, the expected outcome under the IPSI intervention defined in \eqref{eq:IPSI_density}. When the identifiability conditions hold in the single-timepoint setting, this is equivalent to estimating
\begin{equation}
    \psi_{\text{IPSI}}(P_0;Q_{\delta}) = E_{P_0}\left[\frac{\delta \pi(X) \mu(X,1) + \{ 1 - \pi(X)\}\mu(X,0)}{\delta \pi(X) + {1 - \pi(X)}} \right].
\end{equation}
The efficient influence function can be found using Definition~\ref{def:dynamic_eif} and \eqref{eq:IPSI_density}; it appears in the Supplementary Material for reference.

We utilize three different BART-type priors for the outcome regression and propensity score nuisance functions, namely, BART, SoftBART and SoftBCF. Plug-in and one-step corrected posterior samples of $\psi(P_0;Q)$ are obtained according to Algorithms~\ref{alg:plug-in} and~\ref{alg:onestep}, respectively. For point-wise intervals we simply take the central $100(1-\alpha)\%$ credible interval for a given $\delta$ value. While Theorem~\ref{thm:bvm_single_time} is only argued pointwise in $\delta$, an extension to the uniform setting may be possible and is considered for this simulation study in the Supplementary Material.

We compare the performance of these Bayesian estimators with a frequentist counterpart. The frequentist estimators correspond to those defined in \citet{Kennedy2018ipsi}, with sample-splitting used to control the entropy of the nuisance estimators \footnote{In the frequentist setting, nuisance functions were estimated using Super Learner \citep{vanderlaan2007superlearner}, combining generalized additive models, multivariate adaptive regression splines, support vector machines, random forests, and parametric models (with and without interactions, model selection through stepwise AIC).}. Frequentist uncertainty quantification utilizes Wald-type asymptotic confidence intervals guaranteed by asymptotic linearity.

Estimator performance is assessed using several metrics, including integrated bias, root-mean-squared error, uniform coverage, and interval length. The target function $\psi(P_0;Q_\delta)$ was approximated using Monte Carlo integration with $n = 10^6$ samples. 
We use $J = 500$ simulations and $I = 100$ values of $\delta$, log-uniformly spaced between $\exp(-2.3)$ and $\exp(2.3)$. The simulation is then completed across the following sample sizes: $n = 500, 1000, 5000$. Additional implementation details are given in the Supplementary Material.

\begin{table}
\small
\caption{IPSI regular data comparison of plug-in and EIF estimators across methods and sample sizes, with best in class performance being bold-faced.}
\label{tab:reg_table}
\centering
\begin{tabular}[t]{llcccccccc}
\toprule
\multicolumn{2}{c}{ } & \multicolumn{4}{c}{Plug-in} & \multicolumn{4}{c}{EIF} \\
\cmidrule(l{3pt}r{3pt}){3-6} \cmidrule(l{3pt}r{3pt}){7-10}
$n$ & Method & Bias & RMSE & Cov & Int Len & Bias & RMSE & Cov & Int Len\\
\midrule
 & Frequentist & 0.206 & 24.343 & 0.840 & \textbf{3.298} & 0.098 & 24.547 & 0.955 & 4.359\\

 & BART & \textbf{0.187} & \textbf{23.593} & \textbf{0.950} & 4.151 & 0.080 & 24.203 & \textbf{0.961} & 4.413\\

 & SoftBART & 0.382 & 24.639 & 0.936 & 4.142 & \textbf{0.023} & 23.798 & 0.958 & 4.270\\

\multirow{-4}{*}{\raggedright\arraybackslash 500} & SoftBCF & 0.427 & 25.047 & 0.927 & 4.134 & 0.036 & \textbf{23.777} & 0.957 & \textbf{4.262}\\
\cmidrule{1-10}
 & Frequentist & 0.163 & 25.970 & 0.824 & \textbf{2.342} & 0.062 & 25.715 & 0.951 & 3.067\\

 & BART & \textbf{0.133} & \textbf{25.104} & \textbf{0.944} & 2.968 & 0.062 & 25.706 & \textbf{0.954} & 3.112\\

 & SoftBART & 0.287 & 26.364 & 0.923 & 2.964 & \textbf{0.033} & 25.403 & 0.948 & \textbf{3.021}\\

\multirow{-4}{*}{\raggedright\arraybackslash 1000} & SoftBCF & 0.303 & 26.555 & 0.919 & 2.962 & 0.037 & \textbf{25.401} & 0.948 & \textbf{3.021}\\
\cmidrule{1-10}
 & Frequentist & 0.103 & 24.914 & 0.837 & \textbf{1.049} & 0.049 & 24.637 & 0.951 & 1.369\\

 & BART & \textbf{0.072} & \textbf{24.576} & \textbf{0.943} & 1.346 & \textbf{0.033} & 24.565 & \textbf{0.954} & 1.384\\

 & SoftBART & 0.153 & 26.340 & 0.923 & 1.355 & 0.047 & 24.506 & 0.950 & \textbf{1.361}\\

\multirow{-4}{*}{\raggedright\arraybackslash 5000} & SoftBCF & 0.155 & 26.399 & 0.924 & 1.353 & 0.047 & \textbf{24.498} & 0.950 & \textbf{1.361}\\
\bottomrule
\end{tabular}
\end{table}

\begin{table}
\small
\caption{IPSI transformed data comparison of plug-in and EIF estimators across methods and sample sizes, with best in class performance being bold-faced.}
\label{tab:trans_table}
\centering
\begin{tabular}[t]{llcccccccc}
\toprule
\multicolumn{2}{c}{ } & \multicolumn{4}{c}{Plug-in} & \multicolumn{4}{c}{EIF} \\
\cmidrule(l{3pt}r{3pt}){3-6} \cmidrule(l{3pt}r{3pt}){7-10}
$n$ & Method & Bias & RMSE & Cov & Int Len & Bias & RMSE & Cov & Int Len\\
\midrule
 & Frequentist & 0.568 & 28.112 & 0.703 & \textbf{2.869} & 0.150 & 25.317 & 0.946 & 4.448\\

 & BART & \textbf{0.519} & 26.570 & 0.896 & 4.057 & 0.140 & 24.427 & \textbf{0.959} & 4.360\\

 & SoftBART & 0.587 & \textbf{26.521} & \textbf{0.907} & 4.157 & \textbf{0.041} & 23.902 & 0.954 & 4.262\\

\multirow{-4}{*}{\raggedright\arraybackslash 500} & SoftBCF & 0.662 & 27.716 & 0.885 & 4.134 & 0.047 & \textbf{23.887} & 0.953 & \textbf{4.248}\\
\cmidrule{1-10}
 & Frequentist & 0.565 & 32.028 & 0.633 & \textbf{2.026} & 0.178 & 26.673 & 0.931 & 3.125\\

 & BART & \textbf{0.499} & 30.304 & 0.835 & 2.905 & 0.082 & 25.804 & \textbf{0.950} & 3.078\\

 & SoftBART & 0.506 & \textbf{29.494} & \textbf{0.866} & 2.982 & \textbf{0.031} & 25.444 & 0.948 & 3.021\\

\multirow{-4}{*}{\raggedright\arraybackslash 1000} & SoftBCF & 0.556 & 30.299 & 0.846 & 2.969 & 0.034 & \textbf{25.404} & 0.948 & \textbf{3.014}\\
\cmidrule{1-10}
 & Frequentist & 0.498 & 44.317 & 0.404 & \textbf{0.922} & 0.119 & 26.216 & 0.925 & 1.392\\

 & BART & 0.286 & 32.907 & 0.812 & 1.340 & 0.053 & 24.773 & 0.951 & 1.377\\

 & SoftBART & 0.262 & 30.819 & \textbf{0.878} & 1.393 & 0.041 & 24.502 & 0.952 & 1.368\\

\multirow{-4}{*}{\raggedright\arraybackslash 5000} & SoftBCF & \textbf{0.260} & \textbf{30.744} & 0.875 & 1.380 & \textbf{0.036} & \textbf{24.476} & \textbf{0.953} & \textbf{1.367}\\
\bottomrule
\end{tabular}
\end{table}
\FloatBarrier

The results of the simulation study for the regular (i.e., untransformed covariate) case are shown in Table~\ref{tab:reg_table}, while Table~\ref{tab:trans_table} contains results when the models are fitted with transformed covariates. We find that the estimators utilizing the one-step correction substantially improve bias, RMSE, and coverage, at the expense of an increased interval length. Furthermore, Bayesian methods outperform frequentist alternatives in terms of bias and coverage in both the plug-in and EIF cases. This is particularly apparent in Table~\ref{tab:trans_table}, where the covariate transformation results in nuisance functions that are more difficult to learn. It is important to note that the incremental propensity score intervention is a $\pi$-dependent stochastic intervention. 
In this setting, the efficient influence function is especially useful for bias correction, as the intervention $Q_P$ depends directly on the estimated propensity score. In the supplement, we consider a simulation study with $\pi$-fixed stochastic interventions. We find that the one-step correction is still necessary for asymptotic guarantees, but if the response surface $\mu(x,a)$ can be estimated well, the EIF-based correction may not provide a notable improvement in finite-sample performance. Additional details can be found in the Supplementary Material, and reproducible code for the simulation study is available on GitHub\footnote{\url{https://github.com/tyler3schmidt/Stochastic_Interventions}}.


\section{An Application to Statin Prescribing and LDL Cholesterol} \label{sec:Real_data}

To illustrate the utility of Bayesian inference for stochastic intervention effects, we present an application in public health. Heart disease is the leading cause of death in the United States, and statins are a common treatment used to prevent cardiovascular disease \citep{law2003statins}. The goal of this medication is to lower LDL cholesterol, which is strongly linked with cardiovascular events. In \citet{collins2016statin}, they find lowering LDL cholesterol by 77 mg/dL for 5 years prevents 1,000 cases of major cardiovascular events in a population of 10,000 people. Although many clinical trials have established the efficacy of this class of drugs, policy makers are often interested in understanding the population-level effects of changes in prescribing behavior. 

In this case a researcher may first consider a deterministic intervention such as the average treatment effect. However, serious positivity violations occur in practice. For instance, patients that are young with no serious medical issues have virtually no chance of being prescribed statins. Therefore in this setting an incremental propensity score intervention may be more realistic and appropriate. We utilize the National Health and Nutrition Examination Survey (NHANES) 2017-2018 cross-sectional study to obtain data on this problem. Originally consisting of 9,254 samples, missing data limited the final sample size to $n =$ 2,340.
Covariates were chosen based upon medical guidelines for prescribing statins for the treatment of elevated LDL cholesterol \citep{stone2014accaha, uspstf2022statins}. These covariates include age, sex, race, diabetes, hypertension, smoking, and BMI. Then, Algorithm~\ref{alg:onestep} is used to obtain estimates of $\psi(P_0;Q)$, the average LDL cholesterol level under an IPSI, with independent SoftBART priors chosen for $\mu$ and $\pi$. The resulting effect curve is shown in Figure~\ref{fig:pc-shift-plot}.

\begin{figure}[!ht]
\includegraphics[width=0.70\textwidth]{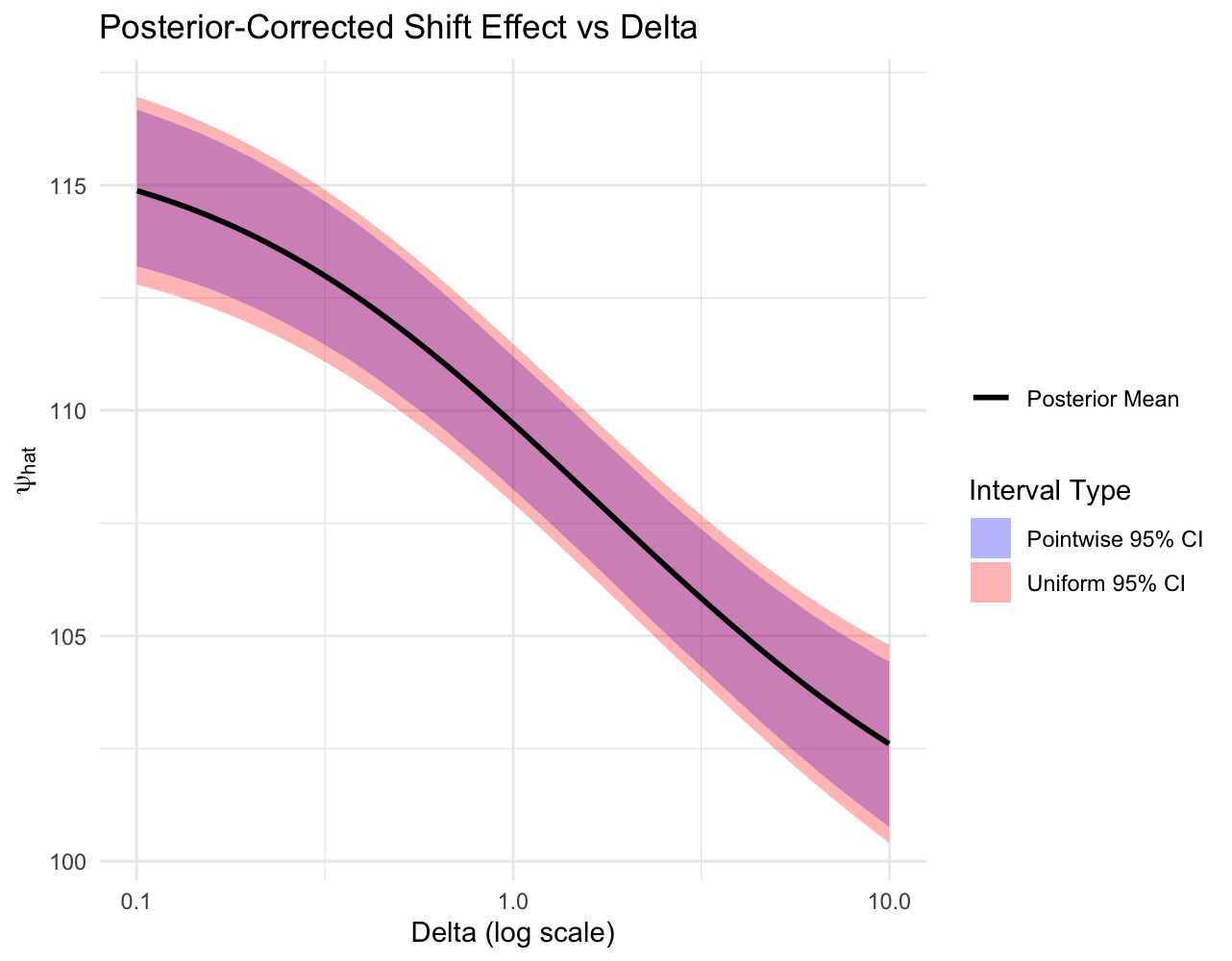} 
\caption{Posterior-corrected IPSI estimates of the effect of a hypothetical shift in statin treatment probability on LDL levels. The x-axis shows the magnitude of the shift in odds of receiving a statin ($\delta$, log scale), and the y-axis shows the corresponding estimated change in LDL ($\hat{\psi}$). The solid line represents the posterior-corrected mean estimate, while the shaded areas indicate pointwise (blue) and uniform (red) 95\% credible intervals.} \label{fig:pc-shift-plot}
\end{figure}
\FloatBarrier

From Figure~\ref{fig:pc-shift-plot}, we observe an approximately linear and monotonic relationship between increases in an individual's conditional likelihood of receiving statin therapy and decreases in expected LDL cholesterol, suggesting that progressively greater shifts toward treatment correspond to progressively lower population-average LDL levels. Moving from policies that substantially reduce treatment odds to those that increase treatment odds corresponds to an estimated reduction in expected LDL of approximately 13 mg/dL. This effect is smaller than reductions reported in meta-analyses of statin clinical trials, where treatment commonly lowers LDL cholesterol by approximately 39 mg/dL (1 mmol/L) on average \citep{ctt2010ldl}. However, it can be argued that this difference is expected, as our intervention reflects realistic changes in prescribing behavior in a general adult population, whereas clinical trials typically study selected higher-risk individuals under deterministic treatment assignment.

Although the relationship between statins and LDL cholesterol is well established, our analysis demonstrates how a realistic policy change by decision makers will alter the population-level expected LDL. Furthermore, the use of Bayesian methods allow us to easily obtain credible intervals for $\psi(P_0;Q)$, which both have a direct probabilistic interpretation and are asymptotically equivalent to frequentist confidence intervals under regularity conditions. Additional details on this analysis are included in the Supplementary Material, with diagnostic plots provided to assess model fit and the validity of the modeling assumptions.

\section{Discussion} \label{sec:Discussion}

In this paper, we develop a calibrated Bayesian framework for causal inference with stochastic intervention effects. Building on the one-step posterior correction of \citet{Yiu2025pcorrect}, we establish sufficient conditions under which the corrected stochastic intervention posterior satisfies a semiparametric Bernstein--von Mises theorem, so that credible intervals retain a Bayesian interpretation while being asymptotically equivalent to efficient frequentist confidence intervals. Our results accommodate both fixed and propensity-score-dependent interventions, including the incremental propensity score intervention \citep{Kennedy2018ipsi} and the power tilt intervention introduced here. A central Bayesian contribution is our theoretical analysis of SoftBART: we derive a new bracketing entropy bound and combine it with posterior contraction theory to verify the empirical process and nuisance-rate conditions required by the Bernstein-von Mises theorem under explicit smoothness and sieve conditions. This connects an abstract semiparametric posterior correction to a flexible and widely applicable Bayesian nonparametric prior, while also providing a template for studying other priors and functionals. In simulations, the corrected posterior generally reduced bias and improved coverage relative to the plug-in posterior, particularly when the nuisance functions were difficult to estimate. The NHANES analysis further illustrates how the method can produce policy-relevant intervention curves, finding that increases in the odds of receiving statin therapy were associated with lower expected LDL cholesterol.

Several limitations delineate the scope of the present results and motivate future work. First, Theorem~\ref{thm:bvm_single_time} is pointwise in the intervention index $\delta$. Although we investigate uniform credible bands empirically, a functional Bernstein-von Mises theorem that justifies simultaneous inference over $\{Q_\delta\}_{\delta \in D}$ remains to be established. Second, the applied analysis uses complete records, and the current implementation does not automatically propagate uncertainty from a separate missing covariate imputation procedure through the Bayesian bootstrap correction \citep{linero2023bnpcause}. This motivates joint models for missingness and the nuisance functions. Broader extensions include considerations of continuous treatments \citep{schindl2026incremental} and stochastic policies for Markov decision processes \citep{kallus2024natural}.

\begin{acks}[Acknowledgments]
The authors would like to thank Sanvesh Srivastava for helpful discussions about this work.
\end{acks}




\bibliographystyle{ba}
\bibliography{refs}

@inproceedings{alaa2017itegp,
  author    = {Alaa, Ahmed M. and van der Schaar, Mihaela},
  title     = {{Bayesian} inference of individualized treatment effects using multi-task {Gaussian} processes},
  booktitle = {Proceedings of the 31st International Conference on Neural Information Processing Systems},
  pages     = {3427--3435},
  year      = {2017},
  publisher = {Curran Associates Inc.},
  address   = {Red Hook, NY, USA},
  location  = {Long Beach, California, USA},
  series    = {NIPS'17}
}

@article{castillo2015bvm,
  author  = {Castillo, Isma{\"e}l and Rousseau, Judith},
  title   = {A {Bernstein--von Mises} Theorem for Smooth Functionals in Semiparametric Models},
  journal = {The Annals of Statistics},
  volume  = {43},
  number  = {6},
  pages   = {2353--2383},
  year    = {2015},
  month   = dec,
  doi     = {10.1214/15-AOS1336},
}

@article{castillo2021uqcart,
  author  = {Castillo, Isma{\"e}l and Ro{\v{c}}kov{\'a}, Veronika},
  title   = {Uncertainty Quantification for {Bayesian CART}},
  journal = {The Annals of Statistics},
  volume  = {49},
  number  = {6},
  pages   = {3482--3509},
  year    = {2021},
  month   = dec,
  url     = {10.1214/21-AOS2093}
}

@article{Chen2025,
  author  = {Chen, Xinyuan and Hu, Liangyuan and Li, Fan},
  title   = {A flexible {Bayesian} g-formula for causal survival analyses with time-dependent confounding},
  journal = {Lifetime Data Analysis},
  year    = {2025},
  volume  = {31},
  pages   = {394--421},
  doi     = {10.1007/s10985-025-09652-3}
}

@article{chipman2010bart,
  author  = {Chipman, Hugh A. and George, Edward I. and McCulloch, Robert E.},
  title   = {BART: Bayesian Additive Regression Trees},
  journal = {The Annals of Applied Statistics},
  volume  = {4},
  number  = {1},
  pages   = {266--298},
  year    = {2010},
  month   = mar,
  doi     = {10.1214/09-AOAS285},
  url     = {https://doi.org/10.1214/09-AOAS285}
}

@article{collins2016statin,
  author  = {Collins, Rory and Reith, Christina and Emberson, Jonathan and Armitage, Jane and Baigent, Colin and Blackwell, Lisa and Blumenthal, Roger and Danesh, John and Smith, George Davey and DeMets, David and Evans, Stephen and Law, Malcolm and MacMahon, Stephen and Martin, Sonal and Neal, Bruce and Poulter, Neil and Sandercock, Peter and Schulz, Kenneth and Sever, Peter and Simes, John and Smeeth, Liam and Wald, Nicholas and Yusuf, Salim and Peto, Richard},
  title   = {Interpretation of the evidence for the efficacy and safety of statin therapy},
  journal = {The Lancet},
  volume  = {388},
  number  = {10059},
  pages   = {2532--2561},
  year    = {2016},
  month   = nov,
  doi     = {10.1016/S0140-6736(16)31357-5},
  url     = {https://doi.org/10.1016/S0140-6736(16)31357-5}
}

@article{Comment2022,
  author  = {Comment, L. and Coull, B. A. and Zigler, C. and Valeri, L.},
  title   = {{Bayesian} data fusion: Probabilistic sensitivity analysis for unmeasured confounding using informative priors based on secondary data},
  journal = {Biometrics},
  year    = {2022},
  volume  = {78},
  number  = {2},
  pages   = {730--741},
  month   = jun,
  doi     = {10.1111/biom.13436},
  pmid    = {33527348},
  pmcid   = {PMC8326294},
  note    = {Epub 2021 Feb 16}
}

@article{ctt2010ldl,
  author  = {{Cholesterol Treatment Trialists' (CTT) Collaboration} and Baigent, Colin and Blackwell, Lisa and Emberson, Jonathan and Holland, Louise E. and Reith, Christine and Bhala, Neeraj and Peto, Richard and Barnes, Elisabeth H. and Keech, Anthony and Simes, John and Collins, Rory},
  title   = {Efficacy and Safety of More Intensive Lowering of {LDL} Cholesterol: A Meta-Analysis of Data from 170,000 Participants in 26 Randomised Trials},
  journal = {The Lancet},
  volume  = {376},
  number  = {9753},
  pages   = {1670--1681},
  year    = {2010},
  month   = nov,
  doi     = {10.1016/S0140-6736(10)61350-5},
  url     = {https://doi.org/10.1016/S0140-6736(10)61350-5}
}

@book{daniels2023bnp,
  author    = {Daniels, Michael J. and Linero, Antonio and Roy, Jason},
  title     = {{Bayesian} Nonparametrics for Causal Inference and Missing Data},
  publisher = {Chapman and Hall/CRC},
  edition   = {1},
  year      = {2023},
  month     = aug,
  doi       = {10.1201/9780429324222},
  url       = {https://doi.org/10.1201/9780429324222}
}

@article{diaconis1986consistency,
  author  = {Diaconis, Persi and Freedman, David},
  title   = {On the Consistency of {Bayes} Estimates},
  journal = {The Annals of Statistics},
  volume  = {14},
  number  = {1},
  pages   = {1--26},
  year    = {1986},
  month   = mar,
  doi     = {10.1214/aos/1176349830},
  url     = {https://doi.org/10.1214/aos/1176349830}
}

@article{diaz2020causal,
  author  = {D{\'\i}az, Iv{\'a}n and Hejazi, Nima S.},
  title   = {Causal Mediation Analysis for Stochastic Interventions},
  journal = {Journal of the Royal Statistical Society: Series B (Statistical Methodology)},
  volume  = {82},
  number  = {3},
  pages   = {661--683},
  year    = {2020},
  month   = jul,
  doi     = {10.1111/rssb.12362},
  url     = {https://doi.org/10.1111/rssb.12362}
}

@inproceedings{didelez2006direct,
  author    = {Didelez, Vanessa and Dawid, A. Philip and Geneletti, Sara},
  title     = {Direct and Indirect Effects of Sequential Treatments},
  booktitle = {Proceedings of the Twenty-Second Conference on Uncertainty in Artificial Intelligence},
  series    = {UAI'06},
  pages     = {138--146},
  year      = {2006},
  publisher = {AUAI Press},
  address   = {Arlington, Virginia, USA},
  location  = {Cambridge, MA, USA},
  isbn      = {0974903922}
}

@article{dorie2019competition,
  author  = {Dorie, Vincent and Hill, Jennifer and Shalit, Uri and Scott, Marc and Cervone, Dan},
  title   = {Automated versus Do-It-Yourself Methods for Causal Inference: Lessons Learned from a Data Analysis Competition},
  journal = {Statistical Science},
  volume  = {34},
  number  = {1},
  pages   = {43--68},
  year    = {2019},
  month   = feb,
  doi     = {10.1214/18-STS667},
  url     = {https://doi.org/10.1214/18-STS667}
}

@article{egels2025posterior,
  author  = {Egels, Paul and Castillo, Isma{\"e}l},
  title   = {Posterior and Variational Inference for Deep Neural Networks with Heavy-Tailed Weights},
  journal = {Journal of Machine Learning Research},
  year    = {2025},
  volume  = {26},
  number  = {122},
  pages   = {1--58},
  url     = {http://jmlr.org/papers/v26/24-0894.html}
}

@article{friedman2001gbm,
  author  = {Friedman, Jerome H.},
  title   = {Greedy Function Approximation: A Gradient Boosting Machine},
  journal = {The Annals of Statistics},
  volume  = {29},
  number  = {5},
  pages   = {1189--1232},
  year    = {2001},
  month   = oct,
  doi     = {10.1214/aos/1013203451},
  url     = {https://doi.org/10.1214/aos/1013203451}
}

@article{ghosal2000convergence,
  author  = {Ghosal, Subhashis and Ghosh, Jayanta K. and van der Vaart, Aad W.},
  title   = {Convergence Rates of Posterior Distributions},
  journal = {The Annals of Statistics},
  volume  = {28},
  number  = {2},
  pages   = {500--531},
  year    = {2000},
  month   = apr,
  doi     = {10.1214/aos/1016218228},
  url     = {https://doi.org/10.1214/aos/1016218228}
}

@article{ghosal2007dirichlet,
  author  = {Ghosal, Subhashis and van der Vaart, Aad},
  title   = {Posterior convergence rates of {Dirichlet} mixtures at smooth densities},
  journal = {The Annals of Statistics},
  volume  = {35},
  number  = {2},
  pages   = {697--723},
  year    = {2007},
  month   = apr,
  doi     = {10.1214/009053606000001271},
  url     = {https://doi.org/10.1214/009053606000001271}
}

@article{ghosal2007noniid,
  author  = {Ghosal, Subhashis and van der Vaart, Aad W.},
  title   = {Convergence Rates of Posterior Distributions for Noniid Observations},
  journal = {The Annals of Statistics},
  volume  = {35},
  number  = {1},
  pages   = {192--223},
  year    = {2007},
  month   = feb,
  doi     = {10.1214/009053606000001172},
  url     = {https://doi.org/10.1214/009053606000001172}
}

@book{ghosal2017fundamentals,
  author    = {Ghosal, Subhashis and van der Vaart, Aad W.},
  title     = {Fundamentals of Nonparametric Bayesian Inference},
  publisher = {Cambridge University Press},
  year      = {2017}
}

@article{hahn2018regularization,
  author  = {Hahn, P. Richard and Carvalho, Carlos M. and Puelz, David and He, Jingyu},
  title   = {Regularization and Confounding in Linear Regression for Treatment Effect Estimation},
  journal = {Bayesian Analysis},
  volume  = {13},
  number  = {1},
  pages   = {163--182},
  year    = {2018},
  month   = mar,
  doi     = {10.1214/16-BA1044},
  url     = {https://doi.org/10.1214/16-BA1044}
}

@article{hahn2020bcf,
  author  = {Hahn, P. Richard and Murray, Jared S. and Carvalho, Carlos M.},
  title   = {Bayesian Regression Tree Models for Causal Inference: Regularization, Confounding, and Heterogeneous Effects (with Discussion)},
  journal = {Bayesian Analysis},
  volume  = {15},
  number  = {3},
  pages   = {965--1056},
  year    = {2020},
  month   = sep,
  doi     = {10.1214/19-BA1195},
  url     = {https://doi.org/10.1214/19-BA1195}
}

@article{haneuse2013modified,
  author  = {Haneuse, Sebastian and Rotnitzky, Andrea},
  title   = {Estimation of the Effect of Interventions That Modify the Received Treatment},
  journal = {Statistics in Medicine},
  volume  = {32},
  number  = {30},
  pages   = {5260--5277},
  year    = {2013},
  month   = dec,
  doi     = {10.1002/sim.5907},
  url     = {https://doi.org/10.1002/sim.5907}
}

@article{hejazi2023mediation,
  author  = {Hejazi, Nima S. and Rudolph, Katie E. and Van Der Laan, Mark J. and D{\'\i}az, Ismael},
  title   = {Nonparametric Causal Mediation Analysis for Stochastic Interventional (In)Direct Effects},
  journal = {Biostatistics},
  volume  = {24},
  number  = {3},
  pages   = {686--707},
  year    = {2023},
  month   = jul,
  doi     = {10.1093/biostatistics/kxac002},
  url     = {https://doi.org/10.1093/biostatistics/kxac002}
}

@article{hill2011bnp,
  author  = {Hill, Jennifer L.},
  title   = {Bayesian Nonparametric Modeling for Causal Inference},
  journal = {Journal of Computational and Graphical Statistics},
  volume  = {20},
  number  = {1},
  pages   = {217--240},
  year    = {2011},
  doi     = {10.1198/jcgs.2010.08162},
  url     = {https://doi.org/10.1198/jcgs.2010.08162}
}

@article{Huang2026,
    title={Multivariate incremental effects for continuous treatments: Studying the health effects of environmental mixtures}, 
    author={Zhuochao Huang and Kejin Dong and Tuo Lin and Joseph Antonelli},
    year={2026},
    journal = {arXiv preprint arXiv:2604.23534},
    doi     = {10.48550/arXiv.2604.23534},
    url     = {https://doi.org/10.48550/arXiv.2604.23534}
}

@article{Josefsson2026,
    author = {Josefsson, Maria and Karalija, Nina and Daniels, Michael J},
    title = {Long-term memory effects of an incremental blood pressure intervention in a mortal cohort},
    journal = {Biometrics},
    volume = {82},
    number = {1},
    pages = {ujaf176},
    year = {2026},
    month = {03},
    issn = {0006-341X},
    doi = {10.1093/biomtc/ujaf176},
    url = {https://doi.org/10.1093/biomtc/ujaf176}
}

@article{kallus2024natural,
  author  = {Kallus, Nathan and Uehara, Masatoshi},
  title   = {Efficient Evaluation of Natural Stochastic Policies in Off-Line Reinforcement Learning},
  journal = {Biometrika},
  volume  = {111},
  number  = {1},
  pages   = {51--69},
  year    = {2024},
  month   = mar,
  doi     = {10.1093/biomet/asad059},
  url     = {https://doi.org/10.1093/biomet/asad059}
}

@article{Kang2007dr,
  author  = {Kang, Joseph D. Y. and Schafer, Joseph L.},
  title   = {Demystifying Double Robustness: A Comparison of Alternative Strategies for Estimating a Population Mean from Incomplete Data},
  journal = {Statistical Science},
  volume  = {22},
  number  = {4},
  pages   = {523--539},
  year    = {2007},
  doi     = {10.1214/07-STS227},
  url     = {https://doi.org/10.1214/07-STS227}
}

@article{Kennedy2018ipsi,
  author  = {Kennedy, Edward H.},
  title   = {Nonparametric Causal Effects Based on Incremental Propensity Score Interventions},
  journal = {Journal of the American Statistical Association},
  volume  = {114},
  number  = {526},
  pages   = {645--656},
  year    = {2019},
  doi     = {10.1080/01621459.2017.1422737},
  url     = {https://doi.org/10.1080/01621459.2017.1422737}
}

@incollection{kennedy2024semiparametric,
  author       = {Kennedy, Edward H.},
  title        = {Semiparametric Doubly Robust Targeted Double Machine Learning: A Review},
  booktitle    = {Handbook of Statistical Methods for Precision Medicine},
  editor       = {Laber, Eric and Chakraborty, Bibhas and Moodie, Erica E. M. and Cai, Tianxi and van der Laan, Mark},
  publisher    = {Chapman \& Hall/CRC},
  year         = {2024},
  pages        = {207--235},  
  isbn         = {9781032106151},
}

@article{Keil2018,
  author  = {Keil, Alexander P. and Daza, Eric J. and Engel, Stephanie M. and Buckley, Jessie P. and Edwards, Jessie K.},
  title   = {A {Bayesian} approach to the g-formula},
  journal = {Statistical Methods in Medical Research},
  year    = {2018},
  volume  = {27},
  number  = {10},
  pages   = {3183--3204},
  doi     = {10.1177/0962280217694665}
}

@article{law2003statins,
  author  = {Law, M. R. and Wald, N. J. and Rudnicka, A. R.},
  title   = {Quantifying effect of statins on low density lipoprotein cholesterol, ischaemic heart disease, and stroke: systematic review and meta-analysis},
  journal = {BMJ},
  volume  = {326},
  number  = {7404},
  pages   = {1423},
  year    = {2003},
  month   = jun,
  doi     = {10.1136/bmj.326.7404.1423},
  pmid    = {12829554},
  pmcid   = {PMC162260},
  url     = {https://doi.org/10.1136/bmj.326.7404.1423}
}

@article{Linero2018brte,
  author    = {Linero, Antonio R. and Yang, Yun},
  title     = {Bayesian Regression Tree Ensembles that Adapt to Smoothness and Sparsity},
  journal   = {Journal of the Royal Statistical Society: Series B (Statistical Methodology)},
  volume    = {80},
  number    = {5},
  pages     = {1087--1110},
  year      = {2018},
  month     = nov,
  doi       = {10.1111/rssb.12293},
  url       = {https://doi.org/10.1111/rssb.12293}
}

@article{linero2022softbart,
  author  = {Linero, Antonio R.},
  title   = {{SoftBart}: {Soft Bayesian} Additive Regression Trees},
  journal = {arXiv preprint arXiv:2210.16375},
  year    = {2022},
  month   = oct,
  doi     = {10.48550/arXiv.2210.16375},
  url     = {https://arxiv.org/abs/2210.16375}
}

@article{linero2024,
    author = {Antonio R. Linero},
    title = {In Nonparametric and High-Dimensional Models, Bayesian Ignorability is an Informative Prior},
    journal = {Journal of the American Statistical Association},
    volume = {119},
    number = {548},
    pages = {2785--2798},
    year = {2024},
    publisher = {Taylor \& Francis},
    doi = {10.1080/01621459.2023.2278202}
}

@article{linero2023bnpcause,
  author  = {Linero, Antonio R. and Antonelli, Joseph L.},
  title   = {The How and Why of Bayesian Nonparametric Causal Inference},
  journal = {WIREs Computational Statistics},
  volume  = {15},
  number  = {1},
  pages   = {e1583},
  year    = {2023},
  doi     = {10.1002/wics.1583},
  url     = {https://doi.org/10.1002/wics.1583}
}

@article{little2011calibrated,
  author  = {Little, Roderick J. A.},
  title   = {Calibrated Bayes, for Statistics in General, and Missing Data in Particular},
  journal = {Statistical Science},
  volume  = {26},
  number  = {2},
  pages   = {162--174},
  year    = {2011},
  month   = may,
  doi     = {10.1214/10-STS318},
  url     = {https://doi.org/10.1214/10-STS318}
}

@article{li2023bayesian,
  author  = {Li, Fan and Ding, Peng and Mealli, Fabrizia},
  title   = {Bayesian Causal Inference: A Critical Review},
  journal = {Philosophical Transactions of the Royal Society A: Mathematical, Physical and Engineering Sciences},
  volume  = {381},
  number  = {2247},
  pages   = {20220153},
  year    = {2023},
  month   = may,
  doi     = {10.1098/rsta.2022.0153},
  url     = {https://doi.org/10.1098/rsta.2022.0153}
}

@article{Diaz2012,
  author  = {D{\'i}az, Iv{\'a}n and van der Laan, Mark J.},
  title   = {Population Intervention Causal Effects Based on Stochastic Interventions},
  journal = {Biometrics},
  volume  = {68},
  number  = {2},
  pages   = {541--549},
  year    = {2012},
  month   = jun,
  doi     = {10.1111/j.1541-0420.2011.01685.x},
  url     = {https://doi.org/10.1111/j.1541-0420.2011.01685.x}
}

@article{Diaz2013,
  author  = {D{\'i}az, Iv{\'a}n and van der Laan, Mark J.},
  title   = {Assessing the Causal Effect of Policies: An Example Using Stochastic Interventions},
  journal = {The International Journal of Biostatistics},
  volume  = {9},
  number  = {2},
  pages   = {161--174},
  year    = {2013}
}

@techreport{newey1998,
  author      = {Newey, Whitney K. and Hsieh, Fushing and Robins, James},
  title       = {Undersmoothing and Bias Corrected Functional Estimation},
  institution = {Massachusetts Institute of Technology, Department of Economics},
  type        = {Working Paper},
  number      = {98-17},
  year        = {1998},
  month       = oct,
  url         = {https://ideas.repec.org/p/mit/worpap/98-17.html},
  note        = {First version October 1991; revised July 1998}
}

@article{oganisian2021bayesian,
  author  = {Oganisian, Alexis and Roy, Jon A.},
  title   = {A Practical Introduction to Bayesian Estimation of Causal Effects: Parametric and Nonparametric Approaches},
  journal = {Statistics in Medicine},
  volume  = {40},
  number  = {2},
  pages   = {518--551},
  year    = {2021},
  month   = jan,
  doi     = {10.1002/sim.8761},
  url     = {https://doi.org/10.1002/sim.8761}
}

@article{Ogburn2024,
  author  = {Ogburn, Elizabeth L. and Sofrygin, Oleg and D{\'i}az, Iv{\'a}n and van der Laan, Mark J.},
  title   = {Causal Inference for Social Network Data},
  journal = {Journal of the American Statistical Association},
  volume  = {119},
  number  = {545},
  pages   = {597--611},
  year    = {2024},
  doi     = {10.1080/01621459.2022.2131557}
}

@article{Papadogeorgou2019,
author = {Papadogeorgou, Georgia and Mealli, Fabrizia and Zigler, Corwin M.},
title = {Causal inference with interfering units for cluster and population level treatment allocation programs},
journal = {Biometrics},
volume = {75},
number = {3},
pages = {778-787},
doi = {10.1111/biom.13049},
url = {https://onlinelibrary.wiley.com/doi/abs/10.1111/biom.13049},
year = {2019}
}

@article{Papadogeorgou2022,
author = {Papadogeorgou, Georgia and Imai, Kosuke and Lyall, Jason and Li, Fan},
title = {Causal inference with spatio-temporal data: Estimating the effects of airstrikes on insurgent violence in Iraq},
journal = {Journal of the Royal Statistical Society: Series B (Statistical Methodology)},
volume = {84},
number = {5},
pages = {1969-1999},
doi = {10.1111/rssb.12548},
url = {https://rss.onlinelibrary.wiley.com/doi/abs/10.1111/rssb.12548},
year = {2022}
}

@book{pfanzagl1982,
  author    = {Pfanzagl, Johann},
  title     = {Contributions to a General Asymptotic Statistical Theory},
  series    = {Lecture Notes in Statistics},
  volume    = {13},
  publisher = {Springer},
  address   = {New York, NY},
  year      = {1982},
  edition   = {1},
  pages     = {315},
  isbn      = {978-0-387-90776-5},
  doi       = {10.1007/978-1-4612-5769-1}
}

@inproceedings{ray2019debiased,
  author    = {Ray, Kolyan and Szabo, Botond},
  title     = {Debiased Bayesian Inference for Average Treatment Effects},
  booktitle = {Advances in Neural Information Processing Systems},
  volume    = {32},
  year      = {2019},
  publisher = {Curran Associates, Inc.},
  url       = {https://proceedings.neurips.cc/paper_files/paper/2019/file/342285bb2a8cadef22f667eeb6a63732-Paper.pdf}
}

@article{ray2020semiparametric,
  author  = {Ray, Kolyan and van der Vaart, Aad W.},
  title   = {Semiparametric Bayesian Causal Inference},
  journal = {The Annals of Statistics},
  volume  = {48},
  number  = {5},
  pages   = {2999--3020},
  year    = {2020},
  month   = oct,
  doi     = {10.1214/19-AOS1919},
  url     = {https://doi.org/10.1214/19-AOS1919}
}

@article{robins1986gformula,
  author  = {Robins, James},
  title   = {A New Approach to Causal Inference in Mortality Studies with a Sustained Exposure Period—Application to Control of the Healthy Worker Survivor Effect},
  journal = {Mathematical Modelling},
  volume  = {7},
  number  = {9--12},
  pages   = {1393--1512},
  year    = {1986},
  doi     = {10.1016/0270-0255(86)90088-6},
  url     = {https://doi.org/10.1016/0270-0255(86)90088-6}
}

@incollection{Robins2004,
  author    = {Robins, James M. and Hern{\'a}n, Miguel A. and Siebert, Uwe},
  title     = {Effects of Multiple Interventions},
  booktitle = {Comparative Quantification of Health Risks: Global and Regional Burden of Disease Attributable to Selected Major Risk Factors},
  editor    = {Ezzati, Majid and Lopez, Alan D. and Rodgers, Anthony and Murray, Christopher J. L.},
  volume    = {1},
  pages     = {2191--2230},
  publisher = {World Health Organization},
  address   = {Geneva},
  year      = {2004}
}

@inproceedings{rockova2019theory,
  author    = {Ro{\v{c}}kov{\'a}, Veronika and Saha, Enakshi},
  title     = {On Theory for {BART}},
  booktitle = {Proceedings of the Twenty-Second International Conference on Artificial Intelligence and Statistics},
  series    = {Proceedings of Machine Learning Research},
  volume    = {89},
  pages     = {2839--2848},
  year      = {2019},
  publisher = {PMLR},
  url       = {http://proceedings.mlr.press/v89/rockova19a.html}
}

@article{rousseau2016frequentist,
  author  = {Rousseau, Judith},
  title   = {On the Frequentist Properties of Bayesian Nonparametric Methods},
  journal = {Annual Review of Statistics and Its Application},
  volume  = {3},
  number  = {1},
  pages   = {211--231},
  year    = {2016},
  doi     = {10.1146/annurev-statistics-041715-033523},
  url     = {https://doi.org/10.1146/annurev-statistics-041715-033523}
}

@article{roy2018bnpmar,
  author  = {Roy, Jason and Lum, Kristian J. and Zeldow, Blair and Dworkin, Jordan D. and Re, Vincent Lo and Daniels, Michael J.},
  title   = {{Bayesian} nonparametric generative models for causal inference with missing at random covariates},
  journal = {Biometrics},
  volume  = {74},
  number  = {4},
  pages   = {1193--1202},
  year    = {2018},
  month   = dec,
  doi     = {10.1111/biom.12875},
  pmid    = {29579341},
  pmcid   = {PMC7568223},
  note    = {Epub 2018 Mar 26}
}

@article{rubin1974causal,
  author  = {Rubin, Donald B.},
  title   = {Estimating Causal Effects of Treatments in Randomized and Nonrandomized Studies},
  journal = {Journal of Educational Psychology},
  volume  = {66},
  number  = {5},
  pages   = {688--701},
  year    = {1974},
  doi     = {10.1037/h0037350},
  url     = {https://doi.org/10.1037/h0037350}
}

@article{rubin1981bayesianbootstrap,
  author  = {Rubin, Donald B.},
  title   = {The Bayesian Bootstrap},
  journal = {The Annals of Statistics},
  volume  = {9},
  number  = {1},
  pages   = {130--134},
  year    = {1981},
  doi     = {10.1214/aos/1176345338},
  url     = {https://doi.org/10.1214/aos/1176345338}
}

@inproceedings{sani2020shift,
  author    = {Sani, Numair and Lee, Jaron and Shpitser, Ilya},
  title     = {Identification and Estimation of Causal Effects Defined by Shift Interventions},
  booktitle = {Proceedings of the 36th Conference on Uncertainty in Artificial Intelligence},
  series    = {Proceedings of Machine Learning Research},
  volume    = {124},
  pages     = {949--958},
  year      = {2020},
  publisher = {PMLR},
  url       = {https://proceedings.mlr.press/v124/sani20a.html}
}

@misc{schindl2026incremental,
  author        = {Schindl, Kyle and Shen, Shuying and Kennedy, Edward H.},
  title         = {Incremental effects for continuous exposures},
  year          = {2026},
  eprint        = {2409.11967},
  archivePrefix = {arXiv},
  primaryClass  = {stat.ME},
  url           = {https://arxiv.org/abs/2409.11967}
}

@article{shen2013dirichlet,
  author  = {Shen, Weining and Tokdar, Surya T. and Ghosal, Subhashis},
  title   = {Adaptive {Bayesian} multivariate density estimation with {Dirichlet} mixtures},
  journal = {Biometrika},
  volume  = {100},
  number  = {3},
  pages   = {623--640},
  year    = {2013},
  month   = sep,
  doi     = {10.1093/biomet/ast015},
  url     = {https://doi.org/10.1093/biomet/ast015}
}

@article{sparapani2021bart,
  author  = {Sparapani, Rodney and Spanbauer, Charles and McCulloch, Robert},
  title   = {Nonparametric Machine Learning and Efficient Computation with {Bayesian} Additive Regression Trees: The {BART} {R} Package},
  journal = {Journal of Statistical Software},
  volume  = {97},
  number  = {1},
  pages   = {1--66},
  year    = {2021},
  month   = mar,
  doi     = {10.18637/jss.v097.i01},
  url     = {https://doi.org/10.18637/jss.v097.i01}
}

@article{stock1989nonparametric,
  author  = {Stock, James H.},
  title   = {Nonparametric Policy Analysis},
  journal = {Journal of the American Statistical Association},
  volume  = {84},
  number  = {406},
  pages   = {567--575},
  year    = {1989},
  doi     = {10.2307/2289944},
  url     = {https://doi.org/10.2307/2289944}
}

@article{stone2014accaha,
  author  = {Stone, Neil J. and Robinson, Jennifer G. and Lichtenstein, Alice H. and Bairey Merz, C. Noel and Blum, Conrad B. and Eckel, Robert H. and Goldberg, Anne C. and Gordon, David and Levy, Daniel and Lloyd-Jones, Donald M. and McBride, Patrick and Schwartz, J. Sanford and Shero, Susan T. and Smith, Sidney C. and Watson, Karol and Wilson, Peter W. F.},
  title   = {2013 {ACC/AHA} Guideline on the Treatment of Blood Cholesterol to Reduce Atherosclerotic Cardiovascular Risk in Adults},
  journal = {Circulation},
  volume  = {129},
  number  = {25\_suppl\_2},
  pages   = {S1--S45},
  year    = {2014},
  month   = jun,
  doi     = {10.1161/01.cir.0000437738.63853.7a},
  url     = {https://www.ahajournals.org/doi/abs/10.1161/01.cir.0000437738.63853.7a}
}

@article{Tang2025,
    author = {Tang, Thai-Son and Liu, Zhihui and Hosni, Ali and Kim, John and Saarela, Olli},
    title = {A marginal structural model for normal tissue complication probability},
    journal = {Biostatistics},
    volume = {26},
    number = {1},
    pages = {kxae019},
    year = {2025},
    month = {01},
    issn = {1465-4644},
    doi = {10.1093/biostatistics/kxae019},
    url = {https://doi.org/10.1093/biostatistics/kxae019}
}

@inproceedings{Tian2008,
  author    = {Tian, Jin},
  title     = {Identifying Dynamic Sequential Plans},
  booktitle = {Proceedings of the 24th Conference on Uncertainty in Artificial Intelligence},
  pages     = {554--561},
  publisher = {Association for Uncertainty in Artificial Intelligence Press},
  year      = {2008}
}

@book{tsiatis2006semiparametric,
  author    = {Tsiatis, Anastasios A.},
  title     = {Semiparametric Theory and Missing Data},
  publisher = {Springer},
  year      = {2006},
  doi       = {10.1007/0-387-37345-4},
  url       = {https://doi.org/10.1007/0-387-37345-4}
}

@article{uspstf2022statins,
  author  = {{US Preventive Services Task Force}},
  title   = {Statin Use for the Primary Prevention of Cardiovascular Disease in Adults: {US} Preventive Services Task Force Recommendation Statement},
  journal = {JAMA},
  volume  = {328},
  number  = {8},
  pages   = {746--753},
  year    = {2022},
  month   = aug,
  doi     = {10.1001/jama.2022.13044},
  url     = {https://doi.org/10.1001/jama.2022.13044}
}

@book{vandervaartwellner2023weak,
  author    = {van der Vaart, Aad W. and Wellner, Jon A.},
  title     = {Weak Convergence and Empirical Processes: With Applications to Statistics},
  series    = {Springer Series in Statistics},
  publisher = {Springer Cham},
  year      = {2023},
  doi       = {10.1007/978-3-031-29040-4},
  url       = {https://doi.org/10.1007/978-3-031-29040-4}
}

@book{vandervaart1998asymptotic,
  author    = {van der Vaart, Aad W.},
  title     = {Asymptotic Statistics},
  publisher = {Cambridge University Press},
  year      = {1998}
}

@article{vandervaart2008gp,
  author  = {van der Vaart, A. W. and van Zanten, J. H.},
  title   = {Rates of contraction of posterior distributions based on {Gaussian} process priors},
  journal = {The Annals of Statistics},
  volume  = {36},
  number  = {3},
  pages   = {1435--1463},
  year    = {2008},
  month   = jun,
  doi     = {10.1214/009053607000000613},
  url     = {https://doi.org/10.1214/009053607000000613}
}

@article{vanderlaan2007superlearner,
  author  = {van der Laan, Mark J. and Polley, Eric C. and Hubbard, Alan E.},
  title   = {Super Learner},
  journal = {Statistical Applications in Genetics and Molecular Biology},
  volume  = {6},
  number  = {1},
  pages   = {Article 25},
  year    = {2007},
  month   = sep,
  doi     = {10.2202/1544-6115.1309},
  url     = {https://doi.org/10.2202/1544-6115.1309}
}

@misc{wang2026higherorder,
  title  = {Higher-Order Efficient Estimators: A Review and Simulation-Based Benchmark Study},
  author = {Wang, Zeyi and van der Laan, Mark J.},
  year   = {2026},
  eprint = {2606.01674},
  archivePrefix = {arXiv},
  primaryClass  = {stat.ME},
  doi    = {10.48550/arXiv.2606.01674},
  url    = {https://arxiv.org/abs/2606.01674}
}

@article{Wu2024,
    author = {Wu, Xiao and Weinberger, Kate R and Wellenius, Gregory A and Dominici, Francesca and Braun, Danielle},
    title = {Assessing the causal effects of a stochastic intervention in time series data: are heat alerts effective in preventing deaths and hospitalizations?},
    journal = {Biostatistics},
    volume = {25},
    number = {1},
    pages = {57-79},
    year = {2024},
    month = {01},
    issn = {1465-4644},
    doi = {10.1093/biostatistics/kxad002},
    url = {https://doi.org/10.1093/biostatistics/kxad002}
}

@article{yang2015minimax,
  author  = {Yang, Yun and Tokdar, Surya T.},
  title   = {Minimax-optimal Nonparametric Regression in High Dimensions},
  journal = {The Annals of Statistics},
  volume  = {43},
  number  = {2},
  pages   = {652--674},
  year    = {2015},
  month   = apr,
  doi     = {10.1214/14-AOS1289},
  url     = {https://doi.org/10.1214/14-AOS1289}
}

@article{Yiu2025pcorrect,
  author  = {Yiu, Andrew and Fong, Edwin and Holmes, Chris and Rousseau, Judith},
  title   = {Semiparametric posterior corrections},
  journal = {Journal of the Royal Statistical Society Series B: Statistical Methodology},
  volume  = {87},
  number  = {4},
  pages   = {1025--1054},
  year    = {2025},
  doi     = {10.1093/jrsssb/qkaf005},
  url     = {https://doi.org/10.1093/jrsssb/qkaf005}
}

@article{Young2011,
  author  = {Young, Jessica G. and Cain, Lauren E. and Robins, James M. and O'Reilly, Eilis J. and Hern{\'a}n, Miguel A.},
  title   = {Comparative Effectiveness of Dynamic Treatment Regimes: An Application of the Parametric G-Formula},
  journal = {Statistics in Biosciences},
  volume  = {3},
  number  = {1},
  pages   = {119--143},
  year    = {2011},
  doi     = {10.1007/s12561-011-9040-7}
}

@article{Young2014,
  author  = {Young, Jessica G. and Hern{\'a}n, Miguel A. and Robins, James M.},
  title   = {Identification, Estimation and Approximation of Risk under Interventions That Depend on the Natural Value of Treatment Using Observational Data},
  journal = {Epidemiologic Methods},
  volume  = {3},
  number  = {1},
  pages   = {1--19},
  year    = {2014}
}
\clearpage

\section{Supplementary Material} 
\begingroup
\setcounter{tocdepth}{2} 
\renewcommand{\contentsname}{Table of Contents}
\tableofcontents
\endgroup

\clearpage
\subsection{Notation List}
\centerline{\bf Data Generating Process}
\bgroup
\def\arraystretch{1.5}
\begin{tabular}{p{1.25in}p{3.25in}}
$ Z_i$ & $i$th observation: $(X_1, A_1, X_2, A_2, \ldots, 
    X_T, A_T, Y)$\\
$X_t$ & Covariates at time $t$\\
$A_t$ & Binary treatment at time $t$\\
$Y$ & Outcome\\
$\Bar{X}_t$ & Past covariates: $ (X_1, \ldots, X_t)$\\
$\Bar{A}_t$ & Past treatments: $ (A_1, \ldots, A_t)$\\
$H_t$ & History: $(\Bar{X}_t, \Bar{A}_{t-1})$
\end{tabular}
\egroup
\vspace{0.25cm}

\centerline{\bf Measure Theory \& Analysis}
\bgroup
\def\arraystretch{1.5}
\noindent
\begin{tabular}{p{1.25in}p{3.25in}}
$\mathcal{P}$ & Collection of probability measures \\
$ P_0$ & True distribution with $P_0 \in \mathcal{P}$ \\
$P(B)$ & Probability of some event $B$ \\
$P[f]$ & Expectation of $f$ under $P \in \mathcal{P}: \int f(z) dP(z)$ \\
$\mathbb{P}_n[f]$ & Empirical measure: $n^{-1}\sum_{i=1}^nf(Z_i)$ \\
$\mathbb{G}_n$ & Empirical process: $\sqrt{n}(\mathbb{P}_n - P_0)$ \\
$\| h \|_{L_2(P)}$ & $L_2$ norm: $\sqrt{P[h^2]}$ \\
$ \Pi(\cdot | Z_1, \ldots, Z_n)$ & Posterior distribution \\
$\Pi_{\text{BB}}(\cdot \mid Z^{(n)})$ & Bayesian bootstrap \\
$N_{[ \text{ }]}\left(\varepsilon, \mathcal{F}_n, L_2(P_0)\right)$ & Bracketing number \\
$J_{ \text{[ ]}}\left(\gamma_n, \mathcal{F}_n, L_2(P_0)\right)$ & Bracketing integral \\
$(\mathcal{S}_n)_n$ & Posterior sieve with: $\Pi(S_n | Z_{1:n}) \xrightarrow{P_0} 1$ \\
$\gamma_n$ & Contraction rate of the sieve
\end{tabular}
\egroup
\vspace{0.25cm}

\centerline{\bf Causal Inference}
\bgroup
\def\arraystretch{1.5}
\noindent
\begin{longtable}{p{1.25in}p{3.25in}}
$Q_{t}(A_t \mid H_t)$ & Stochastic intervention at time $t$ parametrized by $\delta \in \mathcal{D}$\\
$dQ_{t}(P)(A_t \mid H_t)$ & Estimated density of the stochastic intervention at time $t$ \\
$\pi_t(P)(h_t)$ & Estimated propensity score at time $t$:
$P(A_t=1 \mid H_t)$ \\
$d\pi_t(P)$ & Estimated density of treatment given history: $dP(A_t \mid H_t)$ \\
$\mu(P)(x,a)$  & Estimated expected outcome given some history and treatment: $P\left[Y \mid H_T = h_T, A_T = a_T\right]$ \\
$m_t(P)(h_t,a_t)$ & Estimated pseudo-outcome at time $t$ under measure $P$ \\
$\psi(P;Q)$ & Estimated estimand of interest: $P[Y^{Q}]$ \\
$\phi(P; Q)$ & Estimated efficient influence function  \\
$\phi^*(P; Q)$ & Estimated uncentered efficient influence function \\
$\varphi(P; Q)$ & Estimated arbitrary influence function \\
$\chi_t(P;Q)$ & Estimated main component of the efficient influence function of a stochastic intervention \\
$S_{t,n}^\pi$& Sieve for $\pi_t$ with $\Pi(S_{t,n}^\pi \mid Z_{1:n}) \xrightarrow{P_0} 1 $ \\
$S_{t,a,n}^m$& Sieve for $m_t(\cdot;a)$ with $\Pi(S_{t,a,n}^m \mid Z_{1:n}) \xrightarrow{P_0} 1$\\
$\mathcal{F}_n$ & Sieve for EIF difference: $\{ \phi(P) - \phi(P_0): 
    \pi_t(P) \in S_{t,n}^\pi,\, m_t(P;a) \in S_{t,a,n}^m,\, 
    \forall\, t \leq T,\, a \in \{0,1\} \}$ \\
$\mathcal{F}^*_n$ & Sieve for uncentered EIF difference: $\{ \phi^*(P) - \phi^*(P_0): P \in \mathcal{S}_n \}$ \\
$\Delta^\pi_{t,n}$ & Sieve for $\pi$ difference: $\{\pi_t(P) - \pi_t(P_0): \pi_t(P) \in S_{t,n}^\pi\}$ \\
$\Delta^m_{t,a, n}$ & Sieve for $m$ difference: $\{m_t(P;a) - m_t(P_0;a): 
    m_t(P;a) \in S_{t,a,n}^m\}$ \\
$\rho_{t,n}$ & Defined as: $\sup_{P\in \mathcal{S}^\pi_{t,n}}\|\pi_t(P) - \pi_{t}(P_0)\|_{L_2(P_0)} \le \rho_{t,n}$ \\
$\varepsilon_{t,a,n}$ & Defined as: $\sup_{P\in \mathcal{S}^m_{t,a,n}} \|m_t(P;a)- m_{t}(P_0;a)\|_{L_2(P_0)} \le \varepsilon_{t,a,n}$

\end{longtable}
\egroup
\vspace{0.25cm}

\centerline{\bf SoftBART}
\bgroup
\def\arraystretch{1.5}
\noindent
\begin{longtable}{p{1.25in}p{3.25in}}
$f(x)$ & SoftBART's target function \\
$g_k(x; \mathcal{K}_k, \mathcal{L}_k)$ & Function representing a single SoftBART tree \\
$\mathcal{G}_n$ & Function class for a single SoftBART tree \\
$K_n$ & Number of SoftBART trees  \\
$\mathcal{K}_k$ & Set of possible tree structures for tree $k$ \\
$\mathcal{L}_k$ & Set of possible leaf parameters for tree $k$ \\
$L_n$ & Maximum number of leaves per tree \\
$s(c, \tau)$ & Splitting function: $\sigma \left(\frac{x_j - c}{\tau} \right)$ \\
$\mathcal{W}_n$ & Class of all SoftBART gating functions for a fixed tree topology \\
$\mathcal{W}_{\text{split},n}$ & Class of all SoftBART splitting functions at a single internal split \\
$\tau \in [\tau_n^L, \tau_n^U]$ & Temperature  \\ 
$c \in [-C, C]$ & Split locations (bounded) \\
$\ell \in [-B_n, B_n]$ & Leaf weights  \\
$\mathcal{B}_n$ & Class of all the possible leaf-weights for a fixed tree topology \\
$D_n$ & Bound on maximum tree depth \\

\end{longtable}
\egroup
\vspace{0.25cm}

\centerline{\bf Proof Intermediates}
\bgroup
\def\arraystretch{1.5}
\noindent
\begin{tabular}{p{1.25in}p{3.25in}}
$\bar{m}_t(H_t, A_t)$ & Defined as: $\int \bar{m}_{t+1} dQ_{t+1}(P_0) dP_{t+1}$ \\
$m_t^*$ & Defined as: $\int \bar{m}_{t+1} dQ_{t+1}(P_0) dP_{0, t+1}$ \\
$ \Psi_t(P)$ & Defined as: $\left\{ \int _{\mathcal{A}_t} m_{t} (\boldsymbol{H}_{t}, a_{t}) dQ_{t} (a_{t}|\boldsymbol{H}_{t}) - m_t(\boldsymbol{H}_t, a_t) w_t(P)\right\} \prod_{s=0}^{t-1} w_s(P)$ \\
 $R(P)$ & Defined as:  $\prod_{s=1}^T w_s(P)Y - \psi(P;Q)$ \\
 $w_s(P)$  & Defined as: $\frac{dQ_s (A_s | \boldsymbol{H}_s)}{dP(A_s | \boldsymbol{H}_s)}$ \\
 $A_t(P)$ & Defined as: $\int_{\mathcal{A}_t} m_t(H_t, a_t)dQ_t(a_t|H_t) - m_t(H_t, A_t)w_t(P)$ \\
 $\Xi(P)$ & Defined as: $ \sum_{t = 1}^T \left\{ \prod_{s=0}^{t-1} \frac{dQ_s(A_s | H_s}{dP(A_s|H_S} \right\} \int_{A_t} \chi_t(H_t,A_t;a_t)m(H_t,a_t)d\nu(a_t)$ \\
 $\iota_n$ & Defined as: $ \iota_n = \frac{ \tau_n^U \, B_n \, L_n \, K_n} {\tau_n^L}$ \\
 $\Gamma_a$ & Defined as:  $ \Gamma_a(\pi_t(P)(h_t) = dQ_t(P)(a_t \mid h_t)$ 
 
\end{tabular}
\egroup
\vspace{0.25cm}

\subsection{Stochastic Interventions for Longitudinal Settings}
In this subsection, we extend the main ideas of the paper to the longitudinal setting. The updated notation introduced here will be used throughout the remainder of the appendix. 

Consider independent and identically distributed data $(Z_1, \ldots, Z_n)$, generated from an unknown distribution $P_0 \in \mathcal{P}$, where
\begin{equation*}
    Z = (X_1, A_1, X_2, A_2, \ldots, 
    X_T, A_T, Y).
\end{equation*}
Here, $X_t$ denotes a vector of covariates measured prior to treatment $A_t$ at time $t$, and $Y$ is the final outcome. We assume binary treatment at time $t$ (i.e., the support of $A_t$ is $\mathcal{A} = \{0,1\}$, for $t = 1, \dots, T$). Overbars are used to denote the past history of a variable, such that $\Bar{X}_t = (X_1, \ldots, X_t)$ and $\Bar{A}_t = (A_1, \ldots, A_t)$, and we let $H_t = (\Bar{X}_t, \Bar{A}_{t-1})$ denote the past history just prior to treatment at time $t$, with support $\mathcal{H}_t$. Using the updated notation, the data generating distribution admits the following factorization, 
\begin{equation*}
P_0(Z)
=
P_0(Y \mid \bar X_T, \bar A_T)
\prod_{t=1}^T
P_0(X_t \mid \bar X_{t-1}, \bar A_{t-1})
P_0(A_t \mid \bar X_t, \bar A_{t-1}).
\end{equation*}
Finally, let $Y^{\Bar{a}_T}$ denote the potential outcome under treatment sequence $\Bar{a}_T$, that is, $Y^{\Bar{a}_T}$ represents the counterfactual outcome that would have been observed had $\Bar{A}_{T} = \Bar{a}_T$.

We define a longitudinal stochastic intervention, $Q$, as
\begin{equation*}
    dQ(\bar a_T) = \prod_{t=1}^T dQ_{t}(a_t \mid h_t),
\end{equation*}
where $Q_{t}(a_{t} \mid h_{t}) $ is the conditional intervention distribution at time point $t$. Borrowing notation from \citet{Kennedy2018ipsi}, let $Y^{Q}$ denote the potential outcome under a stochastic intervention, and similarly, define the mean counterfactual outcome under stochastic intervention $Q$ as
\begin{equation} \label{eq:long-estimand}
    E\big[Y^Q\big] = \int_{\bar{\mathcal{A}}} \int_{\bar{\mathcal{X}}} E\big[Y^{\Bar{a}_T} \mid \bar{X}_T = \bar{x}_{T}, \bar{A}_{T} = a_{T}] \prod_{t = 1}^{T} dQ(a_t \mid h_t) dP_0(x_t \mid h_{t-1}, a_{t-1}),
\end{equation}
 where $\Bar{\mathcal{A}} = \mathcal{A}_1 \times \ldots \times \mathcal{A}_T$ and $\Bar{\mathcal{X}} = \mathcal{X}_1 \times \ldots \times \mathcal{X}_T$. The estimand \eqref{eq:long-estimand} can be identified in the longitudinal setting under the following assumptions.
\begin{assumption} \label{ass:time-varying-identifying}
\text{ }
\begin{enumerate}
    \item[(a)] (Consistency) $Y = Y^{\Bar{a}_T}$ if $\Bar{A}_T = \Bar{a}_T$.
    \item[(b)] (Exchangeability) $A_t \perp\!\!\!\perp Y^{\Bar{a}_T} \mid H_t$.
    \item[(c)] (Weak Positivity) $dP_0(a_t \mid H_t) = 0 \implies dQ_t(a_t \mid h_t) = 0$.
\end{enumerate}
\end{assumption}
Assumption~\ref{ass:time-varying-identifying} is extremely similar to the single-timepoint case (Assumption~\ref{ass:identifying}). Consistency must hold for all treatment trajectories and any mention of the covariates X is replaced by the history $H_t$. Consequently, if Assumption~\ref{ass:time-varying-identifying} holds, $\psi(Q) \equiv E[Y^Q]$ can be expressed as a functional of the observed data distribution, $P_0$.
\begin{lemma}
    Under Assumption~\ref{ass:time-varying-identifying}, the time-varying stochastic intervention, $\psi(Q) \equiv E\big[Y^Q\big]$, is identified by 
    \begin{equation*}
        \psi(P_0;Q) = \int_{\Bar{\mathcal{A}}} \int_{\bar{\mathcal{X}}} \mu(h_T, a_T) \prod_{t=1}^T dQ_{t}(a_t \mid h_t)dP_0(x_t \mid h_{t-1},a_{t-1}),
    \end{equation*}
    where $\mu(h_T, a_T) = E\left[Y \mid H_T = h_T, A_T = a_T\right]$.
\end{lemma}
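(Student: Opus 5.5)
We prove the identification lemma by induction on the number of time points, iterating the single-time-point argument used for Definition~\ref{def:identification} (the g-formula of \citet{robins1986gformula}). The natural device is the \emph{intervened} joint law. Let $P^Q$ denote the distribution on $(X_1,A_1,\ldots,X_T,A_T,Y)$ obtained from the factorization of $P_0$ by replacing each treatment factor $dP_0(a_t\mid h_t)$ with $dQ_t(a_t\mid h_t)$. We then show that $E[Y^Q]$, defined via the counterfactual data-generating mechanism in which treatments are drawn sequentially from $Q_t(\cdot\mid H_t)$, equals $\int y\,dP^Q$. Integrating out $Y$ first gives the stated formula with $\mu(h_T,a_T)=E[Y\mid H_T=h_T,A_T=a_T]$.

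\textbf{Counterfactual joint law.} Under $Q$, the counterfactual joint law factorizes as
\begin{equation*}
\prod_{t=1}^T dP(X_t^{Q}\mid \bar X_{t-1}^Q,\bar A_{t-1}^Q)\,dQ_t(A^Q_t\mid H^Q_t)\;\times\; dP(Y^Q\mid \bar X^Q_T,\bar A^Q_T).
\end{equation*}
Here the intervention draws $A_t^Q$ using only the history $H_t^Q$, plus independent randomization. Conditioning on $\bar A_T^Q=\bar a_T$ turns the outcome and covariate factors into those of the deterministic-regime counterfactuals $Y^{\bar a_T}$ and $X_t^{\bar a_{t-1}}$.

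\textbf{Outcome factor.} We identify the outcome factor by backward induction. At step $T$ we need
\begin{equation*}
E[Y^{\bar a_T}\mid \bar X_T^{\bar a_{T-1}}=\bar x_T]=E[Y\mid H_T=h_T,A_T=a_T].
\end{equation*}
This uses three steps in turn: sequential exchangeability (Assumption~\ref{ass:time-varying-identifying}(b)) to add conditioning on $A_t=a_t$ at each $t\le T$; consistency (a) to replace counterfactual quantities by observed ones on the event $\bar A_T=\bar a_T$; and weak positivity (c) to ensure these conditional expectations are well defined wherever $Q$ puts mass.

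\textbf{Covariate factors.} The same three steps show that the law of $X_t^{\bar a_{t-1}}$ given the past counterfactual history equals $dP_0(x_t\mid h_{t-1},a_{t-1})$. The exchangeability statement in (b) is written only for $Y^{\bar a_T}$, so here we read it in the standard sequential-randomization sense, covering the future counterfactual covariates as well as $Y$. Alternatively, we apply the iterated-expectation form and use (b) only to pass the conditioning through $A_t$ once per step.

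\textbf{Combining.} Substituting these identified factors into the counterfactual joint law and integrating gives exactly $\int\!\int \mu(h_T,a_T)\prod_t dQ_t(a_t\mid h_t)\,dP_0(x_t\mid h_{t-1},a_{t-1})$. Equivalently, the induction runs backward in $t$ through iterated pseudo-outcomes $m_t$, each obtained by integrating $m_{t+1}$ against $dQ_{t+1}\,dP_0(x_{t+1}\mid\cdot)$, with $m_T=\mu$; this form is convenient later. The proof follows the argument of \citet[Lemma~1 / Appendix]{Kennedy2018ipsi}.

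The main obstacle is the exchangeability step for the covariate factors. Assumption~\ref{ass:time-varying-identifying}(b) as stated involves only $Y^{\bar a_T}$, so some care is needed to justify that conditioning on $A_t=a_t$ does not alter the laws of the future counterfactual covariates. We handle this by interpreting (b) as sequential randomization with respect to the full counterfactual future $(X_{t+1}^{\bar a},\ldots,Y^{\bar a_T})$, as in \citet{Kennedy2018ipsi}. Positivity must also be applied only to the $Q$-charged branches, which is exactly what the weak form (c) provides.
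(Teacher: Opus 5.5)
Your main route is the paper's route: build the intervened law by swapping each $dP_0(a_t\mid h_t)$ for $dQ_t(a_t\mid h_t)$, then identify the remaining factors by the sequential g-formula. The paper gives no argument of its own and simply defers to \citet{Kennedy2018ipsi} and \citet{robins1986gformula}. Your route goes through once you assume two things. First, sequential randomization for the whole counterfactual future, $A_t\perp\!\!\!\perp(X_{t+1}^{\bar a_t},\ldots,X_T^{\bar a_{T-1}},Y^{\bar a_T})\mid H_t$. Second, consistency for covariates, $X_t=X_t^{\bar a_{t-1}}$ on $\{\bar A_{t-1}=\bar a_{t-1}\}$; your outcome and covariate steps use this silently, since (a) concerns only $Y$. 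What must go is your fallback. The iterated-expectation form does \emph{not} let you use (b) only for $Y^{\bar a_T}$. When $Q_t$ depends on post-baseline covariates, $E[Y^Q]=\sum_{\bar a_T}E\bigl[\prod_t q_t(a_t\mid H_t^{\bar a_{t-1}})\,Y^{\bar a_T}\bigr]$. Passing the conditioning through $A_1=a_1$ then requires $A_1$ to be independent of a product involving the counterfactual covariates $X_t^{\bar a_{t-1}}$, and (b) as written does not give this.

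The strengthening is necessary, as a counterexample shows.
\begin{itemize}
\item Setup: take $T=2$, $X_1$ constant, $U,V$ independent $\mathrm{Bernoulli}(1/2)$, $A_1=U$, $X_2^{a_1}=X_2=U\oplus V$ for both $a_1$, $A_2\sim\mathrm{Bernoulli}(1/2)$ independent of $(U,V)$, and $Y^{a_1a_2}=a_2V$.
\item Assumptions: consistency, (b) and (c) of Assumption~\ref{ass:time-varying-identifying} all hold, because $A_1=U$ is independent of every $Y^{\bar a}=a_2V$ and $A_2$ is independent of everything. Joint exchangeability fails, since $(X_2^{1},Y^{11})=(U\oplus V,V)$ determines $A_1$.
\item Intervention: let $Q_1$ be the point mass at $a_1=1$ and set $Q_2(a_2=1\mid h_2)=x_2$.
\item True value: $Y^Q=(U\oplus V)V$, so $E[Y^Q]=P(U=0,V=1)=1/4$.
\item Formula value: on $\{a_1=1\}$, $\mu(h_2,a_2)=E[A_2V\mid U=1,\,U\oplus V=x_2,\,A_2=a_2]=a_2(1-x_2)$, so the displayed formula equals $\sum_{x_2}\tfrac12\,x_2(1-x_2)=0$.
\end{itemize}
So under the natural definition of $Y^Q$ the lemma needs your stronger hypotheses. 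State them as added assumptions rather than as a reading of (b), and drop the ``alternatively'' clause. Read literally, the paper's own display \eqref{eq:long-estimand} already integrates against the observed laws $dP_0(x_t\mid h_{t-1},a_{t-1})$ and conditions on $\bar A_T$. Under that reading, consistency plus weak positivity give the lemma directly and no exchangeability is used at all.
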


In other words, $\psi(Q)$ can be written as the expectation over hypothetical treatment trajectories according to the stochastic intervention $Q$ and the covariates drawn from the observed data distribution. A formal proof of this result is given in \citet{Kennedy2018ipsi}; the result essentially follows from the sequential g-formula \citep{robins1986gformula}. 

To define the corresponding nuisance functions, for $t = 0, \ldots, T-1$ we have 
\begin{equation}
m_t(P)(h_t, a_t) = 
\int_{\mathcal{R}_t} 
\mu(h_T, a_T) 
\prod_{s=t+1}^T dQ_{s}(a_s \mid h_s)
dP_0(x_s \mid h_{s-1}, a_{s-1}),
\end{equation}
where $\mathcal{R}_t = (\mathcal{H}_T \times \mathcal{A}_T)\backslash \mathcal{H}_t$, and for $t = T$ we let $m_T(h_T, a_T) = \mu(h_T, a_T)$ and $m_{T+1}(h_{T+1}, a_{T+1}) = Y$. To ease notation we suppress the dependence of $m_t$ on $Q$. Alternatively, $m_t(h_t,a_t)$ can be defined recursively as
\begin{equation} \label{eq:mean-response-recursive}
    m_{t}(P)(h_t,a_t) = \int m_{t+1}(h_{t+1}, a_{t+1}) dQ_{t+1}(a_{t+1} \mid h_{t+1})dP(x_{t+1} \mid h_{t}, a_{t} ).
\end{equation}


Finally, we state the longitudinal efficient influence function (EIF) for $\psi(Q)$ for two cases: $\pi$-fixed and d $\pi$-dependent stochastic interventions
\begin{lemma} \label{def:EIF-not-dependent}
The efficient influence function (EIF) for a $\pi$-fixed stochastic intervention is
\begin{align*}
    \phi(P_0; Q) &= \sum_{t = 0}^T \left\{ \int _{\mathcal{A}_{t + 1}} m_{t+1} (H_{t+1}, a_{t+1}) dQ_{t+1} (a_{t+1}\mid H_{t+1}) - m_t(H_t, a_t) \right\} \prod_{s=0}^t \frac{dQ_{s} (A_s \mid H_s)}{dP_0(A_s \mid H_s)} \\
    &= \sum_{t = 1}^T 
    \left\{ \int _{\mathcal{A}_t} m_{t} 
    (H_{t}, a_{t}) dQ_{t} (a_{t}\mid H_{t}) - m_t(H_t, A_t) \frac{dQ_{t} (A_t \mid H_t)}{dP_0(A_t \mid H_t)} \right\} 
    \prod_{s=0}^{t-1} \frac{dQ_{s} (A_s \mid H_s)}{dP_0(A_s \mid H_s)} \\
    & \;\;\;\;\;\;\;\;\;\;\;\; + \prod_{s=1}^T \frac{dQ_{s} (A_s | H_s)}{dP_0(A_s \mid H_s)} Y - \psi(P_0;Q),
\end{align*}
where we define $dQ_{T+1} = 1$ and $dQ_{0} (a_0\mid h_0) / dP_0(a_0\mid h_0) = 1$.
\end{lemma}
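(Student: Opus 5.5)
The plan is to derive the longitudinal EIF by the standard route for sequential regression functionals: write $\psi(P;Q)$ as an iterated integral via the recursion \eqref{eq:mean-response-recursive}, compute the pathwise derivative along a regular one-dimensional submodel $P_\epsilon$ with score $s = s_Y + \sum_{t} (s_{X_t} + s_{A_t})$, and identify a mean-zero representer. Since $Q$ is $\pi$-fixed, $dQ_t$ does not move with $\epsilon$, so only the covariate transitions $dP(x_t\mid h_{t-1},a_{t-1})$ and the outcome regression $\mu$ contribute. The treatment scores $s_{A_t}$ then contribute nothing, which is why no $\chi$-type term appears, in contrast to Definition~\ref{def:dynamic_eif}. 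The single-time-point formula of Definition~\ref{def:static_eif} (\citet[Lemma 2]{Kennedy2018ipsi}) is the $T=1$ case and serves as a check.

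First, differentiate the iterated integral by the product rule, one factor at a time. The derivative with respect to the law of $X_{t+1}$ given $(H_t,A_t)$, integrated against the intervention path up to time $t$, gives
\[
E_{P_0}\Big[\Big\{\textstyle\int m_{t+1}(H_{t+1},a)\,dQ_{t+1}(a\mid H_{t+1}) - m_t(H_t,A_t)\Big\}\, s_{X_{t+1}}\Big],
\]
but with the history distributed under $Q$ rather than $P_0$. Changing measure back to $P_0$ introduces the cumulative weights $\prod_{s\le t} dQ_s(A_s\mid H_s)/dP_0(A_s\mid H_s)$, and weak positivity (Assumption~\ref{ass:time-varying-identifying}(c)) makes this legitimate. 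The outcome factor similarly produces $\{Y-\mu(H_T,A_T)\}\prod_{s\le T}w_s$, which is the $t=T$ term under the conventions $m_{T+1}=Y$ and $dQ_{T+1}=1$. The $t=0$ term, $\int m_1 \, dQ_1 - \psi$, comes from the marginal of $X_1$.

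Second, check that each summand is conditionally mean zero given the appropriate past. That makes the derivative equal to $E_{P_0}[\phi\, s]$ for every score, because the cross terms with scores of other factors vanish by iterated expectations. The model is nonparametric, so the tangent space is saturated and this influence function is the efficient one. Third, obtain the second displayed form by a telescoping re-indexing. Shift $t\mapsto t-1$ in the first representation and split off the terms $-m_t(H_t,A_t)\,w_t$ paired with $\int m_t\,dQ_t$ at the same index. The leftover boundary term at $t=T$ is $\prod_{s=1}^T w_s\,Y$, and the leftover at $t=0$ is $-\psi(P_0;Q)$.

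The main obstacle is bookkeeping rather than analysis: keeping the change of measure from the $Q$-path to the $P_0$-path correct at each time, and making sure the weight index runs to $t$ in the first form but to $t-1$ in the second. I would verify the product weights by induction on $T$, using \eqref{eq:mean-response-recursive} to peel off one time step at a time. The inductive step is then the single-time-point computation with the outcome replaced by the pseudo-outcome $\int m_{t+1}\,dQ_{t+1}$.
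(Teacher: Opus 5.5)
Your proposal is correct, but it follows a different route from the paper. The paper gives no derivation: it attributes the formula to the chain-rule calculus for influence functions in \citet{Kennedy2018ipsi} and \citet{kennedy2024semiparametric}. In that calculus one writes $\psi(P;Q)$ as nested (conditional) means, inserts the known influence functions of a mean and of a conditional mean, and combines them with product and chain rules; this amounts to the one-time-point-at-a-time induction you sketch at the end, done with influence-function rules rather than scores. You instead differentiate along a submodel with factorized score $s_Y+\sum_t(s_{X_t}+s_{A_t})$, change measure from the $Q$-path to the $P_0$-path, and verify directly that the candidate is a representer: conditional mean-zero of each summand given $(H_t,A_t)$ kills all cross terms. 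Efficiency then follows from saturation of the nonparametric tangent space. Your telescoping re-indexing for the second display is also right; it uses the convention $m_0\equiv\psi(P_0;Q)$, and the $m_t(H_t,a_t)$ in the first display of the statement must be read as $m_t(H_t,A_t)$. The chain-rule route buys brevity and modularity. For example, the $\pi$-dependent Lemma~\ref{def:EIF-dependent} follows by adding the influence function of $dQ_t$, which in your framework is exactly the treatment scores $s_{A_t}$ no longer dropping out of the derivative. Your route buys a self-contained proof of pathwise differentiability that shows where assumptions enter. On that point, weak positivity justifies the change of measure but not that the representer lies in $L_2(P_0)$. For a genuine gradient you also need something like $\prod_{s}dQ_s(A_s\mid H_s)/dP_0(A_s\mid H_s)$ square integrable, for instance bounded weights as in Assumption~\ref{ass:time-varying-conditions}(b)(iii), and you should state this condition explicitly.
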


\begin{lemma} \label{def:EIF-dependent}
The efficient influence function (EIF) for a $\pi$-dependent stochastic intervention is
\begin{equation*}
    \phi_{\text{dep}, P_0} = \phi^*_{\text{fix}, P_0} + \sum_{t = 1}^T \left\{ \prod_{s=0}^{t-1} \frac{dQ_s(A_s | H_s)}{dP_0(A_s|H_s)} \right\} \int_{\mathcal{A}_t} \chi_t(H_t,A_t;a_t)m(H_t,a_t)d\nu(a_t) - \psi(P_0;Q),
\end{equation*}
where $\phi^{*}_{\text{fix}, P_0}$ is the uncentered efficient influence function from Lemma~\ref{def:EIF-not-dependent} under an intervention $Q$ not depending on $P_0$, $\nu$ is a dominating measure for the distribution of $A_t$, and $\{ \mathbf{1}(H_t = h_t)/ dP(h_t) \} \chi_t(H_t, A_t; a_t)$ is the efficient influence function for $dQ_{t}(a_t \mid h_t)$..
\end{lemma}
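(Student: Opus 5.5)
The plan is to compute the pathwise derivative of $P \mapsto \psi(P; Q_P)$ along regular one-dimensional parametric submodels $\{P_\epsilon\}$ through $P_0$ with score $S$. Under $P_\epsilon$ the functional depends on $\epsilon$ in two ways: through the observed-data conditionals $dP_\epsilon(x_t \mid h_{t-1}, a_{t-1})$ and $dP_\epsilon(Y \mid \cdot)$, and through the intervention densities $dQ_{t,P_\epsilon}$ for $t = 1, \ldots, T$. Because $\psi(P;Q)$ is multilinear in the factors $\{dP(x_t \mid \cdot)\}_t$, $dP(Y\mid\cdot)$ and $\{dQ_t\}_t$, the product rule splits the derivative at $\epsilon = 0$ into two parts:
\begin{equation*}
\frac{d}{d\epsilon}\psi(P_\epsilon;Q_{P_\epsilon})\Big|_{\epsilon=0}
= \frac{d}{d\epsilon}\psi(P_\epsilon;Q_{P_0})\Big|_{\epsilon=0}
+ \sum_{t=1}^T \frac{d}{d\epsilon}\psi\big(P_0; Q^{(t)}_{\epsilon}\big)\Big|_{\epsilon=0}.
\end{equation*}
Here $Q^{(t)}_\epsilon$ perturbs only the $t$-th intervention factor, to $dQ_{t,P_\epsilon}$, and leaves the others at $Q_{P_0}$. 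The first term is exactly the $\pi$-fixed case. By Lemma~\ref{def:EIF-not-dependent} it equals $P_0[\phi^*_{\text{fix},P_0} S]$, with the centering constant absorbing $\psi(P_0;Q)$.

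For the $t$-th term, I will write $\psi(P_0;Q)$ as an integral over $h_t$. Using the recursion \eqref{eq:mean-response-recursive},
\begin{equation*}
\psi(P_0;Q) = \int_{\mathcal H_t} \int_{\mathcal A_t} m_t(h_t,a_t)\, dQ_t(a_t\mid h_t)\, dP^{Q}_{0}(h_t),
\end{equation*}
where $dP^Q_0(h_t) = \prod_{s<t} dQ_s(a_s\mid h_s)\, dP_0(x_{s+1}\mid h_s,a_s)$ is the law of the history under the intervention up to time $t-1$. Only $dQ_t$ moves, so the derivative is $\int\int m_t(h_t,a_t)\, \dot q_t(a_t\mid h_t)\, d\nu(a_t)\, dP_0^Q(h_t)$, where $\dot q_t$ denotes the pathwise derivative of $dQ_{t,P_\epsilon}(a_t \mid h_t)$. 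By hypothesis this derivative has influence function $\{\mathbf 1(H_t=h_t)/dP_0(h_t)\}\chi_t(H_t,A_t;a_t)$, which gives
\begin{equation*}
\dot q_t(a_t\mid h_t) = E_0\big[\chi_t(H_t,A_t;a_t) S \mid H_t = h_t\big].
\end{equation*}
In the binary case this can be checked directly from \eqref{eq:chi-def} via the chain rule, since $\dot\pi_t(h_t) = E_0[(A_t-\pi_t(H_t))S\mid H_t=h_t]$. Next I change measure from $dP_0^Q(h_t)$ to $dP_0(h_t)$. The Radon–Nikodym derivative is $\prod_{s=0}^{t-1} dQ_s(A_s\mid H_s)/dP_0(A_s\mid H_s)$, which is well defined by weak positivity (Assumption~\ref{ass:time-varying-identifying}(c)). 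The $t$-th term then becomes
\begin{equation*}
P_0\Big[\Big\{\prod_{s=0}^{t-1}\tfrac{dQ_s(A_s\mid H_s)}{dP_0(A_s\mid H_s)}\Big\}\int_{\mathcal A_t}\chi_t(H_t,A_t;a_t)\,m_t(H_t,a_t)\,d\nu(a_t)\, S\Big].
\end{equation*}
This is the inner product with $S$ of the $t$-th summand in the statement.

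Summing the two parts shows that the proposed $\phi_{\text{dep},P_0}$ satisfies $\frac{d}{d\epsilon}\psi = P_0[\phi_{\text{dep},P_0}S]$ for every score $S$. It also has $P_0$-mean zero: $\chi_t$ is conditionally mean zero given $H_t$ (it is an influence function), so each added term has mean zero, and the centering by $\psi(P_0;Q)$ handles $\phi^*_{\text{fix}}$. The model is nonparametric, so the tangent space is all of $L_2^0(P_0)$ and the influence function is unique, hence efficient. Boundedness of the weights and of $m_t$ ensures $\phi_{\text{dep},P_0}\in L_2(P_0)$.

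The main obstacle is justifying the product-rule split rigorously. The $m_t$ for later times themselves depend on $Q_{s}$ with $s>t$, and those are perturbed as well. I would handle this by differentiating the fully expanded multilinear representation of $\psi$ in terms of all conditional factors, rather than the recursive one, so that each $dQ_s$ appears exactly once. I would also use dominated convergence, with the bounded-weight and bounded-outcome conditions, to interchange differentiation and integration. A secondary subtlety is interpreting the point-mass form $\mathbf 1(H_t=h_t)/dP(h_t)$ for continuous histories. I would avoid it entirely by working with the conditional-expectation characterization of $\dot q_t$ given above.
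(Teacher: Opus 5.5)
Your proposal is correct and follows the same approach the paper relies on. The paper gives no self-contained proof; it appeals to the chain rule for influence functions as outlined by Kennedy (2019), which is exactly your split of the pathwise derivative into the $\pi$-fixed part plus, for each $t$, the contribution of $dQ_t$ pushed through the g-formula and reweighted by $\prod_{s<t} dQ_s(A_s \mid H_s)/dP_0(A_s \mid H_s)$. Your version makes explicit the steps the paper leaves to that reference: the change of measure, the tower-property step, and the conditional-mean-zero property of $\chi_t$.
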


The proofs of Lemmas~\ref{def:EIF-not-dependent} and \ref{def:EIF-dependent} follow from an application of the chain rule property of influence functions and are outlined in \citet{Kennedy2018ipsi} and \citet{kennedy2024semiparametric}. We now provide the explicit uncentered EIF derivations for the IPSI and the PTI in the single timepoint settting as illustrative examples.

\noindent\textit{Example Three} (IPSI EIF)
First note that 
\begin{equation*}
    q(a, \pi(X)) = \frac{a\delta \pi(X) + (1-a)(1-\pi(X))}{\delta \pi(X) + 1 - \pi(X)}.
\end{equation*}
Then after some calculations we arrive at 
\begin{equation*}
    \frac{\partial}{\partial \pi} q(a, \pi(X)) = 
    \frac{(2a -1)\delta}{(\delta \pi(X) + 1 - \pi(X))^2},
\end{equation*}
which implies by \eqref{eq:chi-def}
\begin{equation*}
    \chi(X, A; a) = \frac{(2a -1)\delta (A-\pi(X))}{(\delta \pi(X) + 1 - \pi(X))^2}.
\end{equation*}
Then using Definition~\ref{def:static_eif} and Definition~\ref{def:dynamic_eif},
\begin{align*}
    \phi^*_{\text{IPSI}} &=  \int_{\mathcal{A}} \mu(X,a)dQ(a \mid X) +   \frac{dQ(A\mid X)}{dP_0(A\mid X)} \left[  Y - \mu(X,A) \right] + \int_{\mathcal{A}} \chi(X,A;a)\mu(X,a)d\nu(a), \\
    &=  \frac{\mu(X,0)(1-\pi(X)) + \mu(X,1) \, \delta \, \pi(X) }{\delta \pi(X) + 1 - \pi (X)} \\
    &+ \frac{A \, \delta \pi(X) + (1-A)(1-\pi(X))}{\left(\delta \pi(X) + 1 - \pi(X)\right) \left( A\pi(X) + (1-A)(1-\pi(X))\right) } 
    \left[  Y - \mu(X,A) \right] \\
    &+ \frac{(\mu(X,1) - \mu(X,0))\delta \,(A-\pi(X))}{(\delta \pi(X) + 1 - \pi(X))^2}.
\end{align*}

\noindent\textit{Example Four} (PTI EIF)
First note that 
\begin{equation*}
    q(a, \pi(X)) = \frac{a \pi(X)^\delta + (1-a)(1-\pi(X))^\delta}{\pi(X)^\delta + (1 - \pi(X))^\delta}.
\end{equation*}
Then after some calculations we arrive at 
\begin{equation*}
    \frac{\partial}{\partial \pi} q(a, \pi(X)) = 
    \frac{(2a -1) \, \delta \,
    \pi(X)^{\delta -1} \, (1-\pi(X))^{\delta-1}}{(\pi(X)^\delta + (1-\pi(X))^\delta)^2},
\end{equation*}
which implies by \eqref{eq:chi-def}
\begin{equation*}
    \chi(X, A; a) = 
    \frac{(2a -1) \, \delta \,
    \pi(X)^{\delta -1} \, (1-\pi(X))^{\delta-1} (A-\pi(X))}{(\pi(X)^\delta + (1-\pi(X))^\delta)^2}.
\end{equation*}
Then using Definition~\ref{def:static_eif} and Definition~\ref{def:dynamic_eif},
\begin{align*}
    \phi^*_{\text{PTI}} &=  \int_{\mathcal{A}} \mu(X,a)dQ(a \mid X) +   \frac{dQ(A\mid X)}{dP_0(A\mid X)} \left[  Y - \mu(X,A) \right] + \int_{\mathcal{A}} \chi(X,A;a)\mu(X,a)d\nu(a), \\
    &= \frac{\mu(X,0)(1-\pi(X))^\delta + \mu(X,1) \pi(X)^\delta}{\pi(X)^\delta + (1 - \pi(X))^\delta} \\
    &+ \frac{A \pi(X)^\delta + (1-A)(1-\pi(X))^\delta}{\left(\pi(X)^\delta + (1 - \pi(X))^\delta\right) \left( A\pi(X) + (1-A)(1-\pi(X))\right)} \left[  Y - \mu(X,A) \right] \\
    &+ \frac{(\mu(X,1) - \mu(X,0)) \, \delta \,
    \pi(X)^{\delta -1} \, (1-\pi(X))^{\delta-1} (A-\pi(X))}
    {(\pi(X)^\delta + (1-\pi(X))^\delta)^2}
\end{align*}

Below is the time-varying update to Assumption~\ref{ass:main-bvm}, which we will use throughout the rest of the Supplementary Material,
\begin{assumption} 
 \label{ass:time-varying-conditions}
Fix $\delta \in D$.
There exists a sequence of measurable subsets $(S_n)_n \subset \mathcal{P}$ satisfying
\begin{equation*}
    \Pi(S_n \mid Z_{1:n}) \xrightarrow{P_0} 1,
\end{equation*}
with the following properties for all $t$:
\begin{enumerate}
    \item[(a)] \textit{Nuisance contraction and second-order rate}. 
    
    There exist sequences $\rho_{t,n}, \varepsilon_{t,0,n}, \varepsilon_{t,1,n} \xrightarrow{} 0$ such that, for all $a \in \{0,1\}$,
    \begin{equation*}
        \sup_{\pi_t \in S_{n}}\|\pi_t (P) - \pi_{t} (P_0)\|_{L_2(P_0)} \le \rho_{t,n} , 
        \quad
        \sup_{m_t(\cdot\,;a) \in S_{n}} 
        \|m_t(P\,;a)- m_{t}(P_0\,;a)\|_{L_2(P_0)} \le \varepsilon_{t,a,n}, 
    \end{equation*}
    In addition, for $1 \le s \le t \le T$ and $a\in\{0,1\}$, 
    \begin{equation*}
        \sqrt{n}\left(\varepsilon_{t,a,n} + \rho_{t,n}\right) \rho_{s,n} 
        \xrightarrow{} 0.
    \end{equation*}

    \item[(b)] \textit{Uniform boundedness and intervention stability.}

    For some $C < \infty$ and all sufficiently large $n$,
    \begin{enumerate}
        \item[(i)] (Bounded outcomes). 
        \begin{equation*}
            P_0(|Y| \leq C) = 1, \quad \sup_{P \in S_n} \max_{a \in \{0,1\}} \lVert \mu_{t}(P;a) \rVert_{\infty} \leq C.
        \end{equation*}
        \item[(ii)] (Smooth intervention map). Let $q_{t, \delta}(a, \pi_t(P)):= dQ_{t,\delta}(P) $, such that for each $a \in 
        \{0,1\}$ the intervention map $q_{t, \delta}$ is twice continuously differentiable in $\pi_t$ with
        \begin{equation*}
            \max_{a \in \{0,1\}} \Bigg\{ \Bigg\lVert \frac{\partial q_{t,\delta}(a, \pi_t)}{\partial \pi_t} \Bigg\rVert_{\infty} + \Bigg\lVert \frac{\partial^2 q_{t, \delta}(a, \pi_t)}{\partial \pi_t^2} \Bigg\rVert_{\infty} \Bigg\} \leq C.
        \end{equation*}
        \item[(iii)] (Bounded intervention weights).
        \begin{equation*}
        \sup_{P \in S_{n}}
        \frac{dQ_t(A_t\mid H_t)}{dP_t(A_t \mid H_t)} \le C, \quad P_0\text{-a.s.}
        \end{equation*}
    \end{enumerate}

    \item[(c)] \textit{Local complexity of nuisance classes}.

    Define the centered nuisance function classes,
    \begin{equation*}
    \Delta_{t,n}^\pi = \{ \pi_t(P) - \pi_t({P_0}) : P \in S_n \}, \qquad 
    \Delta_{t, a,n}^m = \{ \mu_t({P;a}) - \mu_t({P_0;a}) : P \in S_n\},
    \end{equation*}
    and let $\gamma_n \asymp \sum_{t=1}^T \rho_{t,n} + \varepsilon_{t, 0,n} + \varepsilon_{t, 1,n}$. Then
    \begin{equation*}
    	\sum_{t=1}^T J_{[\,]}\!\left(\gamma_n,\Delta_{t,n}^\pi,L_2(P_0) \right) + \sum_{a=0}^1 J_{[\,]}\!\left(\gamma_n,\Delta_{t,a,n}^m,L_2(P_0) \right) = o\!\left(1 \wedge \gamma_n  n^{1/4} \right),
    \end{equation*}
    where $J_{[\,]}(\gamma_n, \mathcal{F}_{n}, L_2(P_0))$ is the $L_2(P_0)$ bracketing integral \citep{vandervaartwellner2023weak}.
\end{enumerate}
\end{assumption}

\subsection{Stochastic Intervention Smoothness Lemmas}

 The following Lemma~\ref{lem:smoothness} allows us to generalize a large class of stochastic interventions, needing only smoothness requirements with respect to $\pi$. The inequalities in Lemma~\ref{lem:smoothness} are used throughout proofs of the no second order bias condition in Theorem~\ref{thm:bvm_single_time} and Lemma~\ref{lem:Lipschitz_multiple_time}.
\begin{lemma}\label{lem:smoothness}
    For any stochastic intervention let $dQ_t(P)(a_t \mid h_t) = \Gamma_a(\pi_t(P)(h_t))$, where $\Gamma_a$ is twice continuously differentiable function with bounded first and second derivatives of $\pi_t(P)$, the following hold for $P, \tilde P \in S_n$: 
    \begin{enumerate}
        \item [(i)] 
        $\left|dQ_t(P) - dQ_t(\tilde P) + \int_{\mathcal{A}_t}\chi_t(P) \, d\pi_t(\tilde P)\right| \lesssim \left|\pi_t(P) - \pi_t(\tilde P)\right|^2,$

         \item [(ii)] $\left|\int_{\mathcal{A}_t}\chi_t(P)\, d\pi_t(\tilde P) \right| \lesssim  \left|\pi_t(P) - \pi_t(\tilde P)\right| ,$
         
        \item [(iii)]
        $ \left| \chi_t(P) \right|\le C,$

        \item[(iv)]
        $\left| dQ_t(P) - dQ_t(\tilde P) \right|
        \lesssim \left| \pi_t(P) - \pi_t(\tilde P) \right|$
        
        \item [(v)]
        $
        \left| \chi_t(P) - \chi_t(\tilde P) \right|
        \lesssim \left| \pi_t(P) - \pi_t(\tilde P) \right|$.
        
    \end{enumerate}
\end{lemma}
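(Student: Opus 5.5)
We work pointwise in the history $h_t$ and reduce everything to calculus on the scalar map $\Gamma_a$. Write $p = \pi_t(P)(h_t)$ and $\tilde p = \pi_t(\tilde P)(h_t)$. By the chain-rule representation \eqref{eq:chi-def}, applied at time $t$,
\begin{equation*}
\chi_t(P)(h_t, a_t'; a_t) = \Gamma_a'(p)\,(a_t' - p),
\end{equation*}
where $a_t'$ denotes the realized treatment and $a_t$ the intervention argument. We interpret $\int_{\mathcal{A}_t}\chi_t(P)\,d\pi_t(\tilde P)$ as the integral over $a_t'$ with respect to the treatment law under $\tilde P$, namely Bernoulli$(\tilde p)$. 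Since $\chi_t$ is linear in $a_t'$, this integral is explicit:
\begin{equation*}
\int_{\mathcal{A}_t}\chi_t(P)\,d\pi_t(\tilde P) = \Gamma_a'(p)\,(\tilde p - p).
\end{equation*}
Every claim then follows from the bounds $\|\Gamma_a'\|_\infty + \|\Gamma_a''\|_\infty \le C$, which come from Assumption~\ref{ass:time-varying-conditions}(b)(ii), together with $p, \tilde p, a_t' \in [0,1]$.

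The steps, in order, are as follows.
\begin{enumerate}
\item[(iii)] We have $|\chi_t(P)| \le |\Gamma_a'(p)|\,|a_t' - p| \le C$, since $|a_t' - p| \le 1$.
\item[(ii)] Using the explicit form above, $|\Gamma_a'(p)(\tilde p - p)| \le C|p - \tilde p|$.
\item[(iv)] The mean value theorem gives $|\Gamma_a(p) - \Gamma_a(\tilde p)| \le \|\Gamma_a'\|_\infty |p - \tilde p|$.
\item[(i)] The expression equals $\Gamma_a(p) - \Gamma_a(\tilde p) + \Gamma_a'(p)(\tilde p - p)$. This is minus the remainder of the first-order Taylor expansion of $\Gamma_a$ about $p$, evaluated at $\tilde p$. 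By Taylor's theorem with Lagrange remainder, its absolute value is bounded by $\tfrac12\|\Gamma_a''\|_\infty (p - \tilde p)^2$.
\item[(v)] Decompose
\begin{equation*}
\chi_t(P) - \chi_t(\tilde P) = \{\Gamma_a'(p) - \Gamma_a'(\tilde p)\}(a_t' - p) + \Gamma_a'(\tilde p)(\tilde p - p).
\end{equation*}
The first term is bounded by $\|\Gamma_a''\|_\infty |p - \tilde p|$, using the mean value theorem for $\Gamma_a'$ and $|a_t' - p| \le 1$. The second is bounded by $C|p - \tilde p|$.
\end{enumerate}
All implicit constants depend only on $C$. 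They are therefore uniform over $P, \tilde P \in S_n$, over $h_t$, and over $a \in \{0,1\}$, so the pointwise inequalities hold as functions of $h_t$.

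The main obstacle is not analytic. It lies in pinning down the conventions: the sign in (i) and what $\int \chi_t(P)\,d\pi_t(\tilde P)$ means. The sign is forced by the Taylor identity, since $\Gamma_a(p) - \Gamma_a(\tilde p) - \Gamma_a'(p)(p - \tilde p)$ is second order, and this identifies the integration measure as the conditional law of $A_t$ under $\tilde P$. I would state this interpretation explicitly at the start.

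For a general intervention one should also check that $\Gamma_a$ is only needed on $[0,1]$, so that the mean value and Taylor arguments stay inside the domain where the derivative bounds hold. This is automatic, because propensity scores lie in $[0,1]$ and the interval is convex.
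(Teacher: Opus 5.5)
Your proposal is correct and follows essentially the same route as the paper. Both compute $\int_{\mathcal{A}_t}\chi_t(P)\,d\pi_t(\tilde P) = \Gamma_a'(p)(\tilde p - p)$ explicitly for binary treatment, obtain (i) from the Lagrange-form Taylor expansion, and obtain (ii)--(v) from the bounded first and second derivatives via the mean value theorem. Your decomposition for (v) is slightly more careful than the paper's one-line justification: it makes explicit that the change in the centering $a_t' - p$ contributes a term $\Gamma_a'(\tilde p)(\tilde p - p)$, which needs the first-derivative bound as well as the second.
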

\begin{proof}
    Given binary treatments we have
    \begin{equation*}
        d\pi_t(P)(A \mid h_t) = A\pi_t(P) + (1-A)(1-\pi_t(P)).
    \end{equation*}
    Then through the chain rule, 
    \begin{equation*}
        \chi_t(P)(h_t, A; a) = \Gamma'_a(\pi_t(P)) (A - \pi_t(P)).
    \end{equation*}
    Next, 
    \begin{align*}
        \int_{\mathcal{A}_t}  \chi_t(P)(h_t, A; a)d\pi_t(\tilde P)(A \mid h_t) &= 
        \sum_{A=0}^1 \chi_t(P)(h_t, A; a)d\pi_t(\tilde P)(A \mid h_t), \\
        &= \sum_{A=0}^1 \Gamma'_a(\pi_t(P)) (A - \pi_t(P)) \left[A\pi_t(\tilde P) + (1- A)(1-\pi_t(\tilde P)) \right], \\
        &= \Gamma'_a(\pi_t(P)) \sum_{A=0}^1  (A - \pi_t(P)) \left[A\pi_t(\tilde P) + (1- A)(1-\pi_t(\tilde P)) \right], \\
        &= \Gamma'_a(\pi_t(P)) \left[-\pi_t(P)(1-\pi_t(\tilde P)) + (1-\pi_t(P))\pi_t(\tilde P) \right], \\
        &= \Gamma'_a(\pi_t(P)) (\pi_t(\tilde P) - \pi_t(P)).
    \end{align*}
    Hence given a bounded derivative, (ii) holds. Next note that (i) is equivalent to 
    \begin{equation*}
        dQ_t(P) - dQ_t(\tilde P) + \int_{\mathcal{A}_t}\chi_t(P) \, d\pi_t(\tilde P) = \Gamma_a(\pi_t(P)) -  \Gamma_a(\pi_t(\tilde P)) +  \Gamma'_a(\pi_t(P))(\pi_t(\tilde P) - \pi_t(P)).
    \end{equation*}
    Using the Lagrange form of a Taylor series expansion of $\Gamma_a(\pi_t(\tilde P))$ around $\pi_t(P)$, 
    \begin{equation*}
        \Gamma_a(\pi_t(\tilde P)) =  \Gamma_a(\pi_t(P)) +  \Gamma'_a(\pi_t(P))(\pi_t(\tilde P) - \pi_t(P)) + \frac{1}{2} \Gamma''_a(\xi)(\pi_t(\tilde P) - \pi_t(P))^2,
    \end{equation*}
    for some $\xi$ between $\pi_t(P)$ and $\pi_t(\tilde P)$. This means 
    \begin{equation*}
        \Gamma_a(\pi_t(P)) - \Gamma_a(\pi_t(\tilde P)) + \Gamma'_a(\pi_t(P))(\pi_t(\tilde P) - \pi_t(P)) = \frac{-1}{2} \Gamma''_a(\xi)(\pi_t(\tilde P) - \pi_t(P))^2.
    \end{equation*}
    Therefore given that the second derivative is bounded, (i) holds. Then (iii) follows trivially due to the bounded first order derivative, and (iv) holds from the bounded first order derivative plus the mean value theorem. Finally, (v) holds from our bounded second order derivative assumption plus the mean value theorem. Hence we have proven the desired result.
    
\end{proof}

In the following proofs Assumption~\ref{ass:time-varying-conditions}(b) (ii) is often cited, with an implicit reference to the above Lemma~\ref{lem:smoothness}. Further, some instances may implicitly use the special case where $\tilde P = P_0$. The following result bounds the differences in the Radon-Nikodym derivatives by the differences in the propensity scores by utilizing a Taylor expansion. This will be utilized in Lemma~\ref{lem:Lipschitz_multiple_time}.
\begin{lemma}
\label{lem:weight_lip}
     For any stochastic intervention let $dQ_t(P)(a_t \mid h_t) = \Gamma_{A_t}(\pi_t(P)(h_t))$, where $\Gamma_{A_t}$ is twice continuously differentiable with bounded first and second derivatives of $\pi_t(P)$. Let
    \begin{equation*}
        w_t(P) = \frac{dQ_t(P)(A_t \mid H_t)}{dP(A_t \mid H_t)}
    \end{equation*}
    and $A_t \in \{0,1\}$. Then given Assumption~\ref{ass:time-varying-conditions} (b) (iii) the following holds
    \begin{equation*}
         | w_t(P) - w_t(\tilde P) | \lesssim | \pi_t(P) - \pi_t(\tilde P)|.
     \end{equation*}
\end{lemma}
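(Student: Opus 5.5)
The plan is to reduce the claim to a one-dimensional statement about the scalar maps $\pi \mapsto \Gamma_a(\pi)/\pi$ and $\pi \mapsto \Gamma_a(\pi)/(1-\pi)$. I will then show these maps are Lipschitz with constant controlled by $\sup|\Gamma''_a|$. The naive decomposition fails, so I avoid it. That decomposition writes
\begin{equation*}
w_t(P)-w_t(\tilde P)=\frac{\Gamma_{A_t}(\pi_t(P))-\Gamma_{A_t}(\pi_t(\tilde P))}{d\pi_t(P)}+w_t(\tilde P)\,\frac{d\pi_t(\tilde P)-d\pi_t(P)}{d\pi_t(P)}.
\end{equation*}
By Lemma~\ref{lem:smoothness}(iv) and Assumption~\ref{ass:time-varying-conditions}(b)(iii), both numerators are $\lesssim|\pi_t(P)-\pi_t(\tilde P)|$. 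However, it still leaves the factor $1/d\pi_t(P)$, which is not bounded without strong positivity. The bounded-weights assumption controls the ratio $\Gamma/d\pi$, not $1/d\pi$ itself, so the near-boundary behaviour has to come from the structure of $\Gamma_a$.

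Step 1 splits on the observed treatment $A_t\in\{0,1\}$. When $A_t=1$ we have $d\pi_t(P)(A_t\mid H_t)=\pi_t(P)$, so $w_t(P)=g_1(\pi_t(P))$ with $g_1(\pi):=\Gamma_1(\pi)/\pi$. When $A_t=0$ we have $w_t(P)=g_0(\pi_t(P))$ with $g_0(\pi):=\Gamma_0(\pi)/(1-\pi)$. It therefore suffices to show that $g_1$ and $g_0$ have bounded derivative on $(0,1)$; the mean value theorem then gives the claim pointwise.

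Step 2 establishes the boundary values $\Gamma_1(0)=0$ and $\Gamma_0(1)=0$. Weak positivity (Assumption~\ref{ass:time-varying-identifying}(c)) says the intervention puts no mass on a treatment value with zero propensity. Equivalently, bounded weights force $\Gamma_1(\pi)\le C\pi\to 0$ as $\pi\downarrow 0$, and continuity of $\Gamma_1$ then gives $\Gamma_1(0)=0$. The same argument gives $\Gamma_0(1)=0$. This is the step I expect to need the most care: one must argue it for the map $\Gamma_a$ itself, which is a fixed function of $\pi$ for the given $\delta$, rather than only along the observed propensities. I would state it as a consequence of (b)(iii) holding for propensities arbitrarily close to the boundary, or, failing that, simply of the intervention being valid, as it is for the IPSI and PTI.

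Step 3 uses the integral form of Taylor's theorem. With $\Gamma_1(0)=0$,
\begin{equation*}
g_1(\pi)=\int_0^1\Gamma_1'(s\pi)\,ds,\qquad g_1'(\pi)=\int_0^1 s\,\Gamma_1''(s\pi)\,ds,
\end{equation*}
so $|g_1'|\le\tfrac12\sup|\Gamma_1''|\le C/2$ by Assumption~\ref{ass:time-varying-conditions}(b)(ii). Symmetrically, with $u=1-\pi$ and $\Gamma_0(1)=0$,
\begin{equation*}
g_0(\pi)=-\int_0^1\Gamma_0'\bigl(1-s(1-\pi)\bigr)\,ds,
\end{equation*}
whose derivative is likewise bounded by $C/2$. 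Applying the mean value theorem to $g_{A_t}$ between $\pi_t(P)$ and $\pi_t(\tilde P)$ yields
\begin{equation*}
|w_t(P)-w_t(\tilde P)|\le \tfrac{C}{2}\,|\pi_t(P)-\pi_t(\tilde P)|,
\end{equation*}
uniformly over $P,\tilde P\in S_n$. As a byproduct this also shows $|g_a|\le C$, which is consistent with the bounded-weights condition (b)(iii).
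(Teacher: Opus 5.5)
Your proof is correct and follows the same route as the paper: reduce to the scalar maps $\Gamma_1(\pi)/\pi$ and $\Gamma_0(\pi)/(1-\pi)$, obtain $\Gamma_1(0)=0$ (resp.\ $\Gamma_0(1)=0$) from the bounded-weight condition, bound the derivative on $[0,1]$, and apply the mean value theorem. The one difference is how you bound the derivative. You use the representation $\Gamma_1(\pi)/\pi=\int_0^1\Gamma_1'(s\pi)\,ds$, which needs only $C^2$ and gives the explicit constant $C/2$. The paper instead extends $w$ to a $C^1$ function at the boundary by a Taylor expansion and then appeals to compactness; its expansion is written with $O(\pi^3)$ remainders, though $o(\pi^2)$ would suffice.
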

\begin{proof}
We consider the case when $A_t = 1$ and note that $A_t = 0$ follows similarly, with $1- \pi$ replacing $\pi$ as $dP(A_t \mid H_t)$. For ease of notation define the following
\begin{equation*}
    w(\pi):= \frac{\Gamma_1(\pi)}{\pi}.
\end{equation*}
We will show that $w$ extends to a $C^1$ function on [0,1]. By assumption $\Gamma_1$ is twice continuously differentiable with bounded first and second derivatives of $\pi_t(P)$. That is, on (0,1] $w(\pi)$ is well defined and continuously differentiable. To extend this to 0, we will use a Taylor expansion of $\Gamma(\pi)$ around $\pi = 0$,
\begin{equation*}
    \Gamma_1(\pi) = \Gamma'_1(0)\pi + \frac{1}{2}\Gamma''_1(0)\pi^2 + O(\pi^3).
\end{equation*}
Note that $\Gamma_1(0) = 0$ holds by Assumption~\ref{ass:time-varying-conditions}(b) (iii). Also, 
\begin{equation*}
    \Gamma'_1(\pi) = \Gamma'_1(0) + \Gamma''_1(0) \pi + O(\pi^2).
\end{equation*}
This implies 
\begin{equation*}
    w(\pi) =  \Gamma'_1(0) + \frac{1}{2} \Gamma''_1(0)\pi + O(\pi^2), 
\end{equation*}
which means $\lim_{\pi \to 0}w(\pi) = \Gamma'_1(0) < \infty$. Also,
\begin{align*}
    w'(\pi) &=  \frac{\pi \Gamma'_1(\pi)  - \Gamma_1(\pi)}{\pi^2}, \\
    &= \frac{1}{\pi^2}\left[ (\Gamma'_1(0)\pi + \Gamma''_1(0) \pi^2 + O(\pi^3) ) - (\Gamma'_1(0)\pi + \frac{1}{2}\Gamma''_1(0)\pi^2 + O(\pi^3) ) \right].
\end{align*}
Therefore  $\lim_{\pi \to 0}w'(\pi) = \frac{1}{2}\Gamma''_1(0) < \infty$ by assumption. Hence $w \in C^1([0,1])$. 

Next because $w'$ is continuous on the compact domain $[0,1]$, it attains its supremum, 
\begin{equation*}
    \|w'\|_\infty = \sup_{\pi \in [0,1]}|w'(\pi)| < \infty.
\end{equation*}
Since $w \in C^1([0,1])$, by the Mean Value Theorem there exists $\tilde \pi$ between $\pi_t(P)$ and $\pi_t(\tilde P)$ such that 
\begin{align*}
    |w(\pi_t(P)) - w(\pi_t(\tilde P))|,
    &= |w'(\tilde \pi)| |\pi_t(P) - \pi_t(\tilde P)|, \\
    &\le \|w'\|_\infty |\pi_t(P) - \pi_t(\tilde P) |.
\end{align*}
The case when $A_t = 0$ follows analogously, with $\pi$ being replaced by $1-\pi$. Hence we have proven the desired result. 
\end{proof}

\subsection{SoftBART Contraction Lemmas}
Next we adapt the convergence results from \citet{Linero2018brte} to a growing H\"older ball. This is done to relax the propensity score model to a vanishing positivity assumption as opposed to strong positivity.
\begin{lemma}
\label{lem:growing_radius}
    Fix $\alpha > 0$, $d \ge 1$ and suppose the H\"older radius grows at most poly-logarithmically,
    \begin{equation}\label{eq:radius-growth}
      R_n \lesssim (\log n)^c \quad \text{for some fixed } 0 \le c \le d+1.
    \end{equation}
    For any $f_0 \in \mathcal{C}^{\alpha,R_n}([0,1]^d)$, set
    \begin{equation*}
      \epsilon_n = n^{-\alpha/(2\alpha+d)}(\log n)^{t^\ast} + \sqrt{n^{-1}d\log p},
      \qquad
      t^\ast = \frac{\alpha(d+1)}{2\alpha+d} + \frac{cd}{2\alpha+d}.
    \end{equation*}
    If $n\epsilon_n^2 \to \infty$ and $\epsilon_n \to 0$ as $n \to \infty$, then for all
    sufficiently large constant $M > 0$, we have
    \begin{equation*}
      \Pi_{n}\!\left[\|f - f_0\|_n \ge M\epsilon_n\right] \to 0,
      \quad \text{in probability as } n \to \infty.
    \end{equation*}
    In particular, under \eqref{eq:radius-growth} the inflation factor
    $R_n^{d/(2\alpha+d)}$ is itself poly-logarithmic, so the rate coincides with the
    fixed-radius rate of \citet{Linero2018brte} up to a logarithmic factor.
\end{lemma}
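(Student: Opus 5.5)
The plan is to follow the general posterior contraction theorem of \citet{ghosal2007noniid} for fixed-design regression with Gaussian errors, as \citet{Linero2018brte} do, and to redo only the prior-mass step with the radius $R_n$ tracked explicitly. That theorem needs three conditions. The first is prior mass: $\Pi(\|f-f_0\|_\infty\le \bar\epsilon_n)\ge e^{-C n\bar\epsilon_n^2}$. The second is a sieve $\mathcal{F}_n$ with $\log N(\epsilon_n,\mathcal{F}_n,\|\cdot\|_n)\lesssim n\epsilon_n^2$. The third is the remaining-mass condition $\Pi(\mathcal{F}_n^c)\le e^{-(C+4)n\bar\epsilon_n^2}$. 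Exponential tests for $\|\cdot\|_n$ in Gaussian regression are standard, so once these three hold the conclusion $\Pi_n[\|f-f_0\|_n\ge M\epsilon_n]\to 0$ follows for $M$ large. The sparse term $\sqrt{n^{-1}d\log p}$ comes unchanged from the prior on splitting proportions (Dirichlet over $p$ coordinates), exactly as in \citet{Linero2018brte}, so I would focus on the smooth part.

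\textbf{Approximation step.} This is where $R_n$ enters. Linero and Yang show that any $f_0\in\mathcal{C}^{\alpha,R}([0,1]^d)$ can be approximated to sup-error $\eta$ by a soft tree ensemble with about $K\asymp (R/\eta)^{d/\alpha}$ effective leaves. The leaf weights are bounded by a multiple of $R$, and the bandwidth is $\tau\asymp(\eta/R)^{1/\alpha}$. I would re-derive this with $R$ kept explicit. Since the ensemble is linear in $f_0$, rescaling $f_0/R$ into the unit ball gives $K\asymp (R_n/\eta)^{d/\alpha}$ directly. The prior probability of landing in an $\eta$-neighborhood of this ensemble has three pieces: a topology/number-of-trees term, a bandwidth term, and a Gaussian leaf-ball term. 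Together they are bounded below by
\[
\exp\{-c_1 K\log(n)\}\times \exp\{-c_2 K \log(R_n/\eta)\}.
\]
Here the Gaussian leaf prior contributes an extra factor of order $\exp(-K R_n^2/\sigma_\mu^2)$. Since $R_n$ is only polylogarithmic and $K$ dominates, this extra factor is absorbed after one more log power. Setting $K\log n\asymp n\bar\epsilon_n^2$ with $\eta=\bar\epsilon_n$ gives
\[
\bar\epsilon_n^{2+d/\alpha}\asymp R_n^{d/\alpha}\,\frac{\log n}{n},
\]
so $\bar\epsilon_n\asymp n^{-\alpha/(2\alpha+d)}(\log n)^{\alpha/(2\alpha+d)}R_n^{d/(2\alpha+d)}$. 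Multiplying by the additional $\log$ factors from Linero and Yang's leaf and depth accounting, which give their $(\log n)^{\alpha(d+1)/(2\alpha+d)}$ power, and by $R_n^{d/(2\alpha+d)}\lesssim(\log n)^{cd/(2\alpha+d)}$, yields the stated exponent $t^\ast$.

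\textbf{Sieve and remaining mass.} I would take the sieve $\mathcal{F}_n$ to be ensembles with at most $\bar K_n\asymp n\epsilon_n^2/\log n$ total leaves, leaf weights in $[-B_n,B_n]$ with $B_n$ polynomial in $n$, and bandwidths in $[\tau_n^L,\tau_n^U]$ polynomial in $n$. Lemma~\ref{lem:SoftBART-Bracketing-Bound} gives brackets, hence covers, with $\log N\lesssim \bar K_n\log(n)\lesssim n\epsilon_n^2$. The remaining mass is controlled by the tails of the priors on the number of leaves (geometric or Poisson type), on the leaf weights (Gaussian), and on the bandwidth (exponential type). These bounds do not involve $R_n$, except that $\epsilon_n$ is now larger, which only helps.

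\textbf{Main obstacle.} The hard step is checking that the growing radius does not break the prior-mass bound beyond a polylog factor. Concretely, the approximating leaf weights grow like $R_n$, so the Gaussian leaf prior mass near them costs $\exp(-O(KR_n^2))$. I need $KR_n^2\lesssim n\bar\epsilon_n^2$, and this is exactly where the restriction $c\le d+1$ on the polylog growth is used: it keeps the inflation inside the logarithmic exponent rather than changing the polynomial rate. If that budget is too tight, my first fallback is to increase $K$ by a further $\log$ factor and absorb the difference into $t^\ast$.
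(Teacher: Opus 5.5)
\textbf{Gap: the leaf-prior magnitude cost.} Your overall architecture matches the paper's. You use the \citet{ghosal2007noniid} machinery as \citet{Linero2018brte} do, and you keep $R_n$ explicit in the approximation step (error $\propto R_n\tau^\alpha$, so $\tau\asymp(\eta/R_n)^{1/\alpha}$). You redo only the prior-mass bound and balance to get the $R_n^{d/(2\alpha+d)}$ inflation.

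The gap is in the step you flag as the main obstacle: the cost of putting prior mass near the approximating leaf values. You use a Gaussian leaf prior, so this cost is quadratic. It is of order $\sum_u\tilde\ell_u^2/\sigma_\mu^2\asymp R_n^2\tau^{-d}$ (your $KR_n^2$). For this to be dominated by the topology/precision cost $\tau^{-d}\log^{d+1}n$ without changing the exponent, you need $R_n^2\lesssim\log^{d+1}n$, i.e.\ $c\le(d+1)/2$. Against your own $K\log n$ budget you would even need $c\le 1/2$. So $c\le d+1$ is not ``exactly'' what your argument needs. Your fallback of adding a further log power gives, whenever $2c>d+1$, a log exponent of order $(2c\alpha+cd)/(2\alpha+d)$. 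That is strictly larger than $t^\ast$, so it proves a weaker statement than the lemma.

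The role you assign to $c\le d+1$ is also off. Any polylogarithmic $R_n$ leaves the polynomial rate intact. The restriction only pins down the logarithmic exponent.

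\textbf{The missing ingredient.} The paper uses the leaf-prior condition of Linero and Yang, (P4): a Laplace-type lower bound $\pi_\mu(\mu)\ge B_1e^{-B_2|\mu|}$. With it, the leaf factor is at least $(2\eta^\ast B_1)^{N+K}\exp\{-B_2\sum_u|\tilde\ell_u|-O(1)\}$. The magnitude cost is then linear and is controlled by the $\ell_1$ bound $\sum_u|\tilde\ell_u|\le CR_n\tau^{-d}$ from the updated approximation lemma. Now $c\le d+1$ is precisely the condition $R_n\tau^{-d}\lesssim\tau^{-d}\log^{d+1}n$, so the exponent becomes $\tau^{-d}\max\{\log^{d+1}n,R_n\}=\tau^{-d}\log^{d+1}n$. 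Choosing $\delta^\ast=R_n\tau^\alpha$ and $\epsilon=\tau^{d+\alpha}$, and solving $\tau^{-d}\log^{d+1}n\asymp nR_n^2\tau^{2\alpha}$, gives exactly $t^\ast$.

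\textbf{A smaller point.} Your remark that a larger $\epsilon_n$ ``only helps'' the remaining-mass condition is backwards, since $e^{-(C+4)n\epsilon_n^2}$ becomes smaller. Tying $\bar K_n$ to $n\epsilon_n^2$ does handle it. The paper does not redo this step; it treats the sieve and remaining-mass conditions as unchanged.
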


\begin{proof}
In \citet{Linero2018brte} the authors establish the following three properties: for some sieve $\{\mathcal{F}_n\}$ and $\bar\epsilon_n$ a fixed multiple of $\epsilon_n$,
\begin{equation}\label{eq:ggv-prior-conditions}
\begin{aligned}
    \Pi(\|f - f_0\|_\infty \le \epsilon_n) &\ge e^{-n\epsilon_n^2}, \\
    \Pi(f \notin \mathcal{F}_n) &\le e^{-4n\epsilon_n^2}, \\
    \log N(\bar\epsilon_n, \mathcal{F}_n, \|\cdot\|_\infty) &\le n\bar\epsilon_n^2,
\end{aligned}
\end{equation}
which, through \citet{ghosal2000convergence, ghosal2007noniid}, yield a posterior convergence rate of $\max\{\epsilon_n, \bar\epsilon_n\}$, a fixed multiple of $\epsilon_n$, under $\|\cdot\|_n$. This is done on a H\"older ball of fixed radius $R$. In this proof, we update their work to allow for the radius $R_n \to \infty$. \\

\noindent \textbf{Approximation with explicit $R_n$}\\
First we update Lemma 1 of \citet{Linero2018brte}. This lemma states that for any $\epsilon > 0, \tau > 0$ there exists a sum of soft trees $\tilde f = \sum_{k \le K}g(x; \tilde{\mathcal K}_k, \tilde{\mathcal L}_k)$ where each tree $\tilde{\mathcal K}_k$ has at most $2d$ branches, $K \le C \tau^{-d}\log^d(1/\epsilon)$, and
\begin{equation}
\label{eq:R_n-updates}
\sum_{k,j}|\tilde \ell_{k,j}|\le C\tau^{-d} \|f_0\|_\infty \le C R_n \tau^{-d}, \quad  \|\tilde f - f_0\|_\infty \lesssim D R_n(\tau^\alpha + \epsilon \tau^{-d}),
\end{equation}
for some constant $C$. Note that the proof update simply uses $\|f_0\|_\infty \le \|f_0 \|_\alpha \le R_n$. \\

\noindent \textbf{Prior concentration}\\
From \eqref{eq:ggv-prior-conditions} the only condition to be updated is the first, the prior concentration. Therefore we must update the proof of Theorem 2 in \citet{Linero2018brte}. Note that for a growing H\"older ball, the update occurs when Lemma 1 is invoked. That is, we must account for our new bounds in \eqref{eq:R_n-updates}. This is done in two places. 
From our updated Lemma 1 we have, 
\begin{equation*}
    \|\tilde f - f_0\|_\infty \lesssim D R_n(\tau^\alpha + \epsilon \tau^{-d}).
\end{equation*}

Next, using the fact that the perturbation argument in the proof of Theorem~2 of \citet{Linero2018brte} is unchanged, if the split locations, bandwidth parameters, and leaf values satisfy the required neighborhood conditions, then
\begin{equation*}
\|f-\tilde f\|_\infty \le C\delta^*,
\end{equation*}
for any $\delta^* >0$. Consequently, by the triangle inequality,
\begin{equation*}
\|f-f_0\|_\infty
\le
\|f-\tilde f\|_\infty
+
\|\tilde f-f_0\|_\infty
\le
C\delta^*
+
CR_n(\tau^\alpha+\epsilon\tau^{-d}),
\end{equation*}
where the final inequality follows from \eqref{eq:R_n-updates}. Then,
\begin{equation*}
\left\{\|f-\tilde f\|_\infty \le C\delta^*\right\}
\subseteq
\left\{
\|f-f_0\|_\infty
\le
C\bigl(\delta^*+R_n(\tau^\alpha+\epsilon\tau^{-d})\bigr)
\right\},
\end{equation*}
and hence
\begin{align*}
\Pi\!\left(
\|f-f_0\|_\infty
\le
C\bigl(\delta^*+R_n(\tau^\alpha+\epsilon\tau^{-d})\bigr)
\right)
\ge
\Pi\!\left(
\|f-\tilde f\|_\infty
\le
C\delta^*
\right).
\end{align*}

The remainder of the prior concentration argument proceeds as in \citet{Linero2018brte}. The only additional modification arises from the updated bound
\begin{equation*}
\sum_{k,j} |\tilde\ell_{k,j}|
\le
CR_n\tau^{-d},
\end{equation*}
which replaces the fixed-radius bound $\sum_{k,j} |\tilde\ell_{k,j}| \le C\tau^{-d}$. Under Assumption~P4 the leaf coefficients are iid with a density satisfying the Laplace-type lower bound $\pi_\mu(\mu)\ge B_1 e^{-B_2|\mu|}$. Conditional on $K=\tilde K$ and $\mathcal K=\tilde{\mathcal K}$ for the number of trees and the tree topology, the perturbation bound guarantees $\|f-\tilde f\|_\infty\le C\delta^*$ whenever the split locations, bandwidths, and leaf values all lie within their respective neighborhoods. Write $\eta^*:=CK^{-1}\tau^{d}\delta^*$ for the leaf-value half-width. The split-location and bandwidth factors contribute the usual precision cost, and we focus on the leaf-value factor, where the dependence on $R_n$ arises. 

By independence of the $\ell_u$, $u=1,\ldots,N+K$, and the Laplace tail,
\begin{align*}
\Pi\!\left(\max_{u}|\ell_u-\tilde\ell_u|\le\eta^*\right)
&=\prod_{u}\int_{\tilde\ell_u-\eta^*}^{\tilde\ell_u+\eta^*}\pi_\ell(\ell)\,d\ell
\;\ge\;
\prod_{u} 2\eta^* B_1\, e^{-B_2(|\tilde\ell_u|+\eta^*)}\\[2pt]
&=(2\eta^* B_1)^{N+K}\,
\exp\!\Big\{-B_2\textstyle\sum_{u}|\tilde\ell_u|-B_2\eta^*(N+K)\Big\},
\end{align*}
where the inequality uses $|\ell|\le|\tilde\ell_u|+\eta^*$ throughout the interval of integration. Taking logarithms\footnote{The bound is preserved since $\log$ is monotone increasing and the right-hand side is strictly positive.} and noting that, since $N+K\asymp K$\footnote{\citet{Linero2018brte} give the inequality $N \le 2dK$.} and $\eta^*=CK^{-1}\tau^{d}\delta^*$, the term $B_2\eta^*(N+K)\asymp \tau^{d}\delta^* = O(1)$ is absorbed into the constant,
\begin{equation}\label{eq:leaf-split}
\log\Pi_{\text{leaves}}
\;\ge\;
-\underbrace{(N+K)\log\!\big(\tfrac{1}{2\eta^* B_1}\big)}_{\text{precision}}
\;-\;
\underbrace{B_2\textstyle\sum_{u}|\tilde\ell_u|}_{\text{magnitude}} .
\end{equation}

The two terms in \eqref{eq:leaf-split} behave quite differently in $R_n$. The precision term does not involve the targets $\tilde\ell_u$ at all. It is the leaf count times the log-reciprocal of the neighborhood width, with $N+K\asymp K \lesssim \tau^{-d}\log^{d}(\epsilon^{-1})$ and $\log(1/\eta^*)\asymp\log((\tau\delta^*)^{-1})$, so it is $\asymp\tau^{-d}\log^{d}(\epsilon^{-1})\log((\tau\delta^*)^{-1})$ and free of $R_n$ to leading order. The magnitude term is the only place the coefficient sizes enter, and it is controlled by the aggregate bound from the updated Lemma~1,
\begin{equation*}
\sum_{k,\ell}|\tilde\mu_{k,\ell}|\le C\tau^{-d}\|f_0\|_\infty\le CR_n\tau^{-d}.
\end{equation*}

In the fixed-radius setting $\|f_0\|_\infty\le R$ is constant, the magnitude term is $O(\tau^{-d})$ and is dominated by the precision term, which is why it is absorbed and never appears separately in \citet{Linero2018brte}. When $R_n\to\infty$ it must be tracked on its own. Therefore combining the two contributions in \eqref{eq:leaf-split},
\begin{equation*}
\Pi\!\left(
\|f-\tilde f\|_\infty \le C\delta^*
\,\middle|\,
K=\tilde K,\,
\mathcal K=\tilde{\mathcal K}
\right)
\ge
\exp\!\left\{
-
C\tau^{-d}
\log^d(\epsilon^{-1})
\log\bigl((\tau\delta^*)^{-1}\bigr)
-
CR_n\tau^{-d}
\right\},
\end{equation*}
with all remaining bounds unchanged.

\noindent \textbf{Collecting the bounds}\\
Collecting the preceding bounds: the topology factor $\exp\{-C\tau^{-d}\log^d(\epsilon^{-1})\}$, the conditional bound above, and the
selection factor $\exp\{-Cd\log p\}$. We obtain
\begin{equation*}
\Pi\!\left(
\|f-f_0\|_\infty
\le
C\left[\delta^* + R_n\left(\tau^\alpha+\epsilon\tau^{-d}\right)\right]
\right)
\ge
\exp\left\{
-C\tau^{-d}\log^d(\epsilon^{-1})\log((\tau\delta^*)^{-1})
-CR_n\tau^{-d}
-Cd\log p
\right\}.
\end{equation*}
Following \citet{Linero2018brte}, choose
\begin{equation*}
\delta^* = R_n\tau^\alpha,
\qquad
\epsilon = \tau^{d+\alpha},
\end{equation*}
so that the approximation error satisfies
\begin{equation*}
\delta^* + R_n\left(\tau^\alpha+\epsilon\tau^{-d}\right)
\asymp
R_n\tau^\alpha .
\end{equation*}
With these choices $\epsilon^{-1}=\tau^{-(d+\alpha)}$ and $(\tau\delta^*)^{-1}=R_n^{-1}\tau^{-(\alpha+1)}$. Since $R_n$ is poly-logarithmic while $\tau^{-1}$ grows polynomially in $n$,
\begin{equation*}
\log(\epsilon^{-1})\asymp\log((\tau\delta^*)^{-1})\asymp\log(\tau^{-1})\asymp\log n ,
\end{equation*}
so the topology and precision terms combine to $\asymp\tau^{-d}\log^{d+1}n$ and the stochastic complexity exponent reduces to
\begin{equation*}
\tau^{-d}\log^{d+1}n
+
R_n\tau^{-d}
\asymp
\tau^{-d}\max\{\log^{d+1}n,\,R_n\}.
\end{equation*}
To match the smoothness part of the rate we set $\epsilon_n\asymp R_n\tau^\alpha$ and impose the prior-concentration requirement $\tau^{-d}\max\{\log^{d+1}n,\,R_n\}\lesssim n\epsilon_n^2$, i.e.
\begin{equation*}
\tau^{-d}\max\{\log^{d+1}n,\,R_n\}
\;\asymp\;
n\,(R_n\tau^\alpha)^2
=
n\,R_n^{2}\,\tau^{2\alpha}.
\end{equation*}
Solving for $\tau$ yields
\begin{equation*}
\tau^{2\alpha+d}
\asymp
\frac{\max\{\log^{d+1}n,\,R_n\}}{n\,R_n^{2}},
\qquad\text{that is,}\qquad
\tau
\asymp
\left(\frac{\max\{\log^{d+1}n,\,R_n\}}{n\,R_n^{2}}\right)^{1/(2\alpha+d)} ,
\end{equation*}
so that a growing radius forces a finer resolution (smaller $\tau$). Substituting back and using $1-\tfrac{2\alpha}{2\alpha+d}=\tfrac{d}{2\alpha+d}$,
\begin{equation*}
\epsilon_n
\asymp
R_n\tau^\alpha
\asymp
R_n^{\frac{d}{2\alpha+d}}\,
n^{-\frac{\alpha}{2\alpha+d}}\,
\bigl(\max\{\log^{d+1}n,\,R_n\}\bigr)^{\frac{\alpha}{2\alpha+d}} .
\end{equation*}
Finally, the $-Cd\log p$ term in the exponent contributes the variable-selection component $\sqrt{n^{-1}d\log p}$ additively. Under the assumption $R_n\lesssim(\log n)^c$ with $c\le d+1$, the leaf-magnitude term is dominated by the precision term, so $\max\{\log^{d+1}n,\,R_n\}=\log^{d+1}n$ and
\begin{equation*}
\epsilon_n
\asymp
R_n^{\frac{d}{2\alpha+d}}\,
n^{-\frac{\alpha}{2\alpha+d}}\,
(\log n)^{\frac{\alpha(d+1)}{2\alpha+d}}
+
\sqrt{n^{-1}d\log p}.
\end{equation*}
The inflation factor $R_n^{d/(2\alpha+d)}$ is itself poly-logarithmic and the smoothness term becomes $n^{-\alpha/(2\alpha+d)}(\log n)^{t^\ast}$ with
\begin{equation*}
t^\ast=\frac{\alpha(d+1)+cd}{2\alpha+d},
\end{equation*}
which is precisely the rate claimed in the statement of the lemma.
\end{proof}

Next we prove a lemma that converts posterior convergence in the empirical norm to the $L_2(P_0)$ norm for a uniformly bounded function class, given a bound on the bracketing number. This allows the contraction results of \citet{Linero2018brte} to be used in our setup. The argument couples a bracketing of the class with Bernstein's inequality applied to the lower envelopes of the squared brackets, followed by a union bound. Its hypotheses, uniform
boundedness and the bracketing entropy bound, are verified in the proof of Lemma~\ref{lem: SoftBART_smoothness}. For the propensity model the entropy bound
transfers through the (1-Lipschitz) probit link, after which $\pi_{P_0}\in[\kappa_n,1-\kappa_n]$ supplies uniform boundedness, and the outcome model is bounded by assumption. The resulting bound is self-localizing, requiring no prior control of the population radius, and holds with probability $1-e^{-u}$ for any $u>0$. Taking $u=\log n$ and square roots gives, with probability tending to one, $\|h\|_{L_2(P_0)}\le\sqrt2\,\|h\|_n+o(\gamma_n)$, the remainder being $o(\gamma_n)$ since its square $CM^2\bigl(V_n\log(A_nn)+u\bigr)/n$ (with $M=O(1)$ and
$V_n=L_nK_n$) is $o(\gamma_n^2)$.

\begin{lemma}[Empirical-to-population $L_2$ norm transfer]
\label{lem:emp2pop}
Let $g_0$ be fixed and let $\mathcal{H}_n=\{\,h=g-g_0:\ g\in \mathcal{S}_n\,\}$ satisfy, for constants $M\ge 1$, $V_n\ge 1$, $A_n\ge e$,
\begin{enumerate}
  \item[\textnormal{(a)}] \textnormal{(uniform boundedness)}
        $\|h\|_\infty\le M$ for every $h\in\mathcal{H}_n$;
  \item[\textnormal{(b)}] \textnormal{(bracketing entropy)}
        $\log N_{[\,]}(\varepsilon,\mathcal{H}_n,L_2(P_0))
        \le V_n\log(A_n/\varepsilon)$
        for all $0<\varepsilon\le M$.
\end{enumerate}
Then there exists a universal constant $C>0$ such that, for every $u>0$, with $P_0$-probability at least $1-e^{-u}$,
\begin{equation}
  \|h\|_{L_2(P_0)}^2\ \le\ 2\,\|h\|_n^2
  + C\,\frac{M^2\bigl(V_n\log(A_n n)+u\bigr)}{n}
  \qquad\text{for all } h\in\mathcal{H}_n .
  \label{eq:transfer}
\end{equation}
\end{lemma}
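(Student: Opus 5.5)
The plan is a bracketing argument combined with Bernstein's inequality and a union bound, as the preceding paragraph announces. Fix a resolution $\varepsilon_n = M/n$ (any polynomially small choice works, since only $\log(A_n/\varepsilon_n)\lesssim \log(A_n n)$ enters). By hypothesis (b), there is a bracketing $\{[l_j,u_j]\}_{j=1}^{N}$ of $\mathcal{H}_n$ with $\|u_j-l_j\|_{L_2(P_0)}\le\varepsilon_n$ and $\log N\le V_n\log(A_n n/M)$. By hypothesis (a), I truncate each bracket to $[-M,M]$, which preserves coverage and bracket size.

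Next I convert the function brackets into brackets for the squares $h^2$. For $h\in[l_j,u_j]$, set $\bar u_j=\max(|l_j|,|u_j|)$ and define $\underline{s}_j = (l_j\vee 0)^2 + (u_j\wedge 0)^2$. This is zero when the bracket straddles zero and the smaller endpoint squared otherwise, so $\underline{s}_j\le h^2\le \bar u_j^2$. The width satisfies $\bar u_j^2-\underline{s}_j\le 2M|u_j-l_j|$, hence $P_0(h^2-\underline{s}_j)\le 2M\varepsilon_n\le 2M^2/n$.

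I then apply Bernstein's inequality to each lower envelope $\underline{s}_j$. These satisfy $0\le \underline{s}_j\le M^2$ and $\mathrm{Var}(\underline{s}_j)\le M^2P_0\underline{s}_j$. With probability at least $1-e^{-x}$,
\begin{equation*}
P_0\underline{s}_j-\mathbb{P}_n\underline{s}_j\le \sqrt{2M^2P_0\underline{s}_j\,x/n}+\tfrac{M^2x}{3n}.
\end{equation*}
By $\sqrt{ab}\le a/2+b/2$, this gives $P_0\underline{s}_j\le 2\mathbb{P}_n\underline{s}_j + CM^2x/n$. A union bound over the $N$ brackets, with $x=\log N+u$, makes the inequality hold simultaneously for all $j$ with probability at least $1-e^{-u}$.

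Finally, for any $h$ in bracket $j$, I chain the inequalities:
\begin{equation*}
\|h\|_{L_2(P_0)}^2\le P_0\underline{s}_j+2M^2/n\le 2\mathbb{P}_n\underline{s}_j+CM^2(\log N+u)/n+2M^2/n,
\end{equation*}
and $\mathbb{P}_n\underline{s}_j\le\|h\|_n^2$ because $\underline{s}_j\le h^2$ pointwise. Substituting $\log N\le V_n\log(A_nn)$, using $M\ge1$ and $A_n\ge e$ to absorb constants, gives \eqref{eq:transfer}.

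The main obstacle is the construction of the lower envelope. It must lie below $h^2$ pointwise so that the empirical side can be compared with $\|h\|_n^2$, while its population mean stays within $O(M\varepsilon_n)$ of $P_0h^2$. The sign-aware construction above handles this. Everything else is routine: the factor $2$ comes from the self-bounding variance in Bernstein's inequality, which is what makes the bound self-localizing.
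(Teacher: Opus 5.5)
Your proposal is correct and follows essentially the same route as the paper's proof. Both arguments use a fine $L_2(P_0)$ bracketing, the sign-aware lower envelope of $h^2$ (your $\underline{s}_j$ is exactly the paper's $g_j$), Bernstein's inequality with the self-bounding variance $\mathrm{Var}(\underline{s}_j)\le M^2P_0\underline{s}_j$, and a union bound with $x=\log N+u$; the only differences are cosmetic, namely resolution $M/n$ instead of $1/n$ and a closed-form expression for the envelope.
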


\begin{proof}
Fix $\varepsilon:=1/n$ and let $\{[l_j,u_j]\}_{j=1}^{N}$ be a minimal $\varepsilon$-bracketing of $\mathcal{H}_n$ in $L_2(P_0)$, so that $\log N\le V_n\log(A_n n)$ by (b). By (a) we may clip each bracket to $[-M,M]$ without increasing its width. For each bracket define the lower envelope of the squared class,
\begin{equation*}
g_j:=
\begin{cases}
l_j^2,& 0\le l_j,\\
u_j^2,& u_j\le 0,\\
0,& l_j<0<u_j .
\end{cases}
\end{equation*}
If $h\in[l_j,u_j]$ then, since $t\mapsto t^2$ attains its minimum on $[l_j,u_j]$ at $0$ if $0\in[l_j,u_j]$ and at an endpoint otherwise, we have $0\le g_j\le h^2$ pointwise. Moreover $h^2-g_j\le 2M(u_j-l_j)$ pointwise, hence by the Cauchy--Schwarz inequality
\begin{equation}
  P_0 g_j\ \ge\ P_0 h^2 - 2M\varepsilon
  \ =\ \|h\|_{L_2(P_0)}^2 - 2M\varepsilon .
  \label{eq:lower-envelope}
\end{equation}

Each $g_j$ is a fixed function with $0\le g_j\le M^2$, so $\operatorname{Var}_{P_0}(g_j)\le M^2\,P_0 g_j$. By Bernstein's inequality, for any $x>0$, with probability at least $1-e^{-x}$,
\begin{equation*}
  (P_0-\mathbb{P}_n)\,g_j
  \ \le\ \sqrt{\frac{2M^2\,P_0 g_j\,x}{n}}+\frac{M^2 x}{3n}
  \ \le\ \frac{1}{2}\,P_0 g_j+\frac{2M^2 x}{n},
\end{equation*}
using $\sqrt{ab}\le a/2+b/2$ with $a=P_0 g_j$ and $b=2M^2x/n$. Taking $x=\log N+u$ and a union bound over $j\le N$, with probability at least $1-e^{-u}$, simultaneously for all $j$,
\begin{equation}
  \mathbb{P}_n g_j\ \ge\ \frac{1}{2}\,P_0 g_j-\frac{2M^2(\log N+u)}{n}.
  \label{eq:bernstein-union}
\end{equation}

On this event, for any $h\in\mathcal{H}_n$ with bracket index $j$, using $g_j\le h^2$ pointwise, \eqref{eq:bernstein-union}, and
\eqref{eq:lower-envelope},
\begin{equation*}
  \|h\|_n^2\ =\ \mathbb{P}_n h^2\ \ge\ \mathbb{P}_n g_j
  \ \ge\ \frac{1}{2}\|h\|_{L_2(P_0)}^2 - M\varepsilon
  - \frac{2M^2(\log N+u)}{n}.
\end{equation*}
Rearranging and substituting $\varepsilon=1/n$ gives
\begin{equation*}
  \|h\|_{L_2(P_0)}^2\ \le\ 2\|h\|_n^2
  + \frac{2M}{n} + \frac{4M^2(\log N+u)}{n}.  
\end{equation*}
Since $M\ge 1$ and, by $V_n\ge 1$ and $A_n\ge e$, $V_n\log(A_n n)\ge 1$, the middle term satisfies $\tfrac{2M}{n}\le \tfrac{2M^2(V_n\log(A_n n)+u)}{n}$; combining it with the last term and using $\log N\le V_n\log(A_n n)$ yields \eqref{eq:transfer} with $C=6$.
\end{proof}

\begin{lemma}[Probit KL/variation without strong positivity]\label{lem:probit-kl}
Let $p_f^{(i)}=\mathrm{Ber}(\Phi(f(X_i)))$, write $\pi=\Phi(f)$, $\pi_0=\Phi(f_0)$, and suppose $\pi_0\in[\kappa_n,1-\kappa_n]$, equivalently $\|f_0\|_\infty\le R_n=|\Phi^{-1}(\kappa_n)|$, with $\|f-f_0\|_\infty\le\epsilon_n$ and $\epsilon_nR_n=o(1)$. With $K_i(f_0,f)=\mathrm{KL}(p_{f_0}^{(i)}\Vert p_f^{(i)})$ and $V_{2;i}(f_0,f)$ the corresponding KL variation, there is a universal constant $C<\infty$, independent of $n$ and $\kappa_n$, such that
\begin{equation*}
  \frac1n\sum_{i=1}^n K_i(f_0,f)\le C\|f-f_0\|_\infty^2,
  \qquad
  \frac1n\sum_{i=1}^n V_{2;i}(f_0,f)\le C\|f-f_0\|_\infty^2 .
\end{equation*}
\end{lemma}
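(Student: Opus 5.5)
The plan is a pointwise argument: bound each summand $K_i$ and $V_{2;i}$ by $C\,\{f(X_i)-f_0(X_i)\}^2$ with $C$ free of $n$ and $\kappa_n$, then average over $i$ and replace by $\|f-f_0\|_\infty^2$. Fix $x$ and write $p=\pi_0(x)=\Phi(f_0(x))$ and $q=\pi(x)=\Phi(f(x))$.

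\textbf{Step 1: reduce to a $\chi^2$-type quantity.} For KL I would use the elementary bound $\mathrm{KL}(\mathrm{Ber}(p)\Vert\mathrm{Ber}(q))\le \chi^2 = (p-q)^2/\{q(1-q)\}$, which follows from $\log y\le y-1$. For the KL variation I would use $|\log(a/b)|\le |a-b|/\min(a,b)$ on both atoms, which gives
\begin{equation*}
V_{2;i}\lesssim (p-q)^2\Big\{\frac{p}{\min(p,q)^2}+\frac{1-p}{\min(1-p,1-q)^2}\Big\}.
\end{equation*}
This reduces to the same $\chi^2$ form once the ratios $p/q$, $q/p$, $(1-p)/(1-q)$ and $(1-q)/(1-p)$ are known to be bounded.

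\textbf{Step 2: numerator.} By the mean value theorem, $p-q=\phi(\xi)\{f_0(x)-f(x)\}$ for some $\xi$ between $f(x)$ and $f_0(x)$. It then suffices to show that $\phi(\xi)^2/\{q(1-q)\}$ and the ratios above are uniformly bounded.

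\textbf{Step 3: Gaussian tail control (the main obstacle).} Near the boundary $q(1-q)$ can be as small as $\kappa_n$, so boundedness must come from the Mills-ratio inequality
\begin{equation*}
\min\{\Phi(z),1-\Phi(z)\}\ \ge\ c\,\frac{\phi(z)}{1+|z|},
\end{equation*}
which yields $q(1-q)\gtrsim \phi(f(x))/(1+|f(x)|)$. Hence
\begin{equation*}
\frac{\phi(\xi)^2}{q(1-q)}\ \lesssim\ \phi(\xi)\,(1+|f(x)|)\cdot\frac{\phi(\xi)}{\phi(f(x))}.
\end{equation*}
I would control the two factors separately.
\begin{itemize}
\item First factor. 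Since $|f(x)|\le|\xi|+\epsilon_n$, we have $\phi(\xi)(1+|f(x)|)\le \sup_z \phi(z)(1+|z|+1)<\infty$.
\item Second factor. Since $|\xi-f(x)|\le\epsilon_n$ and $|f(x)|,|\xi|\le R_n+\epsilon_n$, we get $\phi(\xi)/\phi(f(x))=\exp\{(f(x)^2-\xi^2)/2\}\le \exp\{\epsilon_n(R_n+\epsilon_n)\}$. This is $O(1)$ precisely because $\epsilon_nR_n=o(1)$, and this is the only place that assumption enters.
\end{itemize}

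\textbf{Step 4: bounded probability ratios.} The same two-sided Mills bounds, $\phi(z)/(1+|z|)\lesssim 1-\Phi(z)\lesssim \phi(z)/(1+|z|)$ for $z\ge0$ (and symmetrically for $z<0$), combined with the $\phi$-ratio estimate just obtained, show that $\Phi(f)/\Phi(f_0)$ and $\{1-\Phi(f)\}/\{1-\Phi(f_0)\}$ lie in $[c,C]$. In the central region $|z|\le 1$ these ratios are trivially bounded. This closes the $V_{2;i}$ bound.

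\textbf{Conclusion.} Both pointwise bounds then hold with a universal constant. Averaging over $i$ and bounding each $|f(X_i)-f_0(X_i)|$ by $\|f-f_0\|_\infty$ gives the claim.
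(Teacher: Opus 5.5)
Your proof is correct and follows essentially the paper's route. Both arguments bound KL by $\chi^2$ and apply the mean value theorem. Both lower-bound $\pi(1-\pi)$ via a Mills-ratio inequality and control $\phi(\xi)/\phi(f(x))$ through $\epsilon_nR_n=o(1)$. Your form $\min\{\Phi,1-\Phi\}\ge\phi(z)/(1+|z|)$ simply spares the paper's split on $|a|\ge1$ versus $|a|<1$, where $a=f(x)$ and $b=f_0(x)$.

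The one real difference is in $V_{2;i}$. The paper bounds $|\log(\pi_0/\pi)|\le(|b|+1+O(\epsilon_n))|a-b|$ directly via $\phi/\Phi\le|z|+1$ and then uses $\Phi(b)(1+|b|)^2=O(1)$. That avoids your detour through bounded probability ratios, and it does not need $\epsilon_nR_n=o(1)$ a second time.
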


\begin{proof}
This is the Kullback--Leibler estimate underlying Theorem~13 of
\citet{ghosal2007noniid} (binary link $A_i\sim\mathrm{Ber}(H_0(z_i))$, here $H_0=\pi_0=\Phi(f_0)$), but with that theorem's standing positivity hypothesis
$H_0(a-)>0, H_0(b)<1$ removed: we recover the same quadratic bound
uniformly as $\pi_0\to0,1$, at the cost of the growing radius $R_n$.

The explicit forms $K_i=\pi_0\log(\pi_0/\pi)+(1-\pi_0)\log\{(1-\pi_0)/(1-\pi)\}$ and $V_{2;i}\le 2\pi_0\log^2(\pi_0/\pi)+2(1-\pi_0)\log^2\{(1-\pi_0)/(1-\pi)\}$ are exactly as in \citet{ghosal2007noniid} (with $\pi=\Phi(f)$ in place of their c.d.f.\ $H$). Fix $x$ and set $a=f(x)$, $b=f_0(x)$. By the $\chi^2$ bound and the mean value theorem $\pi_0-\pi=\Phi'(\xi)(b-a)$,
\begin{equation*}
  K_i \le \frac{(\pi_0-\pi)^2}{\pi(1-\pi)}
      = \frac{\phi(\xi)^2}{\Phi(a)(1-\Phi(a))}\,(a-b)^2 .
\end{equation*}
On the lower tail $b\le0$, Mills' inequality $\Phi(a)\ge\phi(a)|a|/(a^2+1)$ and $1-\Phi(a)\ge\tfrac12$ give $\phi(\xi)^2/[\Phi(a)(1-\Phi(a))]\le 2(\phi(\xi)/\phi(a))^2\,\phi(a)(a^2+1)/|a|$. Since $|\xi-a|\le\epsilon_n$ and $|\xi+a|\le 2R_n+O(\epsilon_n)$,
\begin{equation*}
  \frac{\phi(\xi)}{\phi(a)}=\exp\!\big(-\tfrac12(\xi-a)(\xi+a)\big)
  \le \exp\!\big(\epsilon_n(R_n+O(\epsilon_n))\big)=e^{o(1)}=O(1)
\end{equation*}
by $\epsilon_nR_n=o(1)$. For the remaining factor we split on $|a|$: when $|a|\ge1$, $a^2+1\le2a^2$ gives $\phi(a)(a^2+1)/|a|\le2\phi(a)|a|\le2\sup_z\phi(z)|z|<\infty$. When $|a|<1$ the Mills step is unnecessary, as $\Phi(a)(1-\Phi(a))\ge\Phi(-1)(1-\Phi(-1))>0$ directly bounds the prefactor by $\|\phi\|_\infty^2/[\Phi(-1)(1-\Phi(-1))]<\infty$. Either way $K_i\le C_(a-b)^2$ pointwise. 

For $V_{2;i}$, Mills' bound $\phi/\Phi\le|\cdot|+1$ gives $|\log(\pi_0/\pi)|\le(|b|+O(\epsilon_n)+1)|a-b|$, so $\pi_0\log^2(\pi_0/\pi)\le\Phi(b)(|b|+1)^2(a-b)^2\le C(a-b)^2$ via the cancellation $\Phi(b)\,b^2\asymp\phi(b)|b|=O(1)$. The complementary term is $O((a-b)^2)$ as $1-\pi_0\ge\tfrac12$. The upper tail $b\ge0$ is symmetric. Averaging over $i$ gives both bounds.
\end{proof}

The next Lemma allows for Lemma~\ref{lem:growing_radius} to be used inside a contraction proof for the propensity score model. This bounds the average KL divergence and its average second order moment in terms of $L_\infty$.
\begin{lemma}[Probit-SoftBART propensity contraction]\label{lem:probit-contraction}
Let $f$ carry the SoftBART prior of \citet{Linero2018brte}, let $p$ be fixed, and suppose $\pi_0=\Phi(f_0)\in[\kappa_n,1-\kappa_n]$ with $\kappa_n\to0$ polynomially, so that $R_n=|\Phi^{-1}(\kappa_n)|\asymp\sqrt{2\log(1/\kappa_n)}\lesssim(\log n)^{c}$. Then, under $A_i\mid X_i\sim\mathrm{Ber}(\Phi(f(X_i)))$,
\begin{equation*}
  \Pi\!\left(\|\pi-\pi_0\|_{L_2(P_0)}\ge M\epsilon_n \,\middle|\, A_{1:n},X_{1:n}\right)
  \ \xrightarrow{P_0}\ 0
\end{equation*}
for $M$ large, with $\epsilon_n=n^{-\alpha/(2\alpha+d)}(\log n)^{t^\ast} +\sqrt{n^{-1}d\log p}$ the growing-radius rate of Lemma~\ref{lem:growing_radius}.
\end{lemma}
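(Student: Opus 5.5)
The proof runs through the general posterior contraction theorem of \citet{ghosal2007noniid} for independent, non-identically distributed observations. The $X_i$ are treated as fixed design points, since we condition on $X_{1:n}$. The result is first obtained in the empirical norm for the latent function, then moved to $\pi$, and finally transferred to $L_2(P_0)$.

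The first step is to verify the three conditions in \eqref{eq:ggv-prior-conditions}, with the Gaussian-regression Kullback--Leibler neighbourhood replaced by the Bernoulli--probit one.

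\textbf{Prior mass.} Lemma~\ref{lem:growing_radius} applies because $\|f_0\|_\infty\le R_n\lesssim(\log n)^c$. Its prior concentration step gives
\begin{equation*}
\Pi(\|f-f_0\|_\infty\le\epsilon_n)\ge e^{-n\epsilon_n^2}.
\end{equation*}
Since $\epsilon_n$ is polynomially small and $R_n$ is polylogarithmic, $\epsilon_nR_n=o(1)$. Lemma~\ref{lem:probit-kl} then shows that this sup-norm ball lies inside the averaged KL and KL-variation neighbourhood of radius $C\epsilon_n$. This is the key point, and it is exactly where vanishing positivity is absorbed.

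\textbf{Sieve mass and entropy.} The sieve of \citet{Linero2018brte} has the same complement probability bound and the same sup-norm entropy bound. Because $\Phi$ is $1$-Lipschitz, sup-norm covers of $f$ map to covers of $\pi=\Phi(f)$.

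\textbf{Tests.} For Bernoulli likelihoods we use the Hellinger-type tests from \citet{ghosal2007noniid}. The root-average-square Hellinger distance is bounded below by a multiple of $\|\pi-\pi_0\|_n$, using $(\sqrt{p}-\sqrt{q})^2\ge(p-q)^2/4$. Since $\pi\in[0,1]$, this needs no positivity, so the tests work directly in the metric $\|\pi-\pi_0\|_n$.

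Applying Theorem~4 of \citet{ghosal2007noniid} then yields
\begin{equation*}
\Pi\bigl(\|\pi-\pi_0\|_n\ge M\epsilon_n \,\big|\, A_{1:n},X_{1:n}\bigr)\to0
\end{equation*}
in $P_0$-probability.

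The final step is to pass from the empirical norm to the population norm with Lemma~\ref{lem:emp2pop}, applied to $\mathcal{H}_n=\{\Phi(f)-\Phi(f_0): f\in\text{sieve}\}$.
\begin{enumerate}
\item Uniform boundedness holds with $M=1$.
\item The bracketing bound comes from Lemma~\ref{lem:SoftBART-Bracketing-Bound} composed with the Lipschitz probit link: monotonicity of $\Phi$ maps brackets to brackets of no larger width. This gives $V_n=L_nK_n$ and $A_n=\tau_n^UB_nL_nK_n/\tau_n^L$.
\end{enumerate}
Take $u=\log n$ and intersect with the event that the posterior sits on the sieve. This gives
\begin{equation*}
\|\pi-\pi_0\|_{L_2(P_0)}\le\sqrt2\,\|\pi-\pi_0\|_n+O\bigl(\sqrt{V_n\log(A_nn)/n}\bigr).
\end{equation*}
With the sieve choices $L_nK_n\lesssim n\epsilon_n^2/\log n$ and polynomially growing $A_n$, the remainder is $O(\epsilon_n)$. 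Enlarging $M$ completes the proof.

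The main obstacle is to check that the sieve parameters needed for the prior mass and sieve mass conditions under the growing radius are compatible with $V_n\log(A_nn)\lesssim n\epsilon_n^2$. In particular, $B_n$ must grow enough to cover leaf values of size $R_n$, while $A_n$ must stay polynomial in $n$. I would first try $B_n\asymp n$, $\tau_n^L\asymp n^{-1}$, and $K_n$ at the prior-mass scale $\tau^{-d}\log^d n$, and then confirm that the tail bounds of \citet{Linero2018brte} still give $e^{-4n\epsilon_n^2}$ mass outside the sieve.
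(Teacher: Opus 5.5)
Your proposal is correct and follows essentially the same route as the paper: the independent-observation contraction theorem of \citet{ghosal2007noniid} with the $X_i$ treated as design points, prior mass from Lemma~\ref{lem:growing_radius} pushed into the Bernoulli KL neighbourhood by Lemma~\ref{lem:probit-kl}, Hellinger tests combined with $d_n^2\gtrsim\|\pi-\pi_0\|_n^2$, and the transfer to $L_2(P_0)$ via Lemma~\ref{lem:emp2pop} on the $\pi$-scale. The only real divergence is the entropy step. The paper rescales sup-norm covers by $4\sqrt{\kappa_n}$ using $\pi\in[\kappa_n,1-\kappa_n]$ on the sieve, whereas you use only the Lipschitz link; that works and avoids constraining the sieve, but since Theorem~4 measures entropy in the Hellinger metric you should add that $h^2(\mathrm{Ber}(p),\mathrm{Ber}(q))\le 2|p-q|$, or equivalently that $\sqrt{\Phi}$ and $\sqrt{1-\Phi}$ are Lipschitz, so that a sup-norm cover at radius of order $\epsilon^2$ is a Hellinger cover at radius $\epsilon$, at only a constant-factor cost in the logarithmic SoftBART entropy.
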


\begin{proof}
Our propensity model is precisely the binary-link (current-status regression) setting of Theorem~13 of \citet{ghosal2007noniid}: $A_i\mid X_i\sim \mathrm{Ber}(H_0(z_i))$ with link $H_0=\pi_0=\Phi(f_0)$ and design points $z_i=X_i$. We follow the proof of that theorem, whose conclusion is contraction in the average root-Hellinger semimetric
\begin{equation*}
  d_n^2(f,f_0)=\frac1n\sum_{i=1}^n
  h^2\!\big(\mathrm{Ber}(\pi(X_i)),\mathrm{Ber}(\pi_0(X_i))\big),
\end{equation*}
their (3.1). We verify its three hypotheses, with three substitutions relative to Theorem~13: the mixture-of-Dirichlet-process link prior by the SoftBART prior on $f$ (Lemma~\ref{lem:growing_radius}), the monotone-link entropy by the transferred SoftBART bracketing bound (Lemma~\ref{lem:SoftBART-Bracketing-Bound}), which also gives the smoothness-adaptive rate, and the strong-positivity hypothesis $H_0(a-)>0,\ H_0(b)<1$ by $\pi_0\in[\kappa_n,1-\kappa_n]$ (Lemma~\ref{lem:probit-kl}).

\emph{Prior mass, (3.4).} With $\kappa_n$ polynomial, $R_n$ is poly-logarithmic, so Lemma~\ref{lem:growing_radius} gives $\Pi(\|f-f_0\|_\infty\le\epsilon_n)\ge e^{-Cn\epsilon_n^2}$ and $\Pi(f\notin\mathcal F_n)\le e^{-4n\epsilon_n^2}$. By Lemma~\ref{lem:probit-kl}, $\{\|f-f_0\|_\infty\le\epsilon_n\}\subseteq B_n(f_0,\sqrt{C}\,\epsilon_n)$, giving KL-ball mass $\ge e^{-C'n\epsilon_n^2}$.

\emph{Tests.} The observations are independent, so the $P_f^{(n)}$ are product measures and the test of Lemma~2 of \citet{ghosal2007noniid} applies verbatim. Combined with the entropy bound of the next paragraph, this yields a single test against $\{d_n(f,f_0)\ge M\epsilon_n\}$ with exponentially small error, as in \citet{ghosal2007noniid}.

\emph{Entropy.}\citet{ghosal2007noniid} (3.2) is a local covering-entropy bound in $d_n$: $\sup_{\varepsilon>\epsilon_n}\log N\big(\varepsilon/36,\{d_n(f,f_0)<\varepsilon\},d_n\big) \le n\epsilon_n^2$. We bound the local $d_n$-covering number by the sup-norm covering entropy of the SoftBART sieve, which is measure-free and dominates $d_n$. On the sieve $\pi\in[\kappa_n,1-\kappa_n]$, pointwise $h^2(\mathrm{Ber}(\pi),\mathrm{Ber}(\pi_0))\le(\pi-\pi_0)^2/(4\kappa_n)$, so taking the supremum over $x$ and using the $1$-Lipschitz link $|\Phi(f)-\Phi(f_0)|\le \tfrac12|f-f_0|$,
\begin{equation*}
  d_n(f,f_0)\ \le\ \frac{1}{2\sqrt{\kappa_n}}\,\|\pi-\pi_0\|_\infty
  \ \le\ \frac{1}{4\sqrt{\kappa_n}}\,\|f-f_0\|_\infty .
\end{equation*}
Hence a sup-norm cover is a $d_n$-cover after rescaling the radius by
$4\sqrt{\kappa_n}$, and
\begin{equation*}
  \log N\big(\varepsilon,\{d_n<\varepsilon\},d_n\big)
  \ \le\ \log N\big(4\sqrt{\kappa_n}\,\varepsilon,\ \mathcal F_n,\ \|\cdot\|_\infty\big),
\end{equation*}
the right-hand side being the sup-norm covering entropy of \eqref{eq:ggv-prior-conditions}. The radius rescaling by $4\sqrt{\kappa_n}$ inflates it by $O\big(L_nK_n\log(1/\kappa_n)\big) =O(L_nK_n\log n)$, which is poly-logarithmic; the bound remains $o(n\epsilon_n^2)$ and (3.2) holds. 

\emph{Conclusion and $L_2(P_0)$ readout.} Theorem~13 of \citet{ghosal2007noniid}
now gives $\Pi(d_n(f,f_0)\ge M'\epsilon_n\mid A_{1:n},X_{1:n})\xrightarrow{P_0}0$. We then have
\begin{equation*}
  d_n^2(f,f_0)=\frac1n\sum_i h^2\!\big(\mathrm{Ber}(\pi),\mathrm{Ber}(\pi_0)\big)(X_i)
  \ \ge\ \tfrac18\|\pi-\pi_0\|_n^2,
\end{equation*}
using $1-\rho_i=h^2(X_i)\ge\tfrac18(\pi-\pi_0)^2(X_i)$ and $(\sqrt u+\sqrt v)^2\le4$ on $[0,1]$. Hence $\|\pi-\pi_0\|_n\le\sqrt8\,d_n\lesssim\epsilon_n$ with probability tending to one. Finally the propensity class lies in $[0,1]$, uniformly bounded without reference to $\kappa_n$, so Lemma~\ref{lem:emp2pop} applies on the $\pi$-scale and gives $\|\pi-\pi_0\|_{L_2(P_0)}\le\sqrt2\,\|\pi-\pi_0\|_n+o(\epsilon_n)\lesssim\epsilon_n$,
which is the desired result.
\end{proof}

\subsection{Pointwise Lipschitz EIF with respect to the nuisance functions}
We show that the uncentered efficient influence function is pointwise Lipschitz in the nuisance functions. This is used in Theorem~\ref{thm:bvm_single_time} to verify Assumption~\ref{ass:yiu-main} (b), the $L_2$ convergence condition, and to construct brackets for the EIF as part of verifying Assumption~\ref{ass:yiu-main} (c).

 \begin{lemma} \label{lem:Lipschitz_multiple_time}
    Under Assumption~\ref{ass:time-varying-conditions} (b), the uncentered efficient influence function for a general stochastic intervention $Q$ is Lipschitz in its nuisance parameters $\eta = (\pi, m)$ pointwise. That is for any $P, \tilde P \in S_n$,  
\begin{equation*}
     | \phi^*(P) - \phi^*(\tilde P) | \lesssim
    \left(\sum_{t=1}^T \sum_{a=0}^1 |m_t(P;a) - m_t(\tilde P;a)| +  
    \sum_{t=1}^T |\pi_t(P) - \pi_t(\tilde P)| \right).
\end{equation*}
\end{lemma}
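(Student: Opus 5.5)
We write the uncentered EIF as a finite sum of products and apply a telescoping (product-rule) bound to each product. Using Lemma~\ref{def:EIF-not-dependent} and Lemma~\ref{def:EIF-dependent}, decompose
\begin{equation*}
\phi^*(P) = \sum_{t=1}^T \Big\{\prod_{s=0}^{t-1} w_s(P)\Big\} A_t(P) + \prod_{s=1}^T w_s(P)\,Y + \Xi(P),
\end{equation*}
where $A_t(P)=\int_{\mathcal{A}_t} m_t(P)(H_t,a_t)\,dQ_t(P)(a_t\mid H_t) - m_t(P)(H_t,A_t)\,w_t(P)$ and $\Xi(P)$ collects the $\chi_t$ terms. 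Each summand is a product of finitely many factors (at most $T+1$). Every factor is uniformly bounded on $S_n$, and every factor is pointwise Lipschitz in $(\pi,m)$.

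For a product of $k$ factors, the identity $\prod_j f_j - \prod_j g_j = \sum_j \big(\prod_{i<j} g_i\big)(f_j-g_j)\big(\prod_{i>j} f_i\big)$ reduces the bound on the difference to
\begin{equation*}
\Big|\prod_j f_j - \prod_j g_j\Big| \le C^{k-1}\sum_j |f_j-g_j|.
\end{equation*}
Since $T$ is fixed, $C^{T}$ is a constant absorbed into $\lesssim$.

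The factor-level bounds come from the earlier lemmas as follows.
\begin{itemize}
\item[] \emph{Weights.} By Assumption~\ref{ass:time-varying-conditions}(b)(iii), $|w_s(P)|\le C$. By Lemma~\ref{lem:weight_lip}, $|w_s(P)-w_s(\tilde P)|\lesssim|\pi_s(P)-\pi_s(\tilde P)|$.
\item[] \emph{Intervention densities.} We have $|dQ_t(P)|\le 1$. By Lemma~\ref{lem:smoothness}(iv), $|dQ_t(P)-dQ_t(\tilde P)|\lesssim|\pi_t(P)-\pi_t(\tilde P)|$.
\item[] \emph{The $\chi_t$ factors.} Lemma~\ref{lem:smoothness}(iii) gives $|\chi_t(P)|\le C$, and Lemma~\ref{lem:smoothness}(v) gives $|\chi_t(P)-\chi_t(\tilde P)|\lesssim|\pi_t(P)-\pi_t(\tilde P)|$.
\item[] \emph{Outcome regressions.} The $m_t$ are bounded by $C$ via (b)(i), using the recursion \eqref{eq:mean-response-recursive} to propagate the bound on $\mu$ backward through integration against probability measures. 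Their differences are controlled by $|m_t(P;a)-m_t(\tilde P;a)|$ directly.
\end{itemize}
Because $A_t$ is binary, each integral over $\mathcal{A}_t$ is a two-term sum. The integrand $\int m_t\,dQ_t$ therefore splits as $\sum_a m_t(\cdot;a)\,q_t(a,\pi_t)$, and its difference is bounded by $\sum_a |m_t(P;a)-m_t(\tilde P;a)| + C|\pi_t(P)-\pi_t(\tilde P)|$. The same splitting handles $\int\chi_t m_t\,d\nu$.

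The term $\prod_s w_s\,Y$ uses $|Y|\le C$ almost surely together with the weight bounds. Summing over $t$ gives the claim.

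The main obstacle is justifying uniform boundedness of the weight factors and of $w_t$'s Lipschitz constant, especially near $\pi\in\{0,1\}$. Lemma~\ref{lem:weight_lip} handles this by extending $\Gamma_1(\pi)/\pi$ to a $C^1$ function on $[0,1]$, so we rely on it rather than on strong positivity.

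A secondary subtlety is that $m_t$ is evaluated at the random $A_t$ (in $m_t w_t$). We handle this by writing $m_t(H_t,A_t)=A_t m_t(\cdot;1)+(1-A_t)m_t(\cdot;0)$, so that term is also controlled by $\sum_a|m_t(P;a)-m_t(\tilde P;a)|$.
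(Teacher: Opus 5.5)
Your proposal is correct and follows essentially the same route as the paper: split $\phi^*$ into the $\Psi_t$, $R$, and $\Xi$ product terms, then bound each difference by telescoping over bounded factors, with Lemma~\ref{lem:weight_lip} handling the weights and Lemma~\ref{lem:smoothness} handling $\chi_t$. You also appeal explicitly to Lemma~\ref{lem:smoothness}(iv) for the $P$-dependence of $dQ_t$ inside $\int m_t\,dQ_t$. That is a small improvement, because the paper's $\pi$-fixed computation treats $dQ_t$ as common to $P$ and $\tilde P$ and then only asserts that the same bound carries over in the $\pi$-dependent case.
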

\noindent \textbf{$\pi$-fixed case} 

We will first prove that the uncentered efficient influence function for a $\pi$-fixed stochastic intervention is pointwise Lipschitz in $\eta = (\pi, m)$, given suitable regularity conditions. From Lemma~\ref{def:EIF-not-dependent} we can define the uncentered efficient influence function as 
\begin{equation*}
    \phi^*(P) = \sum_{t = 1}^T \Psi_t(P) + R(P),
\end{equation*}
where 
\begin{align*}
    \Psi_t(P) &=  \left\{ \int _{\mathcal{A}_t} m_{t} (H_{t}, a_{t}) dQ_{t} (a_{t} \mid H_{t}) - m_t(H_t, A_t) w_t(P)\right\} \prod_{s=0}^{t-1} w_s(P), \\
    R(P) &= \prod_{s=1}^T w_s(P)Y ,
\end{align*} and 
\begin{equation*}
    w_s(P) = \frac{dQ_s (A_s \mid H_s)}{dP(A_s \mid  H_s)}.  = \frac{dQ_t(A_t \mid H_t)}{\pi_t( H_t)^{A_t}(1-\pi_t(H_t))^{1-A_t}}.
\end{equation*}
Now we move on to bounding $\Psi_t(P) - \Psi_t(\tilde P)$. That is, 
\begin{equation}
\label{eq:Cap_Psi_diff}
\begin{aligned}
    \Psi_t(P) - \Psi_t(\tilde P) 
    &= A_t(P) \prod_{s=0}^{t-1} w_s(P) - A_t(\tilde P) \prod_{s=0}^{t-1} w_s(\tilde P), \\
    &= \left(A_t(P) - A_t(\tilde P) \right)  \prod_{s=0}^{t-1} w_s(P) + A_t(\tilde P) \left( \prod_{s=0}^{t-1} w_s(P) - \prod_{s=0}^{t-1} w_s(\tilde P) \right),
\end{aligned}
\end{equation}
where $A_t(P):= \int_{\mathcal{A}_t} m_t(H_t, a_t)dQ_t(a_t|H_t) - m_t(H_t, A_t)w_t(P)$, and the second equality holds by adding and subtracting $A_t(\tilde P) \prod_{s=0}^{t-1} w_s(P)$. For the first term in \eqref{eq:Cap_Psi_diff}, 
\begin{align*}
    |A_t(P) - A_t(\tilde P)| &= |\int_{\mathcal{A}_t} \left( m_t(P;a) - m_t(\tilde P;a) \right) dQ_t(a_t|H_t) - \left[m_t(P)w_t(P) - m_t(\tilde P)w_t(\tilde P) \right]|, \\
    &= |\sum_{a = 0}^1 \left( m_t(P;a) - m_t(\tilde P;a) \right) dQ_t(a_t|H_t) - \left[m_t(P)w_t(P) - m_t(\tilde P)w_t(\tilde P) \right]|, \\
    &\lesssim  |\sum_{a = 0}^1 \left( m_t(P;a) - m_t(\tilde P;a) \right) - \left[m_t(P)w_t(P) - m_t(\tilde P)w_t(\tilde P) \right]|, \\
    &\lesssim  |\sum_{a = 0}^1 \left( m_t(P;a) - m_t(\tilde P;a) \right)  - \left[(m_t(P) - m_t(\tilde P)w_t(P) +   (w_t(P) - w_t(\tilde P)) m_t(\tilde P) \right]|, \\
    &\lesssim  \sum_{a = 0}^1 \left( |m_t(P;a) - m_t(\tilde P;a)| \right)  + |\pi_t(P) - \pi_t(\tilde P)|. 
\end{align*}
The first equality follows by definition and the second equality follows from the fact that we have binary treatments. The first inequality follows from Assumption~\ref{ass:time-varying-conditions}(b) (iii), the second inequality follows by adding and subtracting $m_t(\tilde P)w_t(P)$, and the final inequality follows from Assumption~\ref{ass:time-varying-conditions}(b) (i) and (iii), Lemma~\ref{lem:weight_lip} and the triangle inequality. Note that the $ \prod_{s=0}^{t-1} w_s(P)$ can be ignored due to Assumption~\ref{ass:time-varying-conditions}(b) (iii).

For the second term in \eqref{eq:Cap_Psi_diff}, 
\begin{align*}
    |A_t(\tilde P) \left( \prod_{s=0}^{t-1} w_s(P) - \prod_{s=0}^{t-1} w_s(\tilde P) \right)| &=  |A_t(\tilde P) \left( \sum_{s = 0}^{t-1} \left(w_s(P) - w_s(\tilde P) \right) \left(\prod_{r = 0}^{s-1} w_r(P) \prod_{r = s+1}^{t-1} w_r(\tilde P) \right) \right)|, \\
    &\lesssim | \sum_{s = 0}^{t-1} \left(w_s(P) - w_s(\tilde P) \right)|, \\
    &\lesssim \sum_{s=0}^{t-1} |\pi_s(P) - \pi_s(\tilde P)|,
\end{align*}
where the equality is a telescoping product identity, the first inequality follows from Assumption~\ref{ass:time-varying-conditions}(b) (i) and (iii), and the second inequality follows from Lemma~\ref{lem:weight_lip} and the triangle inequality. 

Combining our above two bounds we get 
\begin{equation*}
    |\Psi_t(P) - \Psi_t(\tilde P)|  \lesssim  \sum_{a = 0}^1 \left( |m_t(P;a) - m_t(\tilde P;a)| \right)  + \sum_{s=1}^t |\pi_s(P) - \pi_s(\tilde P)|. 
\end{equation*}
Next, 
\begin{align*}
    |R(P) - R(\tilde P)| &= |\prod_{s=1}^T w_s(P)Y - \prod_{s=1}^T w_s(\tilde P)Y|, \\
    &\lesssim |\prod_{s=1}^T w_s(P) - \prod_{s=1}^T w_s(\tilde P)|, \\ 
    &\lesssim |\sum_{s = 1}^{T} \left(w_s(P) - w_s(\tilde P) \right) \left(\prod_{r = 0}^{s-1} w_r(P) \prod_{r = s+1}^{T} w_r(\tilde P) \right)|, \\
    &\lesssim  \sum_{s=1}^T |\pi_s(P) - \pi_s(\tilde P)|.
\end{align*}
The first inequality follows from Assumption~\ref{ass:time-varying-conditions}(b) (i) and the second inequality follows from a telescoping product identity. The final inequality follows from Lemma~\ref{lem:weight_lip}, Assumption~\ref{ass:time-varying-conditions}(b) (iii), and the triangle inequality. Combing all of our work gives 
\begin{equation*}
     | \phi^*(P) - \phi^*(\tilde P) | \lesssim
    \left(\sum_{t=1}^T \sum_{a=0}^1 |m_t(P;a) - m_t(\tilde P;a)| +  
    \sum_{t=1}^T |\pi_t(P) - \pi_t(\tilde P)| \right).
\end{equation*}
for the $\pi$-fixed case.

\noindent \textbf{$\pi$-dependent case} 

In this case, from Lemma \ref{def:EIF-dependent} we can get the uncentered efficient influence function $\phi^*(P;Q)$. That is, 
\begin{equation*}
    \phi^*(P) = \phi^*_{\text{fixed}}(P;Q) + \sum_{t = 1}^T \left\{ \prod_{s=0}^{t-1} \frac{dQ_s(A_s \mid H_s)}{dP(A_s\mid H_S)} \right\} \int_{A_t} \chi_t(H_t,A_t;a_t)m(H_t,a_t)d\nu(a_t) ),
\end{equation*}
where $\phi^*_{\text{fixed}}(P;Q)$ is the uncentered efficient influence function for a $\pi$-fixed stochastic intervention, $\nu$ is a dominating measure for the distribution of $A_t$, and $\{ \mathbf{1}(H_t = h_t)/ dP(h_t) \} \chi_t(H_t, A_t; a_t)$ is the efficient influence function for $dQ_t(a_t \mid h_t)$ \citep{Kennedy2018ipsi}. We will now show that in this case the efficient influence function under a general $\pi$-dependent stochastic intervention is Lipschitz in $\eta = (\pi, m)$. Clearly $\varphi^*(P)$ omits the same bound as before so we will focus on the remaining term, 
\begin{equation*}
    \Xi(P) := \sum_{t = 1}^T \left\{ \prod_{s=0}^{t-1} \frac{dQ_s(A_s | H_s)}{dP(A_s|H_S)} \right\} \int_{A_t} \chi_t(H_t,A_t;a_t)m(H_t,a_t)d\nu(a_t).
\end{equation*}
We consider $\Xi(P) - \Xi(\tilde P)$ and add and subtract $\prod_{s=0}^{t-1} w_s(\tilde P) \int_{A_t} \chi(P)m_t(P) d\nu$, 
\begin{equation}
\label{eq:xi_term}
\begin{aligned}
    \Xi(P) - \Xi(\tilde P) = \sum_{t=1}^T \Bigg[ &\left( \prod_{s=0}^{t-1} w_s(P) - \prod_{s=0}^{t-1} w_s(\tilde P)\right) \int_{A_t} \chi_t(P)m_t(P) d\nu + \\
    & \prod_{s=0}^{t-1} w_s(\tilde P) \int_{A_t} \chi_t(P)m_t(P) - \chi_t(\tilde P)m_t(\tilde P) d\nu \Bigg].
\end{aligned}
\end{equation}
For the first term in \eqref{eq:xi_term},
\begin{align*}
    | \left( \prod_{s=0}^{t-1} w_s(P) - \prod_{s=0}^{t-1} w_s(\tilde P)\right) \int_{A_t} \chi_t(P)m_t(P) d\nu |
    &=  | \left( \prod_{s=0}^{t-1} w_s(P) - \prod_{s=0}^{t-1} w_s(\tilde P)\right) \sum_{a=0}^1 \chi_t(P)m_t(P)  |, \\
    &\lesssim | \prod_{s=0}^{t-1} w_s(P) - \prod_{s=0}^{t-1} w_s(\tilde P) |, \\
    &\lesssim \sum_{s = 0}^{t-1} |\pi_s(P) - \pi_s(\tilde P) |.
\end{align*}
The equality follows from binary treatments and the first inequality follows from Assumption~\ref{ass:time-varying-conditions}(b) (i) and (ii). Then the second inequality follows from the telescoping product identity, Assumption~\ref{ass:time-varying-conditions}(b) (iii) and Lemma~\ref{lem:weight_lip}. 

Next we bound the second term in \eqref{eq:xi_term},
\begin{align*}
   | \prod_{s=0}^{t-1} w_s(\tilde P) \int_{A_t} \chi_t(P)m_t(P) - \chi_t(\tilde P) m_t(\tilde P) d\nu| &\lesssim |\sum_{a=0}^1  \chi_t(P)m_t(P) - \chi_t(\tilde P)m_t(\tilde P)|, \\
   &\lesssim |\sum_{a=0}^1  (\chi_t(P) - \chi_t(\tilde P))m_t(P) + (m_t(P) - m_t(\tilde P)) \chi_t(\tilde P)|, \\
   &\lesssim \sum_{a=0}^1 |m_t(P;a) - m_t(\tilde P;a)| + |\pi_t(P) - \pi_t(\tilde P)|.
\end{align*}
The first inequality follows by Assumption~\ref{ass:time-varying-conditions}(b) (iii) and binary treatments and the second inequality follows by adding and subtracting $\chi_t(\tilde P)m_t(P)$. Then the final inequality follows from Assumption~\ref{ass:time-varying-conditions}(b) (i) and (ii) and also the triangle inequality.

Therefore combining all of our previous work gives us
\begin{equation*}
     | \phi^*(P) - \phi^*(\tilde P) | \lesssim
    \left(\sum_{t=1}^T \sum_{a=0}^1 |m_t(P;a) - m_t(\tilde P;a)| +  
    \sum_{t=1}^T |\pi_t(P) - \pi_t(\tilde P)| \right).
\end{equation*}
Hence we have proven the desired result.

\subsection{Proof of Theorem 3.1}
We will prove that, under the conditions in Assumption~\ref{ass:time-varying-conditions}, a one-step posterior correction for the mean counterfactual outcome under stochastic intervention $Q$ satisfies the semiparametric BvM theorem. Our proof will proceed by verifying the conditions of Assumption~\ref{ass:yiu-main}, namely, no second-order bias, $L_2$-convergence, and the requisite empirical process conditions. 

To show that the sieve $S_n$ asymptotically has mass 1, first define 
\begin{equation*}
    S_n = \{P \in \mathcal{P} : \pi_t(P) \in S_{t,n}^\pi,\ 
m_t(P; a) \in S_{t,a,n}^m \ \forall\, t \leq T,\, a \in \{0,1\}\},
\end{equation*}
where $S_{t,n}^\pi$ and $S_{t,a,n}^m$ are sieves on the nuisance functions. Standard contraction arguments require these nuisance function sieves to asymptotically have mass 1 and by a union bound,
\begin{equation*}
    \Pi(S_n^c \mid Z_{1:n}) \leq \sum_{t=1}^T \Pi((S_{t,n}^\pi)^c \mid Z_{1:n}) 
    + \sum_{t=1}^T \sum_{a=0}^1 \Pi((S_{t,a,n}^m)^c \mid Z_{1:n}) 
    \xrightarrow{P_0} 0.
\end{equation*}
Hence $\Pi(S_n \mid Z_{1:n}) \xrightarrow{P_0} 1$, and the remainder of the 
proof proceeds with $P \in S_n$.


\noindent \textbf{No second-order bias}

First, to verify that there is no second-order bias, we note that the second order remainder allows the following decomposition, 
\begin{align*}
    r_2(P_0,P) &= \psi(P_0) - \psi(P) - P_0[\phi(P)], \\
    &= \psi(P_0) - \psi(P) - P_0[ \phi^*(P) - \psi(P)], \\
    &= \psi(P_0) - P_0[\phi^*(P)], \\
    &= P_0[\phi^*(P_0) - \phi^*(P)],
\end{align*}
where $\phi^*(P)$ is the uncentered efficient influence function. The second to last equality holds since $\psi(P)$ is a constant in that expectation, while the last equality follows by linearity of expectation and the fact that $\psi(P_0) = P_0[\phi^*(P_0)]$. 

We will show that 
\begin{equation*}
    \sup_{P \in S_n} \sqrt{n} \left| r_2(P_0, P) \right| \rightarrow 0,
\end{equation*}
satisfying the no second-order bias condition of Assumption~\ref{ass:yiu-main}. First, we define $\phi_*(P)$ to be the centered efficient influence function when $Q$ does not depend on $P$, and let $\zeta(P)$ denote the additive contribution to the EIF when Q depends on $P$, such that
\begin{equation*}
    \phi^*(P) = \phi_*(P) + \zeta(P) + \psi(P),
\end{equation*}
for all $P \in \mathcal{S}_n$.
Therefore 
\begin{align*}
    P_0[\phi^*(P) - \phi^*(P_0)] &= \int \left\{ \phi_*(P) + \zeta(P) + \psi(P) \right\} dP_0 - \psi(P_0), \\
    &= \int \phi_*(P) dP_0 + \psi(P;Q(P)) - \psi(P_0;Q(P)) \\
    &\text{    } + \int \zeta(P) dP_0 + \psi(P_0;Q(P)) - \psi(P_0;Q(P_0)).
\end{align*}
Note that unless stated otherwise $\psi(P)$ is equivalent to $\psi(P;Q(P))$ for all $P \in \mathcal{S}_n$. We define the following for the purpose of this proof,
\begin{align*}
    \bar{m}_t &= \bar{m}_t(H_t, A_t) = \int \bar{m}_{t+1} dQ_{t+1}(P_0) dP_{t+1}, \\
    m_t^* &= \int \bar{m}_{t+1} dQ_{t+1}(P_0) dP_{0, t+1}.
\end{align*}
Then 
\citeauthor{Kennedy2018ipsi}'s \citeyearpar{Kennedy2018ipsi} Lemma 5 and subsequent discussion states that
\begin{align*}
    \psi(P;Q(P_0)) &- \psi(P_0;Q(P_0)) + \int \phi_*(P) dP_0 \\
    &= \sum_{t=1}^T \sum_{s=1}^t \int (m_t^* - \bar{m}_t) 
    \left( \frac{d \pi_s(P_0) - d\pi_s(P)}{d\pi_s(P)}\right) 
    \left(\prod_{r=1}^{s-1} \frac{d \pi_r(P_0)}{d \pi_r(P)} \right)
    \left( \prod_{r=1}^t dQ_r(P_0) dP_{0,r} \right), \\
    &= \sum_{t=1}^T \sum_{s=1}^t \int \bigg\{(\bar{m}_{t+1} - m_{t+1}(P_0)) dQ_{t+1}(P) \, dP_{0, t+1} + m_{t+1}(P_0)(dQ_{t+1}(P) - dQ_{t+1}(P_0))dP_{0,t+1} \\
    &+ (m_t(P_0)- m_t(P)) \bigg\} 
    \left( \frac{d \pi_s(P_0) - d\pi_s(P)}{d\pi_s(P)}\right) 
    \left(\prod_{r=1}^{s-1} \frac{d \pi_r(P_0)}{d \pi_r(P)} \right) \left[ \prod_{r=1}^t \left(\frac{dQ_r(P)}{d\pi_r(P_0)}\right) d\pi_r(P_0)\,dP_{0,r} \right], \\
    &\lesssim \sum_{t=1}^T \sum_{s=1}^t \bigg(
    \|\bar{m}_{t+1} - m_{t+1}(P_0)\|  + 
    \|\pi_{t+1}(P) - \pi_{t+1}(P_0)\|  + 
    \|m_t(P_0) - \bar{m}_t \|
    \bigg) 
    \|\pi_{s}(P) - \pi_{s}(P_0)\|,
\end{align*}
where the norm is the $L_2(P_0)$ norm. 
Therefore the final inequality from Lemma 5 \citep{Kennedy2018ipsi} holds due repeated use of Cauchy-Schwarz, the triangle inequality, and Assumption~\ref{ass:time-varying-conditions}(b) (i), (ii), and (iii).


Next, \citeauthor{Kennedy2018ipsi}'s \citeyearpar{Kennedy2018ipsi} Lemma 6 and subsequent discussion states that
\begin{align*}
    \psi(P_0&;Q(P)) - \psi(P_0;Q(P_0)) + \int \zeta(P) dP_0  \\
    &= \sum_{t=1}^T \int (\chi_t(P) d\pi_t(P_0)) (\bar{m}_t - m_t(P_0))d\nu \,dP_{0,t} \left(\prod_{s=0}^{t-1} \frac{dQ_s(P)}{d\pi_s(P)}\, d\pi_s(P_0) dP_{0,s} \right) \\
    &+ \sum_{t=1}^T \sum_{s=1}^t 
    \int (\chi_t(P) d\pi_t(P_0)) 
    \left( \frac{d \pi_s(P_0) - d\pi_s(P)}{d\pi_s(P)}\right) m_t d\nu \,dP_{0,t}
    \left( \prod_{r=1}^{t-1} dQ_r(P_0)\,dP_{0,r} \right)
    \left(\prod_{r=1}^{s-1} \frac{d \pi_r(P_0)}{d \pi_r(P)} \right) \\
    &+ \sum_{t=1}^T \int m_t(P_0) (dQ_t(P) - dQ_t(P_0) + \chi_t(P) \, d\pi_t(P_0))d\nu \, dP_{0,t} \left(\prod_{s=1}^{t-1}dQ_s\, dP_{0,s} \right), \\
    &\lesssim \sum_{t=1}^T \|\pi_t(P_0) - \pi_t(P) \| \bigg( \| \bar{m}_t - m_t(P_0)\| + \sum_{s=1}^t \|\pi_s(P_0) - \pi_s(P)\| + \| \pi_t(P_0) - \pi_t(P)\| \bigg),
\end{align*}
where again the norm is the $L_2(P_0)$ norm.

Bounds on the three terms after the equality follow from the triangle inequality, Cauchy-Schwarz, and Assumption~\ref{ass:time-varying-conditions} (b) (i), (ii), and (iii).
Therefore for $s \le t \le T$ we can bound the second order bias by, 
\begin{equation*}
    \left( \| m_t(P) - m_t(P_0) \| + \| \pi_t(P) - \pi_t(P_0) \| \right) \|\pi_s(P) - \pi_s(P_0)\| = o(1/\sqrt{n}).
\end{equation*}
Hence under Assumption~\ref{ass:time-varying-conditions}(a), there is no second order bias.


\noindent \textbf{$L_2$-convergence}

Next, we move on to showing $L_2-$convergence. From Assumption~\ref{ass:time-varying-conditions} it immediately follows that
\begin{equation*}
    \|\pi_t(P) - \pi_t(P_0)\|_{L_2(P_0)} = o_{P_0}(1), \quad
\|m_t(P;a) - m_t(P_0;a)\|_{L_2(P_0)} = o_{P_0}(1),
\end{equation*}
for all $t$ and $a\in\{0,1\}$. Next note that \citep{Yiu2025pcorrect} mentions it is sufficient to show only $L_2$ convergence of the uncentered EIF, 
\begin{equation*}
    \| \phi^*(P) - \phi^*(P_0)\|_{L_2(P_0)} = o_{P_0}(1).
\end{equation*}
Lemma~\ref{lem:Lipschitz_multiple_time} with $\tilde P = P_0$ gives
\begin{equation*}
    | \phi^*(P) - \phi^*(P_0) | \lesssim
    \left(\sum_{t=1}^T \sum_{a=0}^1 |m_t(P;a) - m_t(P_0;a)| +  
    \sum_{t=1}^T |\pi_t(P) - \pi_t(P_0)| \right),
\end{equation*}
which clearly implies 
\begin{equation*}
    \| \phi^*(P) - \phi^*(P_0) \|_{L_2(P_0)} \lesssim
    \left(\sum_{t=1}^T \sum_{a=0}^1 \|m_t(P;a) - m_t(P_0;a)\|_{L_2(P_0)} +  
    \sum_{t=1}^T \|\pi_t(P) - \pi_t(P_0)\|_{L_2(P_0)} \right).
\end{equation*}
Therefore we have $L_2$ convergence of the uncentered EIFs.


\noindent \textbf{Empirical process conditions}
Now, we will move on to satisfying Assumption~\ref{ass:yiu-main} (c) through (i) and (ii). We define our class of functions as 
\begin{equation} \label{eq:function-class}
    \mathcal{F}_n := \{\phi(P) - \phi(P_0): P \in \mathcal{S}_n\}.
\end{equation}
Under Assumption~\ref{ass:time-varying-conditions}(b) (i), outcomes are uniformly bounded and from Assumption~\ref{ass:time-varying-conditions}(b) (i) and (ii) the Radon-Nikodym derivatives and the intervention scores are bounded. Then, under these assumptions and the form of the efficient influence function for a general stochastic intervention given in Lemmas~\ref{def:EIF-not-dependent} and \ref{def:EIF-dependent}, the efficient influence $\phi(P)$ is uniformly bounded over $P \in \mathcal S_n$, and admits a constant envelope $G_{n, \phi} := M$. 

First to show the uniform integrability condition, note that $M$ is fixed. Therefore as we take the limit of $C$ we necessarily have that 
\begin{equation*}
    \lim_{C \to \infty} \limsup_{n \to \infty} P_0(M^2 \mathbf{1}\{ M^2 > C \}) = 0.
\end{equation*}
Additionally, because $M$ is fixed and does not depend on $n$, the following clearly holds 
\begin{equation*}
   P_0 M^4 = o (n). 
\end{equation*}

In \citet{Yiu2025pcorrect} they state it is sufficient to control the entropy of the class of uncentered EIFs,
\begin{equation} \label{eq:uncentered-function-class}
    \mathcal{F}^*_n := \{\phi^*(P) - \phi^*(P_0): P \in \mathcal{S}_n\}.
\end{equation}
To show convergence of $\phi^*(P)$ under the empirical process we utilize Theorem 2.14.17 from \citet{vandervaartwellner2023weak}. That is, for our class $\mathcal{F}^*_n$ such that $Pf^{*2} < \gamma_n^2 PF^{*2}$ and $\|f^*\|_{\infty} \le 1$ for every $f^*$ in $\mathcal{F}^*_n$,
\begin{equation} \label{eq:emp-process-envelope}
    E^*_{P_0} \|\mathbb{G}_n [\phi^*(P) - \phi^*(P_0)] \|_{\mathcal{F}^*_n} \lesssim J_{[ \text{ }]}\left(\gamma_n, \mathcal{F}^*_n, L_2(P_0)\right) 
   \left(1 + \frac
   { J_{[ \text{ }]}\left(\gamma_n, \mathcal{F}^*_n, L_2(P_0)\right)}
   {\gamma_n^2 \sqrt{n} \|F^*\|_{L_2(P_0)}} \right) \|F^*\|_{L_2(P_0)}.
\end{equation}
Note Lemma~\ref{lem:Lipschitz_multiple_time} ensures $Pf_n^{*2} < \gamma_n^2$ for every $f^*$. For ease of exposition we assume that $ \|F^*\|_{L_2(P_0)} = 1$.

Then in Lemma~\ref{lem:Lipschitz_multiple_time} we have shown that the uncentered EIF is \emph{pairwise} Lipschitz in the nuisance functions and we use this to transfer brackets from the nuisance classes to $\mathcal F^*_n$. Fix $\varepsilon>0$. For each $t$ and $a$, let
\begin{equation*}
  \bigl\{[\ell^{\,i}_{m_{t,a}},\,u^{\,i}_{m_{t,a}}]\bigr\}_{i}, \qquad
  \bigl\{[\ell^{\,j}_{\pi_t},\,u^{\,j}_{\pi_t}]\bigr\}_{j}
\end{equation*}
be minimal $L_2(P_0)$ bracketings of $\Delta^m_{t,a,n}$ and $\Delta^\pi_{t,n}$ of widths
\begin{equation*}
  \bigl\|u_{m_{t,a}}-\ell_{m_{t,a}}\bigr\|_{L_2(P_0)}\le\varepsilon_{m_{t,a}},
  \qquad
  \bigl\|u_{\pi_t}-\ell_{\pi_t}\bigr\|_{L_2(P_0)}\le\varepsilon_{\pi_t},
\end{equation*}
chosen so that
\begin{equation}\label{eq:width-budget}
  \sum_{t=1}^T\sum_{a=0}^1\varepsilon_{m_{t,a}}
  +\sum_{t=1}^T\varepsilon_{\pi_t}
  \;\le\;\frac{\varepsilon}{2C}.
\end{equation}
A \emph{cell} is a choice of one primitive bracket for each $(t,a)$ and each $t$. It collects all $P\in\mathcal S_n$ whose nuisance deviations
$m_t(P;a)-m_t(P_0;a)$ and $\pi_t(P)-\pi_t(P_0)$ fall in the selected primitive brackets. The number of cells is
$\prod_{t,a}N_{[\,]}(\varepsilon_{m_{t,a}}, \Delta^m_{t,a,n},L_2(P_0))
 \cdot\prod_t N_{[\,]}(\varepsilon_{\pi_t},\Delta^\pi_{t,n},L_2(P_0))$.

Consider a fixed cell. If it is empty, discard it. Otherwise choose a single representative $\widetilde P\in\mathcal S_n$ from the cell. For \emph{every} $P$ in the same cell, both $\pi_t(P)-\pi_t(P_0)$ and $\pi_t(\widetilde P)-\pi_t(P_0)$ lie in the same primitive bracket $[\ell_{\pi_t},u_{\pi_t}]$, so their difference is controlled by the bracket \emph{width},
\begin{equation}\label{eq:cellmate-width}
  \bigl|\pi_t(P)-\pi_t(\widetilde P)\bigr|
  \le u_{\pi_t}-\ell_{\pi_t},
  \qquad
  \bigl|m_t(P;a)-m_t(\widetilde P;a)\bigr|
  \le u_{m_{t,a}}-\ell_{m_{t,a}},
\end{equation}
pointwise $P_0$-a.e. Define the half-width function
\begin{equation*}
  \Delta(z):=C\!\left(
  \sum_{t=1}^T\sum_{a=0}^1
  \bigl(u_{m_{t,a}}-\ell_{m_{t,a}}\bigr)(z)
  +\sum_{t=1}^T
  \bigl(u_{\pi_t}-\ell_{\pi_t}\bigr)(z)
  \right),
\end{equation*}
and the bracket
\begin{equation*}
  L_{\phi^*}:=\phi^*(\widetilde P)-\Delta,
  \qquad
  U_{\phi^*}:=\phi^*(\widetilde P)+\Delta.
\end{equation*}
Combining Lemma~\ref{lem:Lipschitz_multiple_time} with \eqref{eq:cellmate-width} gives, for every $P$ in the cell,
\begin{equation*}
  \bigl|\phi^*(P)-\phi^*(\widetilde P)\bigr|\le\Delta,
  \qquad\text{hence}\qquad
  L_{\phi^*}\le\phi^*(P)\le U_{\phi^*}\quad P_0\text{-a.e.}
\end{equation*}
so $[L_{\phi^*},U_{\phi^*}]$ is a valid bracket for $\{\phi^*(P):P\text{ in the cell}\}$, and hence, after the fixed shift by $\phi^*(P_0)$, for the corresponding subset of $\mathcal F^*_n$. Its width is controlled by the primitive \emph{widths}: by the triangle inequality and
\eqref{eq:width-budget},
\begin{equation}\label{eq:eif-width}
  \|U_{\phi^*}-L_{\phi^*}\|_{L_2(P_0)}
  =2\|\Delta\|_{L_2(P_0)}
  \le 2C\!\left(
  \sum_{t,a}\varepsilon_{m_{t,a}}+\sum_t\varepsilon_{\pi_t}
  \right)\le \varepsilon.
\end{equation}

Assigning one bracket \eqref{eq:eif-width} to each nonempty cell covers
$\mathcal F^*_n$, since every $P\in\mathcal S_n$ lies in some cell. Therefore
\begin{equation}\label{eq:product-bound}
  N_{[\,]}\!\left(\varepsilon,\mathcal F^*_n,L_2(P_0)\right)
  \le
  \prod_{t=1}^T\prod_{a=0}^1
  N_{[\,]}\!\left(\varepsilon_{m_{t,a}},\Delta^m_{t,a,n},L_2(P_0)\right)
  \prod_{t=1}^T
  N_{[\,]}\!\left(\varepsilon_{\pi_t},\Delta^\pi_{t,n},L_2(P_0)\right).
\end{equation}
Choosing $\varepsilon_{m_{t,a}}=\varepsilon_{\pi_t}=\varepsilon/\{6CT\}$ (any allocation satisfying \eqref{eq:width-budget} suffices), taking logarithms in \eqref{eq:product-bound},
\begin{equation*}
  \log N_{[\,]}\!\left(\varepsilon,\mathcal F^*_n,L_2(P_0)\right)
  \le
  \sum_{t=1}^T\sum_{a=0}^1
  \log N_{[\,]}\!\left(\varepsilon_{m_{t,a}},\Delta^m_{t,a,n},L_2(P_0)\right)
  +\sum_{t=1}^T
  \log N_{[\,]}\!\left(\varepsilon_{\pi_t},\Delta^\pi_{t,n},L_2(P_0)\right).
\end{equation*}
Therefore
\begin{equation}\label{eq:integral-transfer}
  J_{[\,]}\!\left(\gamma_n,\mathcal F^*_n,L_2(P_0)\right)
  \;\lesssim\;
  \sum_{t=1}^T\sum_{a=0}^1
  J_{[\,]}\!\left(\gamma_n,\Delta^m_{t,a,n},L_2(P_0)\right)
  +\sum_{t=1}^T
  J_{[\,]}\!\left(\gamma_n,\Delta^\pi_{t,n},L_2(P_0)\right).
\end{equation}
We can then utilize Assumption~\ref{ass:time-varying-conditions} (c)  to ensure that
 \begin{equation*}
     J_{[ \text{ }]}(\gamma_n, \mathcal{F}^*_n, L_2(P_0)) = o(1).
\end{equation*}
Define
\begin{equation*}
    J_{n,\eta}
    :=
    \sum_{t=1}^T \sum_{a=0}^1 J_{[ \text{ }]}(\gamma_n, \Delta^m_{t,a,n}, L_2(P_0)) +
    J_{[ \text{ }]}(\gamma_n, \Delta^\pi_{t,n}, L_2(P_0)).
\end{equation*}
Then we can control the second term in
\eqref{eq:emp-process-envelope} through
\begin{equation*}
    \frac{
    J_{[ \text{ }]}\left(\gamma_n, \mathcal{F}^*_n, L_2(P_0)\right)^2
    }
    {
    \gamma_n^2 \sqrt{n}
    }
    \to 0.
\end{equation*}
Since
\begin{equation*}
    J_{[ \text{ }]}\left(\gamma_n, \mathcal{F}^*_n, L_2(P_0)\right)^2
    \lesssim
    J_{n,\eta}^2,
\end{equation*}
it is sufficient to assume
\begin{equation*}
    \frac{J_{n,\eta}^2}
    {\gamma_n^2 \sqrt{n}}
    \to 0.
\end{equation*}
Equivalently,
\begin{equation*}
    \sum_{t=1}^T \sum_{a=0}^1 J_{[ \text{ }]}(\gamma_n, \Delta^m_{t,a,n}, L_2(P_0)) +
    J_{[ \text{ }]}(\gamma_n, \Delta^\pi_{t,n}, L_2(P_0)).
    =
    o\left(\gamma_n n^{1/4}\right).
\end{equation*}
Therefore, by Assumption~\ref{ass:time-varying-conditions} (c), the second term in \eqref{eq:emp-process-envelope} is $o(1)$.

 Each term in \eqref{eq:emp-process-envelope} is controlled, and we can therefore conclude that
\begin{equation*}
     E^*_{P_0} \|\mathbb{G}_n[\phi^*(p) - \phi^*(P_0)] \|_{\mathcal{F}_n} \to 0.
\end{equation*}
Finally, remember the Markov Inequality,
\begin{equation*}
    P(X \ge \varepsilon ) \le \frac{E(X)}{\varepsilon}.
\end{equation*}
Therefore if $E^*_P \| \mathbb{G}_n \|_{\mathcal{F}^*_n} \to 0$ we necessarily have that $\|\mathbb{G}_n \|_{\mathcal{F}^*_n} \to 0$ in probability. Hence Assumption~\ref{ass:yiu-main} (c) holds.

We have shown that under Assumption~\ref{ass:time-varying-conditions},  our model adheres to Assumption~\ref{ass:yiu-main}. Therefore from \citet{Yiu2025pcorrect}, our proposed method satisfies the semiparametric Bernstein-von Mises Theorem, proving Theorem~\ref{thm:bvm_single_time}.

\subsection{Proof of Proposition 3.2}
Write $D(\pi)=\pi e^{s(\pi)}+1-\pi$. Since $s$ is continuous on $[0,1]$, $e^{s(\pi)}\in[m,M]$ with $m=e^{-\|s\|_\infty}>0$, so
$D(\pi)\ge\pi m+(1-\pi)\ge\min(m,1)>0$. The weights
$\Gamma_1(\pi)/\pi=e^{s(\pi)}/D(\pi)$ and $(1-\Gamma_1(\pi))/(1-\pi)=1/D(\pi)$ are therefore bounded, giving (e). Also $\Gamma_1$ is a quotient of $C^2$ functions with nonvanishing denominator, hence $C^2$ with bounded first and second derivatives on $[0,1]$, giving (d). For the incremental propensity score intervention, $s\equiv\log\delta$ is bounded for $\delta$ in a compact subset of $(0,\infty)$, so the bounds hold uniformly. Hence the proposition holds.

By contrast, the power tilt \eqref{eq:PT_density} is not a bounded log-odds shift: its log-odds map $\operatorname{logit}\Gamma_1(\pi)=\delta\operatorname{logit}\pi$ scales rather than translates, and the perturbation is unbounded as $\pi\to0,1$. Although the denominator $\pi^\delta+(1-\pi)^\delta$ is nonzero on $[0,1]$, the boundary behavior is governed by the numerator: the weight 
\begin{equation*}\frac{\Gamma_1(\pi)}{\pi}=
\frac{\pi^{\delta-1}}{\{\pi^\delta+(1-\pi)^\delta\}} 
\sim
\pi^{\delta-1}
\end{equation*}
as $\pi\to0$ (and $(1-\pi)^{\delta-1}$ as $\pi\to1$) is bounded if and only if $\delta\ge1$, while $\Gamma_1(\pi)\sim\pi^\delta$ gives $\Gamma_1''(\pi)\asymp\delta(\delta-1)\pi^{\delta-2}$, bounded only for $\delta\ge2$. Thus the power tilt satisfies (d) and (e) for $\delta$ in a compact subset of $[2,\infty)$ with no overlap assumption. However for $0<\delta<2$, the strong overlap condition $c\le\pi(x)\le1-c$ for some $c\in(0,\tfrac12)$ is necessary to keep $\pi$ and $1-\pi$ bounded away from zero, ensuring $\pi^{\delta-2}$ and $\pi^{\delta-1}$ are bounded and that (d) and (e) again hold.

\subsection{Proof of Lemma 4.1}
Next we derive a bound on the bracketing number for SoftBART, given in Lemma~\ref{lem:SoftBART-Bracketing-Bound}. We note that the SoftBART function with $K_n$ trees can be written as 
\begin{equation*}
    f(x) = \sum_{k = 1}^{K_n} g_k(x; \mathcal{K}_k, \mathcal{L}_k), 
\end{equation*}
for the tree structure $\mathcal{K}_k$ and leaf parameters $\mathcal{L}_k$. Note that the bracketing number of a sum of functions can be bounded by the product of the bracketing numbers of the individual functions \citep{vandervaartwellner2023weak}. Therefore, 
\begin{equation*}
    N_{[ \text{ }]}\left(\varepsilon, \tilde{\mathcal{F}}_n, L_2(P)\right) \le \prod_{k=1}^{K_n} N_{[ \text{ }]}\left(\varepsilon/K_n, \mathcal{G}_n, L_2(P)\right),
\end{equation*}
where $\tilde{\mathcal{F}}_n$ is the class of possible functions for SoftBART and $\mathcal{G}_n$ is the class of possible functions for a single  tree. For now we focus on bracketing across the gating functions and leaf-weights for a single tree. Note that a single SoftBART tree can be expressed as
\begin{equation*}
    g(x; \mathcal{K}, \mathcal{L}) = \sum_{l = 1}^{L_n} w_l \ell_l,
\end{equation*}
for the splits $w_l$ and leaf-weights $\ell_l$ for $L_n$ leaves. The splits 
\begin{equation*}
    w_l(x) = \prod_{b \in \text{path}(l)}
    \begin{cases}
        \sigma \left(\frac{x_{j_b} - c_b}{\tau_b} \right) & \text{if } l \text{ is to the left of } b \\
        1 - \sigma \left(\frac{x_{j_b} - c_b}{\tau_b} \right) & \text{if } l \text{ is to the right of } b
    \end{cases}
\end{equation*}
are probabilities, so they satisfy $\sum_l w_l(x) = 1$ for all $x$.
Let $[w_l^L, w_l^U]$ be an $L_2(P)$-bracket for the class of gating functions, $\mathcal{W}_n$, with 
\begin{equation*}
    \|w_l^U - w_l^L \|_{L_2(P_0)} \le \beta_w.
\end{equation*}
Similarly we can bracket the leaf weights, $\ell_k \in [\ell_k^L, \ell_k^U]$ such that 
\begin{equation*}
    \|\ell_l^U - \ell_l^L \|_{L_2(P_0)}  = |\ell_l^U - \ell_l^L| \le \beta_\ell.
\end{equation*}
For each leaf we define, 
\begin{align*}
    g_*^L(x)&:= \min \{\ell_l^L w_l^L, \ell_l^L w_l^U, \ell_l^U w_l^L, \ell_l^U w_l^U \}, \\
    g_*^U(x) &:= \max \{\ell_l^L w_l^L, \ell_l^L w_l^U, \ell_l^U w_l^L, \ell_l^U w_l^U \}.
\end{align*}
We then define 
\begin{equation*}
    g^L(x):= \sum_{l = 1}^{L_n} g_*^L(x), \quad g^U(x):= \sum_{l = 1}^{L_n} g_*^U(x).
\end{equation*}
Then $[g^L, g^U]$ is a valid $L_2(P_0)$ bracket for our SoftBART tree. Therefore, to control the width of this bracket, using the triangle inequality we have that
\begin{align*}
    \|g^U - g^L\|_{L_2(P_0)} &= \left\| \sum_{l=1}^{L_n} g_*^U(x) - g_*^L(x) \right\|_{L_2(P_0)}, \\
    & \le \sum_{l=1}^{L_n} \|g_*^U(x) - g_*^L(x) \|_{L_2(P_0)}, \\
    & \le \sum_{l=1}^{L_n} \|(\ell_l^U - \ell_l^L) \max\{w_l^L, w_l^U\} + \max\{|\ell_l^L|, |\ell_l^U |\}(w_l^U - w_l^L)\|_{L_2(P_0)}.
\end{align*}
Note that the final inequality utilizes $ab -cd = (a-c)b + c(b-d)$. Further using the triangle inequality and Assumption~\ref{ass:SoftBART} (b) (bounded leaf-weights) gives us 
\begin{equation*}
    \|g^U - g^L\|_{L_2(P_0)} \le \sum_{l=1}^{L_n} 
    \left(\beta_\ell \|w_l^U\|_{L_2(P_0)} + B_n \|w_l^U - w_l^L\|_{L_2(P_0)}\right).
\end{equation*}
Note that $\|w_l^U\|_{L_2(P_0)} \le 1$ because $0\le w_l(x) \le 1$ and $P$ is a probability measure. Therefore
\begin{equation*}
    \|g^U - g^L\|_{L_2} \le L_n \beta_\ell + L_n B_n\beta_w.
\end{equation*}
To get an $\varepsilon/K_n$ bracketing number we need 
\begin{equation*}
    L_n(\beta_\ell + B_n\beta_w) \le \varepsilon/K_n.
\end{equation*}
Hence 
\begin{equation*}
    N_{[ \text{ }]}\left(\varepsilon/K_n, \mathcal{G}_n, L_2(P_0)\right) \le 
    N_{[ \text{ }]}\left(\beta_\ell, \mathcal{B}_n, L_2(P_0)\right)
    N_{[ \text{ }]}\left(\beta_w, \mathcal{W}_n, L_2(P_0)\right) ,
\end{equation*}
where 
\begin{equation}\label{eq:beta_definitions}
    \beta_\ell = \frac{\varepsilon}{2L_n K_n}, \quad \beta_w = \frac{\varepsilon}{2B_nL_n K_n},
\end{equation}
$\mathcal{W}_n$ denotes the class of all SoftBART gating functions for a fixed tree topology with $L_n$ leaves, and $\mathcal{B}_n$ is the function class of all possible leaf-weights.

Then, we focus on obtaining a bound on the bracketing number for the splits. A SoftBART splitting function can be expressed as 
\begin{equation*}
    s(c,\tau)
    =
    \sigma\left(
    \frac{x_j-c}{\tau}
    \right).
\end{equation*}
At this point we consider arbitrary $x_j$ parameters. Further, the class of splitting functions at a single internal split for some $x_j$ can be expressed as 
\begin{equation*}
    \mathcal{W}_{\text{split},n}
    =
    \left\{
    s(c,\tau)
    =
    \sigma\left(
    \frac{x_j-c}{\tau}
    \right)
    :
    (c,\tau)
    \in
    [-C,C]
    \times
    [\tau_n^L,\tau_n^U]
    \right\}.
\end{equation*}

Since $\sigma$ is continuously differentiable with uniformly bounded derivative, the partial derivatives of the splitting function satisfy
\begin{align*}
    \left|
    \frac{\partial s(c,\tau)}{\partial c}
    \right|
    &=
    \left|
    \frac{1}{\tau}
    \sigma'
    \left(
    \frac{x_j-c}{\tau}
    \right)
    \right|
    \lesssim
    \frac{1}{\tau_n^L},
    \\
    \left|
    \frac{\partial s(c,\tau)}{\partial \tau}
    \right|
    &=
    \left|
    \frac{x_j-c}{\tau^2}
    \sigma'
    \left(
    \frac{x_j-c}{\tau}
    \right)
    \right|
    \lesssim
    \frac{1}{(\tau_n^L)^2},
\end{align*}
where the second inequality follows from the bounded support assumption on
$x_j$ and $c$ after normalization.

Therefore, by the mean value theorem, the splitting function is Lipschitz in the supremum norm and hence also the $L_2(P_0)$ norm:
\begin{equation*}
    \|s(c_1,\tau_1)-s(c_2,\tau_2)\|_{L_2(P_0)}
    \lesssim
    \frac{|c_1-c_2|}{\tau_n^L}
    +
    \frac{|\tau_1-\tau_2|}{(\tau_n^L)^2}.
\end{equation*}

To obtain brackets of width at most $\beta_w$, choose grid widths
\begin{equation*}
    \Delta_c
    \asymp
    \beta_w\tau_n^L,
    \qquad
    \Delta_\tau
    \asymp
    \beta_w(\tau_n^L)^2.
\end{equation*}
Hence, the number of brackets required for the cutpoint parameter is bounded by
\begin{equation*}
    \frac{c_{\max}-c_{\min}}
    {\beta_w\tau_n^L},
\end{equation*}
while the number of brackets required for the temperature parameter is bounded by
\begin{equation*}
    \frac{\tau_n^U-\tau_n^L}
    {\beta_w(\tau_n^L)^2}.
\end{equation*}

Since $c_{\max}-c_{\min}=O(1)$ after normalization and
$\tau_n^U-\tau_n^L\lesssim \tau_n^U$, this simplifies to
\begin{equation*}
    N_{[\ ]}\left(
    \beta_w,
    \mathcal{W}_{\text{split},n},
    L_2(P_0)
    \right)
    \lesssim
    \frac{
    \tau_n^U
    }
    {
    \beta_w^2
    (\tau_n^L)^3
    }.
\end{equation*}

A gating function corresponding to a leaf can be written as a product of at most $L_n -1$ splitting functions, 
\begin{equation*}
    w_{\ell} (x) =  \prod_{b \in \text{path}(l)} s_b(x),
\end{equation*}
where $0 \le s_b(x) \le 1$. By a telescoping product inequality, 
\begin{equation*}
    \left| \prod_{b=1}^m s_b(x) - \prod_{b=1}^m \tilde{s}_b(x)\right| \le \sum_{b=1}^m |s_b(x) - \tilde{s}_b(x)|.
\end{equation*}
Therefore 
\begin{equation*}
    \left\| \prod_{b=1}^m s_b(x) - \prod_{b=1}^m \tilde{s}_b(x)\right\|_{L_2(P_0)} \le \sum_{b=1}^m \|s_b(x) - \tilde{s}_b(x)\|_{L_2(P_0)}.
\end{equation*}
Since $m \le L_n -1$, a $\beta_w / L_n$-bracketing of each splitting function induces a $\beta_w$-bracketing of the corresponding gating function. Therefore we have the bound
\begin{equation*}
     N_{[\ ]}\left(
    \beta_w,
    \mathcal{W}_n,
    L_2(P_0)
    \right) \lesssim
     N_{[\ ]}\left(
    \beta_w / L_n,
    \mathcal{W}_{\text{split},n},
    L_2(P_0)
    \right)^{L_n},
\end{equation*}
which implies
\begin{equation*}
    N_{[\ ]}\left(
    \beta_w,
    \mathcal{W}_n,
    L_2(P_0)
    \right)
    \lesssim
    \left(
    \frac{
    \tau_n^U L_n^2
    }
    {
    \beta_w^2
    (\tau_n^L)^3
    }
    \right)^{L_n}.
\end{equation*}

Next the space of possible leaf weights is a $L_n$-dimensional hypercube, $\mathcal{B}_n = [-B_n, B_n]^{L_n}$. Therefore for a single leaf the number of brackets needed is $2B_n/\beta_\ell$. Across $L_n$ leaves we have, 
\begin{equation*}
    N_{[ \text{ }]}\left(\beta_\ell, \mathcal{B}_n, L_2(P_0)\right) \lesssim
    \left(\frac{2B_n}{\beta_\ell} \right) ^{L_n}.
\end{equation*}
Combining these results we get a bound on the bracketing number for a single SoftBART tree with a fixed configuration, 
\begin{equation*}
    N_{[ \text{ }]}\left(\varepsilon/K_n, \mathcal{G}_n, L_2(P_0)\right) \lesssim \left(
    \frac{2B_n\tau_n^U L_n^2}
    {\beta_w^2 \beta_\ell(\tau_n^L)^3}
    \right)^{L_n}.
\end{equation*}

Next, we multiply this bound across all possible tree configurations.  A configuration is defined by the tree topologies (the branching structure) and the assignment of split variables to each internal node. For $L_n$  leaves and $p$ covariates, the number of such discrete configurations is bounded by $(4p)^{L_n}$.  This is because the number of binary tree configurations with $L_n$ leaves is given by $ \mathbb{C}_{L_n-1}$ (\citet{castillo2021uqcart}), with
\begin{equation*}
    \mathbb{C}_{L_n} = \frac{1}{L_n + 1} \binom{2L_n}{L_n} = \binom{2L_n}{L_n} - \binom{2L_n}{L_n + 1}
\end{equation*}
being the Catalan number. Therefore, we must show that $\mathbb{C}_{L_n}$ is bounded by $4^{L_n}$, 
\begin{equation*}
    \mathbb{C}_{L_n} = \frac{1}{L_n + 1} \binom{2L_n}{L_n} \le \binom{2L_n}{L_n} \lesssim 4^{L_n}.
\end{equation*}
To show the final inequality, we note that Stirling's approximation states, 
\begin{equation*}
    n! \sim \sqrt{2\pi n}\left(\frac{n}{e}\right)^n,
\end{equation*}
which implies
\begin{equation*}
    \binom{2n}{n} \sim \frac{2^{2n}}{\sqrt{n\pi}}.
\end{equation*}
Then because each internal node can utilize any of our $p$ covariates, this bound is multiplied by $p^{L_n}$. Note that it should be $p^{L_n -1}$, however for simplicity we replace it with a bound of $p^{L_n}$. This gives our final bound on the number of discrete configurations as $(4p)^{L_n}$. This is a crude bound for the number of discrete configurations of our ensemble, but it is irrelevant asymptotically. 
Therefore we can bound the bracketing number for a single SoftBART tree across all configurations by 
\begin{equation*}
    N_{[ \text{ }]}\left(\varepsilon/K_n, \mathcal{G}_n, L_2(P_0)\right) \lesssim (4p)^{L_n} \left(\frac{2B_n\tau_n^U L_n^2}{\beta_w^2 \beta_\ell(\tau_n^L)^3} \right)^{L_n}
\end{equation*}
Hence across all $K_n$ trees we have the bound, 
\begin{equation*}
    N_{[ \text{ }]}\left(\varepsilon, \tilde{\mathcal{F}}_n, L_2(P_0)\right) \lesssim  \left(\frac{8 \,p \,B_n \,\tau_n^U L_n^2}{\beta_w^2 \, \beta_\ell \,(\tau_n^L)^3} \right)^{L_nK_n}.
\end{equation*}
Using our previous definition of $\beta_w$ and $\beta_\ell$ in \eqref{eq:beta_definitions}, 
\begin{equation*}
    N_{[ \text{ }]}\left(\varepsilon, \tilde{\mathcal{F}}_n, L_2(P_0)\right) \lesssim \left(\frac{64 \,p\, \tau_n^U \, B_n^3 \,L_n^5 \, K_n^3}{\varepsilon^3 \,(\tau_n^L)^3 }\right)^{L_n K_n}.
\end{equation*}
This is equivalent to 
\begin{equation*}
    N_{[ \text{ }]}\left(\varepsilon, \tilde{\mathcal{F}}_n, L_2(P_0)\right) \lesssim \left(
    \frac{\sqrt[3]{\tau_n^U} \, B_n \, L_n^{5/3} \, K_n}
    {\varepsilon \, \tau_n^L }
    \right)^{3L_n K_n}.
\end{equation*}
Finally we get a bound on the log-covering number for our SoftBART function, 
\begin{equation*}
    \log N_{[ \text{ }]}\left(\varepsilon, \tilde{\mathcal{F}}_n, L_2(P_0)\right) \lesssim 
    L_n K_n   \log \left(\frac{\tau_n^U \, B_n \, L_n \, K_n} {\varepsilon \, \tau_n^L} \right),
\end{equation*}
which is the desired result. 

\subsection{Proof of Lemma 4.2}
Here we will verify Assumption~\ref{ass:time-varying-conditions} (a) and (c). First we will compute the bracketing integral. The bracketing integral \citep{vandervaartwellner2023weak} is defined as 
\begin{equation*}
    J_{[ \text{ }]}(\gamma_n, \mathcal{F}, L_2(P_0)) = \int_0^{\gamma_n} \sqrt{1 + \log N_{[ \text{ }]}\left(\varepsilon, \mathcal{F}_n, L_2(P_0)\right)} d\varepsilon.
\end{equation*}

Therefore using Lemma~\ref{lem:SoftBART-Bracketing-Bound} we must compute 
\begin{equation*}
    \int_0^{\gamma_n} \sqrt{1 + L_n K_n \log \left(\frac{ \tau_n^U \, B_n \, L_n \, K_n} {\varepsilon \, \tau_n^L} \right)} d\varepsilon
    \le \gamma_n +  \sqrt{L_n K_n}
     \int_0^{\gamma_n} \sqrt{\log \left(\frac{ \tau_n^U \, B_n \, L_n \, K_n} {\varepsilon \, \tau_n^L} \right)} d\varepsilon.
\end{equation*}
To ease notation we define
\begin{equation}\label{eq:bracket_constant}
    \iota_n = \frac{ \tau_n^U \, B_n \, L_n \, K_n} {\tau_n^L}. 
\end{equation}
Then using $\sqrt{a + b} \le \sqrt{a} + \sqrt{b}$ we have that 
\begin{align*}
    \sqrt{L_n K_n} \int_0^{\gamma_n} \sqrt{\log (\iota_n / \varepsilon)} d\varepsilon &= 
    \sqrt{L_n K_n} \int_0^{\gamma_n} \sqrt{\log (\iota_n) + \log( 1/ \varepsilon)} d\varepsilon,\\
    & \le \sqrt{L_n K_n} \int_0^{\gamma_n} \sqrt{\log (\iota_n)}d\varepsilon + \sqrt{L_n K_n} \int_0^{\gamma_n} \sqrt{\log( 1/ \varepsilon)} d\varepsilon.
\end{align*}
The first integral is 
\begin{equation*}
    \sqrt{L_n K_n} \int_0^{\gamma_n} \sqrt{\log (\iota_n)}d\varepsilon = \gamma_n \sqrt{L_n K_n\log(\iota_n)}.
\end{equation*}
For the second integral, we use the substitution $\varepsilon = \gamma_ne^{-t^2}$, which implies that $d\varepsilon = -2t\gamma_ne^{-t^2}dt$. Then the integral becomes, 
\begin{equation*}
    \int_0^{\gamma_n} \sqrt{\log(1/\varepsilon)}d\varepsilon = \int_0^\infty 2t\gamma_ne^{-t^2} \sqrt{\log(1/\gamma_n) + t^2}dt.
\end{equation*}
Again using $\sqrt{a + b} \le \sqrt{a} + \sqrt{b}$, we have that $\sqrt{(1/\gamma_n) + t^2} \le \sqrt{\log(1/\gamma_n)} + t$. Therefore 
\begin{equation*}
    \int_0^\infty 2t\gamma_ne^{-t^2} \sqrt{\log(1/\gamma_n) + t^2}dt \le 
    \gamma_n \sqrt{\log(1/\gamma_n)}\int_0^\infty 2te^{-t^2}dt + \gamma_n \int_0^\infty 2t^2e^{-t^2}dt.
\end{equation*}
Hence 
\begin{equation*}
    \sqrt{L_n K_n} \int_0^{\gamma_n} \sqrt{\log( 1/ \varepsilon)} d\varepsilon \le \gamma_n \sqrt{\log(1/\gamma_n)} + \frac{\sqrt{\pi}}{2}\gamma_n.
\end{equation*}
Putting everything together, we have 
\begin{equation*}
     \sqrt{L_n K_n} \int_0^{\gamma_n} \sqrt{\log (\iota_n / \varepsilon)}d\varepsilon \le 
     \gamma_n \sqrt{L_n K_n} \sqrt{\log(\iota_n)} +
     \sqrt{L_n K_n} \left[\gamma_n \sqrt{\log(1/\gamma_n)} + \frac{\sqrt{\pi}}{2}\gamma_n \right].
\end{equation*}
Then to obtain a simple asymptotic bound, note that $\sqrt{a} + \sqrt{b} \le \sqrt{2(a+b)}$. Therefore 
\begin{equation*}
    \gamma_n \sqrt{L_n K_n} \left( \sqrt{\log(\iota_n)} + \sqrt{\log(1/\gamma_n)} \right) \le \gamma_n \sqrt{2L_n K_n \log(\iota_n /\gamma_n)}.
\end{equation*}
Hence we have the final bound on our bracketing integral, 
\begin{equation*}
    \int_0^{\gamma_n} \sqrt{1 + L_n K_n \log (\iota_n / \varepsilon)} d\varepsilon \lesssim \gamma_n + \gamma_n \sqrt{L_n K_n \log(\iota_n /\gamma_n)}.
\end{equation*}
Finally, using the definition of $\iota_n$ in \eqref{eq:bracket_constant},
\begin{equation*}
    \int_0^{\gamma_n} \sqrt{1 + L_n K_n \log (\iota_n / \varepsilon)} d\varepsilon \lesssim 
    \gamma_n + \gamma_n \sqrt{L_n K_n \log\left( \frac{ \tau_n^U \, B_n \, L_n \, K_n} {\tau_n^L \, \gamma_n} \right)}.
\end{equation*}
We assume that SoftBART is used as a prior for both $\pi(P)$ and $m(P)$. The remaining analysis can be easily generalized if this is not the case.

We further assume that 
\begin{equation*}
\gamma_n = o(n^{-1/4}),
\end{equation*}
although this can be easily generalized. Recall that SoftBART satisfies the minimax posterior contraction rate
\begin{equation*}
n^{-\alpha/(2\alpha + p)}
\end{equation*}
up to logarithmic factors. Therefore, to satisfy
\begin{equation*}
\gamma_n = o(n^{-1/4}),
\end{equation*}
we require
\begin{equation*}
\frac{\alpha}{2\alpha + p} > \frac14.
\end{equation*}
Solving,
\begin{align*}
4\alpha &> 2\alpha + p \\
2\alpha &> p \\
\alpha &> \frac{p}{2}.
\end{align*}
Thus, ignoring logarithmic factors, Assumption~\ref{ass:time-varying-conditions} (a) requires the smoothness parameter to satisfy
\begin{equation*}
\alpha > \frac{p}{2}.
\end{equation*}

Under a sparsity assumption, the posterior contraction rate for SoftBART becomes
\begin{equation*}
n^{-\alpha/(2\alpha+d)}
+
\sqrt{n^{-1}d\log p},
\end{equation*}
where $d$ denotes the effective dimension. Therefore, to satisfy Assumption~\ref{ass:time-varying-conditions} (a), we require
\begin{equation*}
n^{-\alpha/(2\alpha+d)}
+
\sqrt{n^{-1}d\log p}
=
o(n^{-1/4}).
\end{equation*}
Since both terms must satisfy the rate condition individually, we require
\begin{equation*}
n^{-\alpha/(2\alpha+d)}
=
o(n^{-1/4})
\quad \text{and} \quad
\sqrt{n^{-1}d\log p}
=
o(n^{-1/4}).
\end{equation*}
The first condition implies
\begin{equation*}
\frac{\alpha}{2\alpha+d}
>
\frac14,
\end{equation*}
which yields
\begin{align*}
4\alpha &> 2\alpha+d,\\
2\alpha &> d,\\
\alpha &> \frac{d}{2}.
\end{align*}
The second condition implies
\begin{equation*}
d\log p
=
o(n^{1/2}).
\end{equation*}
Thus, under sparsity, Assumption~\ref{ass:time-varying-conditions} (a) requires
\begin{equation*}
\alpha > \frac{d}{2}
\quad \text{and} \quad
d\log p = o(n^{1/2}),
\end{equation*}
up to logarithmic factors.

Next, under Assumption~\ref{ass:time-varying-conditions} (c), the entropy integrals must satisfy
\begin{equation*}
J_{[]}(\gamma_n,\cdot,L_2(P_0))
=
o\!\left(1 \wedge \gamma_n n^{1/4}\right).
\end{equation*}
Since $\gamma_n = o(n^{-1/4})$, we have
\begin{equation*}
\gamma_n n^{1/4} = o(1),
\end{equation*}
and therefore
\begin{equation*}
1 \wedge \gamma_n n^{1/4}
=
\gamma_n n^{1/4}.
\end{equation*}
Thus, Assumption~\ref{ass:time-varying-conditions} (c) requires
\begin{equation*}
\gamma_n
+
\gamma_n
\sqrt{
L_n K_n
\log\left(
\frac{
\tau_n^U
\, B_n \,
L_n \,
K_n
}{
\tau_n^L \,
\gamma_n
}
\right)
}
=
o(\gamma_n n^{1/4}).
\end{equation*}
Dividing through by $\gamma_n$ yields
\begin{equation*}
1
+
\sqrt{
L_n K_n
\log\left(
\frac{
\tau_n^U \, B_n \, L_n \, K_n}
{ \tau_n^L \,\gamma_n}
\right)
}
=
o(n^{1/4}).
\end{equation*}
Since $1 = o(n^{1/4})$, it suffices that
\begin{equation*}
L_n K_n
\log\left(
\frac{
\tau_n^U
\, B_n \,
L_n \,
K_n
}{
\tau_n^L \,
\gamma_n
}
\right)
=
o(n^{1/2}).
\end{equation*}
The norm-transfer step of Lemma~\ref{lem:emp2pop} requires that its remainder be of smaller order than $\gamma_n^2$, i.e.
\begin{equation*}
L_n K_n
\log\left(
\frac{\tau_n^U \, B_n \, L_n \, K_n}{\tau_n^L}\, n
\right)
\ \lesssim\ n\,\gamma_n^2 ,
\end{equation*}
which, since $\gamma_n$ is polynomial in $n$ (so $\log(1/\gamma_n)\asymp\log n$), coincides up to constants with the entropy condition above. Both hold without a separate assumption. Under the sieve rates of \citet{Linero2018brte}, every sieve dimension is polynomial in $n$: the leaf bound satisfies $B_n \lesssim R_n \tau^{-d}$, which is polynomial by the poly-logarithmic radius $R_n \lesssim (\log n)^c$. Hence $\log(\iota_n/\gamma_n) \asymp \log n$, and
\begin{equation*}
L_n K_n \log(\iota_n/\gamma_n)
\lesssim n\gamma_n^2
= o(n^{1/2}).
\end{equation*}
Hence the entropy condition of Assumption~\ref{ass:time-varying-conditions} (c) and the norm-transfer remainder of Lemma~\ref{lem:emp2pop} hold automatically.

Finally, we verify that the posterior mass, entropy, and contraction conditions can be pushed through the probit link from the underlying latent model $f$. The sieve $S_{t,n}^\pi = \{\Phi_{\mathrm{cdf}}(f) : f \in S_n^f\}$ then satisfies
\begin{equation*}
    \Pi(S_{t,n}^\pi Z_{1:n}) \geq \Pi(S_n^f \mid Z_{1:n}) 
    \xrightarrow{P_0} 1,
\end{equation*}
since $\Phi_{\mathrm{cdf}}$ is Lipschitz with $\|\Phi_{\mathrm{cdf}}'\|_\infty \leq 1/2$, so 
$S_n^f \subseteq \Phi_{\mathrm{cdf}}^{-1}(S_{t,n}^\pi)$. Furthermore, the posterior contraction rate for the propensity score model is handled in Lemma~\ref{lem:probit-contraction}, which achieves the rate of \citet{Linero2018brte} under vanishing positivity.
The bracketing entropy 
bound for $\mathcal{P}_{t,n}$ transfers from the $f$-class via 
Lipschitzness of $\Phi_{\mathrm{cdf}}$: if $[f^L, f^U]$ is an $\varepsilon$-bracket 
for $f$, then $[\Phi_{\mathrm{cdf}}(f^L), \Phi_{\mathrm{cdf}}(f^U)]$ is a bracket for $\pi$ of width
\begin{equation*}
    \|\Phi_{\mathrm{cdf}}(f^U) - \Phi_{\mathrm{cdf}}(f^L)\|_{L_2(P_0)} \leq 
    \frac{1}{2}\|f^U - f^L\|_{L_2(P_0)} \leq \frac{\varepsilon}{2},
\end{equation*}
so $N_{[\,]}(\varepsilon, \mathcal{P}_{t,n}, L_2(P_0)) \leq 
N_{[\,]}(2\varepsilon, \mathcal{F}_n, L_2(P_0))$, and the entropy 
bound of Lemma~\ref{lem:SoftBART-Bracketing-Bound} applies to 
$\mathcal{P}_{t,n}$ up to a constant, satisfying 
Assumption~\ref{ass:time-varying-conditions} (c). Note that for the propensity score model the norm transfer is applied after the probit link, ensuring that uniform boundedness applies. Hence we have proven Lemma~4.2.

\subsection{Additional Simulation Details}

\noindent \textbf{Further Main Simulation Details}


Estimator performance is assessed using several metrics, including integrated bias and root-mean-squared error, 
\begin{equation*}
    \widehat{\text{Bias}} = \frac{1}{I} \sum_{i = 1}^I \left|\frac{1}{J} \sum_{j=1}^J \hat{\psi}_j(\delta_i) - \psi(\delta_i) \right|, \quad  \widehat{\text{RMSE}} = \frac{\sqrt{n}}{I} \sum_{i = 1}^I \left[\frac{1}{J} \sum_{j=1}^J \left\{\hat{\psi}_j(\delta_i) - \psi(\delta_i) \right\}^2\right]^{1/2}.
\end{equation*}
Further, we assess the uniform coverage of the estimated confidence bands, 
\begin{equation*}
    \widehat{\text{Cov}} = \frac{1}{J} \sum_{j=1}^J \mathbf{1} \left\{\psi(\delta_i) \in \hat{\mathcal{C}}_j(\delta_i) \quad \forall i = 1, \ldots, I \right\},
\end{equation*}
where $\hat{\mathcal{C}}_j$ denotes the estimated uniform 95\% confidence band in simulation $j$, as well as the mean 95\% interval length,

\begin{equation*}
    \widehat{\text{Int Len}} = \frac{1}{IJ} \sum_{i = 1}^I \sum_{j=1}^J \left|\hat{\mathcal{C}}_j^{\text{upper}}(\delta_i) - \hat{\mathcal{C}}_j^{\text{lower}}(\delta_i) \right|.
\end{equation*}

\noindent \textbf{Frequentist Estimators and Uncertainty Quantification}

We have the following frequentist plug-in estimator, 
\begin{align*}
    \hat{\psi}_{\text{freq-plug}} &= \mathbb{P}_n [ \psi(\hat{\eta}, \delta)].
\end{align*}
Next is the frequentist one-step estimator,
\begin{equation*}
    \hat{\psi}_{\text{freq-step}} = \frac{1}{K}\sum_{k=1}^K \mathbb{P}_n^k[\phi^*(Z; \hat{\eta}_{-k},\delta)] = \mathbb{P}_n \{ \phi^*(Z; \hat{\eta}_{-S}, \delta)\},
\end{equation*}
where $\hat{\eta}_{-k}$ is the nuisance estimator constructed excluding group $k$. Uniform confidence intervals in the frequentist setting are obtained using the multiplier bootstrap process according to Theorem 4 in \citet{Kennedy2018ipsi}. First note that 
\begin{equation*}
    \hat{\sigma}^2(\delta) =  \mathbb{P}_n \left[\{\phi^*(Z;\hat{\eta}_{-S}, \delta) - \hat{\psi}(\delta)\}^2 \right].
\end{equation*}
Then pointwise 95\% confidence intervals for $\psi(\delta)$ are 
\begin{equation*}
    \hat{\psi}(\delta) \pm (1.96) \hat{\sigma}(\delta) / \sqrt{n}.
\end{equation*}
Next, our estimate for the critical value of the supremum of the multiplier bootstrap process is 
\begin{equation*}
    P\left(\sup_{\delta \in \mathcal{D}} \left|\sqrt{n}\mathbb{P}_n \left[\xi \left\{\frac{\phi^*(Z;\hat{\eta}_{-S}, \delta) - \hat{\psi}(\delta)}{\hat{\sigma}(\delta)} \right\} \right] \right| \right) \ge \hat{\mathcal{C}}_\alpha | Z_1, \ldots, Z_n = \alpha,
\end{equation*}
with $(\xi_1, \ldots, \xi_n)$ being iid Rademacher random variables. To obtain the Rademacher draws in practice we use $X_{\text{Rademacher}} \sim 2 * \text{Bernoulli}(0.5) - 1.$ Finally, the uniform confidence bands \citet{Kennedy2018ipsi} are 
\begin{equation*}
    \mathbb{P} \left\{\hat{\psi}(\delta) - \frac{\hat{\mathcal{C}}_\alpha \hat{\sigma}(\delta)}{\sqrt{n}} \le \psi(\delta) \le \hat{\psi}(\delta) + \frac{\hat{\mathcal{C}}_\alpha \hat{\sigma}(\delta)}{\sqrt{n}}, \quad \forall \delta \in \mathcal{D} \right\} = 1 - \alpha + o(1).
\end{equation*}

The multiplier bootstrap approach utilizes the uncentered influence function of a given estimator. For comparison in the frequentist case, we will derive the influence function for the plug-in case. That is, considering $\hat{\pi}$ and $\hat{m}$ fixed we define, 
\begin{equation*}
    h_{\hat{\eta}, \delta}(X) = \frac{\delta \hat{\pi}(X) \hat{\mu}(X,1) + \{ 1 - \hat{\pi}(X)\}\hat{\mu}(X,0)}{\delta \hat{\pi}(X) + 1 - \hat{\pi}(X)}.
\end{equation*}
Then the plug-in influence function is 
\begin{equation*}
    \varphi_{\text{IPSI}} = h_{\hat{\eta}, \delta}(X) - \mathbb{P}_n[h_{\hat{\eta}, \delta}].
\end{equation*}
Fixing the nuisance functions reduces the target parameter to a mean functional of the observed data. Since integration with respect to a probability measure is linear, the resulting functional is linear in the underlying distribution. Also, it necessarily has mean 0 by construction.

\noindent \textbf{Additional Bayesian Estimation Details}

In the Bayesian setting we investigate the performance of several nonparametric Bayesian nuisance estimators, including BART \citep{chipman2010bart}, SoftBART \citep{Linero2018brte}, and SoftBCF \citep{hahn2020bcf}. Throughout the simulation study we use $B = 10,000$ draws from the posterior distribution and across all instances we use 2,000 samples as burn in and then keep the next 2,000 samples as the posterior. For BART we utilize the package aptly named BART \citet{sparapani2021bart}. For both the propensity score and outcome regression we utilize the gbart() function with type being pbart and wbart respectively. For SoftBART we utilize the SoftBart package \citet{linero2022softbart}. Finally, SoftBCF utilizes a special case of the varying coefficient BART model described in \citet{linero2022softbart} to fit the outcome regression. We utilize the default settings for this procedure \citet{hahn2020bcf} where for alpha we have $\beta = 2, \gamma = 0.95$ across 200 trees and for beta we have $\beta = 3, \gamma = 0.25$ across 50 trees. Finally, across all models, default settings were utilized unless stated otherwise. 

\noindent \textbf{Uniform Coverage Details and Coverage Results}

 To obtain the Bayesian uniform credible bands we first define the emipirical posterior mean function 
\begin{equation*}
    \Bar{\psi}(P_0;Q) = \frac{1}{B}\sum_{b = 1}^B \tilde{\psi}^{(b)}(\delta), \quad \delta \in \mathcal{D}.
\end{equation*}
Next, we compute the posterior standard deviation at each $\delta$. For each posterior draw $b$, we define the studentized supremum deviation,   
\begin{equation*}
    \hat{\sigma}(\delta) = \sqrt{\frac{1}{B-1} \sum_{b=1}^B (\tilde{\psi}^{(b)} - \Bar{\psi}(\delta))^2}, 
    \quad 
    T^{(b)} = \sup_{\delta \in \mathcal{D}} \frac{ \left|\tilde{\psi}^{(b)} - \Bar{\psi}(\delta) \right|}{\hat{\sigma}(\delta)}.
\end{equation*}
The critical value, $\hat{\mathcal{C}}_{1-\alpha}$, is the $(1-\alpha)$-quantile of $\{T\}_{b=1}^B$ which is used to construct the uniform credible band, 
\begin{equation*}
    \left[\Bar{\psi}(\delta) - \hat{\mathcal{C}}_{1-\alpha}\hat{\sigma}(\delta),\quad \Bar{\psi}(\delta) + \hat{\mathcal{C}}_{1-\alpha}\hat{\sigma}(\delta) \right], \quad \forall \delta \in \mathcal{D}.
\end{equation*}
Below are the uniform coverage results from the main simulation study. These results closely align with those from the pointwise intervals.

\begin{table}
\small
\caption{IPSI regular data comparison, for uniform coverage and length, of plug-in and EIF estimators across methods and sample sizes, with best in class performance being bold-faced.}
\centering
\begin{tabular}[t]{llcccc}
\toprule
\multicolumn{2}{c}{ } & \multicolumn{2}{c}{Plug-in} & \multicolumn{2}{c}{EIF} \\
\cmidrule(l{3pt}r{3pt}){3-4} \cmidrule(l{3pt}r{3pt}){5-6}
$n$ & Method & Un Cov & Un Len & Un Cov & Un Len\\
\midrule
 & Frequentist & 0.736 & \textbf{3.681} & 0.942 & \textbf{5.137}\\

 & BART & \textbf{0.954} & 4.849 & \textbf{0.978} & 5.458\\

 & SoftBART & 0.932 & 4.862 & 0.972 & 5.185\\

\multirow{-4}{*}{\raggedright\arraybackslash 500} & SoftBCF & 0.896 & 4.864 & 0.964 & 5.186\\
\cmidrule{1-6}
 & Frequentist & 0.732 & \textbf{2.620} & 0.942 & \textbf{3.626}\\

 & BART & \textbf{0.950} & 3.472 & \textbf{0.958} & 3.820\\

 & SoftBART & 0.902 & 3.479 & 0.948 & 3.646\\

\multirow{-4}{*}{\raggedright\arraybackslash 1000} & SoftBCF & 0.888 & 3.479 & 0.950 & 3.645\\
\cmidrule{1-6}
 & Frequentist & 0.720 & \textbf{1.176} & 0.938 & \textbf{1.627}\\

 & BART & \textbf{0.944} & 1.582 & \textbf{0.956} & 1.674\\

 & SoftBART & 0.912 & 1.598 & 0.942 & 1.629\\

\multirow{-4}{*}{\raggedright\arraybackslash 5000} & SoftBCF & 0.902 & 1.593 & 0.940 & \textbf{1.627}\\
\bottomrule
\end{tabular}
\end{table}

\begin{table}
\small
\caption{IPSI transformed data comparison, for uniform coverage and length, of plug-in and EIF estimators across methods and sample sizes, with best in class performance being bold-faced.}
\centering
\begin{tabular}[t]{llcccc}
\toprule
\multicolumn{2}{c}{ } & \multicolumn{2}{c}{Plug-in} & \multicolumn{2}{c}{EIF} \\
\cmidrule(l{3pt}r{3pt}){3-4} \cmidrule(l{3pt}r{3pt}){5-6}
$n$ & Method & Un Cov & Un Len & Un Cov & Un Len\\
\midrule
 & Frequentist & 0.386 & \textbf{3.235} & 0.926 & 5.221\\

 & BART & 0.872 & 4.766 & 0.968 & 5.365\\

 & SoftBART & \textbf{0.906} & 4.913 & \textbf{0.972} & 5.162\\

\multirow{-4}{*}{\raggedright\arraybackslash 500} & SoftBCF & 0.876 & 4.897 & 0.968 & \textbf{5.156}\\
\cmidrule{1-6}
 & Frequentist & 0.212 & \textbf{2.279} & 0.916 & 3.678\\

 & BART & 0.778 & 3.415 & \textbf{0.952} & 3.758\\

 & SoftBART & \textbf{0.866} & 3.526 & 0.950 & 3.640\\

\multirow{-4}{*}{\raggedright\arraybackslash 1000} & SoftBCF & 0.822 & 3.514 & 0.944 & \textbf{3.637}\\
\cmidrule{1-6}
 & Frequentist & 0.006 & \textbf{1.025} & 0.882 & 1.643\\

 & BART & 0.752 & 1.582 & \textbf{0.950} & 1.662\\

 & SoftBART & \textbf{0.866} & 1.656 & 0.946 & \textbf{1.640}\\

\multirow{-4}{*}{\raggedright\arraybackslash 5000} & SoftBCF & 0.860 & 1.637 & 0.942 & \textbf{1.640}\\
\bottomrule
\end{tabular}
\end{table}
\FloatBarrier
 
\noindent \textbf{Second Simulation Study}

In this second simulation study we have a $\pi$-fixed stochastic intervention, $Q \sim \text{Bernoulli}(p_\delta(X_1))$ where 
\begin{equation*}
    p_\delta(X_1) = \begin{cases}
        \delta_1 & X_1 \le 1 \\
        \delta_2 & X_1 > 1
    \end{cases}  
\end{equation*}
This is a $\pi$-fixed stochastic intervention in the sense that the treatment assignment probabilities are fully specified by the analyst and depend only on a pre-defined function of the covariates, not on the data-generating distribution. In this case, the estimand of interest is identified by 
\begin{equation*}
    \psi(P_0;Q) = P_0 \left[\mu(X,1)p_\delta(X) + \mu(X, 0)(1 - p_\delta(X)) \right]
\end{equation*}

While this intervention is simpler than an incremental propensity score intervention, it can still be relevant in practice. For instance, in a medical setting the probability of receiving a treatment, like a medication or chemotherapy procedure, could depend on the patient's covariates, like (age, genetics, etc.). These covariate-dependent rules are particularly useful for policy-makers who can't directly control treatment. 

In this case the density of $Q(a \mid X_1)$ is formally defined as, 
\begin{equation*}
    dQ(a\mid x_1) = p(x_1)^a [1-p(x_1)]^{1-a}, \quad a \in \{0,1\}.
\end{equation*}
Then the target estimand, $P_0[Y^Q]$, is the average outcome we would observe if treatment were assigned at random according to $Q(a \mid X_1)$. Then the efficient influence function in this case is 
\begin{align*}
\phi_{\text{$\pi$-fixed}} &= \sum_{a = 0}^1 \mu(X,a)dQ(a \mid X) - \psi(Q) + \frac{dQ(A \mid X)}{dP(A \mid X)}[Y - \mu(X,A)] \\
&= \mu(X,1) p(X) + \mu(X,0)(1-p(X)) - \psi(Q) + \frac{p(X)^A(1-p(X))^{1-A}}{dP(A \mid X)} \left[Y - \mu(X,A) \right].
\end{align*} 
Finally, by the same logic as before, the plug-in influence function is 
\begin{equation*}
    \varphi_{\text{$\pi$-fixed}} = \hat{\mu}(X,1)p_\delta(X) + \hat{\mu}(X, 0)(1 - p_\delta(X)) - \mathbb{P}_n[\hat{\mu}(X,1)p_\delta(X) + \hat{\mu}(X, 0)(1 - p_\delta(X))]
\end{equation*}

In the $\pi$-fixed case we use the same Monte Carlo setup, but we use 10 values of $\delta_1$ and $\delta_2$: $0.05, 0.15, \ldots, 0.95$. Below are the results in the $\pi$-fixed case, 
\begin{table}

\caption{Static regular data comparison of plug-in and EIF estimators across methods and sample sizes, with best in class performance being bold-faced.}
\centering
\begin{tabular}[t]{llcccccccc}
\toprule
\multicolumn{2}{c}{ } & \multicolumn{4}{c}{Plug-in} & \multicolumn{4}{c}{EIF} \\
\cmidrule(l{3pt}r{3pt}){3-6} \cmidrule(l{3pt}r{3pt}){7-10}
$n$ & Method & Bias & RMSE & Cov & Int Len & Bias & RMSE & Cov & Int Len\\
\midrule
 & Frequentist & 0.045 & \textbf{19.410} & \textbf{0.953} & 3.411 & 0.052 & \textbf{19.490} & \textbf{0.956} & 3.431\\

 & BART & 0.221 & 21.092 & 0.923 & 3.459 & 0.203 & 20.953 & 0.929 & 3.476\\

 & SoftBART & 0.102 & 19.804 & 0.949 & 3.446 & 0.095 & 19.774 & 0.950 & 3.445\\

\multirow{-4}{*}{\raggedright\arraybackslash 500} & SoftBCF & \textbf{0.043} & 19.813 & 0.948 & \textbf{3.395} & \textbf{0.042} & 19.819 & 0.948 & \textbf{3.398}\\
\cmidrule{1-10}
 & Frequentist & \textbf{0.021} & \textbf{19.176} & \textbf{0.957} & \textbf{2.416} & \textbf{0.022} & \textbf{19.216} & \textbf{0.958} & 2.427\\

 & BART & 0.144 & 20.424 & 0.938 & 2.438 & 0.133 & 20.310 & 0.942 & 2.448\\

 & SoftBART & 0.068 & 19.677 & 0.955 & 2.438 & 0.063 & 19.652 & 0.955 & 2.436\\

\multirow{-4}{*}{\raggedright\arraybackslash 1000} & SoftBCF & 0.036 & 19.528 & 0.953 & 2.419 & 0.034 & 19.532 & 0.952 & \textbf{2.421}\\
\cmidrule{1-10}
 & Frequentist & \textbf{0.016} & \textbf{19.675} & 0.949 & \textbf{1.083} & \textbf{0.015} & \textbf{19.716} & \textbf{0.950} & \textbf{1.087}\\

 & BART & 0.053 & 20.423 & 0.937 & 1.088 & 0.049 & 20.359 & 0.940 & 1.091\\

 & SoftBART & 0.031 & 19.918 & \textbf{0.951} & 1.101 & 0.029 & 19.889 & 0.949 & 1.092\\

\multirow{-4}{*}{\raggedright\arraybackslash 5000} & SoftBCF & 0.025 & 19.830 & 0.949 & 1.087 & 0.025 & 19.810 & 0.949 & 1.088\\
\bottomrule
\end{tabular}
\end{table}

\begin{table}

\caption{Static transformed data comparison of plug-in and EIF estimators across methods and sample sizes, with best in class performance being bold-faced.}
\centering
\begin{tabular}[t]{llcccccccc}
\toprule
\multicolumn{2}{c}{ } & \multicolumn{4}{c}{Plug-in} & \multicolumn{4}{c}{EIF} \\
\cmidrule(l{3pt}r{3pt}){3-6} \cmidrule(l{3pt}r{3pt}){7-10}
$n$ & Method & Bias & RMSE & Cov & Int Len & Bias & RMSE & Cov & Int Len\\
\midrule
 & Frequentist & 1.140 & 34.869 & 0.617 & \textbf{3.190} & 1.137 & 620.920 & 0.888 & 14.073\\

 & BART & 0.611 & 25.630 & 0.864 & 3.577 & 0.561 & 24.982 & 0.885 & 3.647\\

 & SoftBART & 0.366 & 22.235 & 0.928 & 3.640 & 0.367 & 22.251 & 0.927 & 3.649\\

\multirow{-4}{*}{\raggedright\arraybackslash 500} & SoftBCF & \textbf{0.236} & \textbf{21.550} & \textbf{0.932} & 3.561 & \textbf{0.239} & \textbf{21.553} & \textbf{0.935} & \textbf{3.582}\\
\cmidrule{1-10}
 & Frequentist & 1.260 & 45.760 & 0.378 & \textbf{2.195} & 0.702 & 78.603 & 0.844 & 4.247\\

 & BART & 0.360 & 23.813 & 0.906 & 2.496 & 0.331 & 23.346 & 0.917 & 2.526\\

 & SoftBART & 0.268 & 22.107 & 0.941 & 2.560 & 0.271 & 22.143 & 0.935 & 2.545\\

\multirow{-4}{*}{\raggedright\arraybackslash 1000} & SoftBCF & \textbf{0.181} & \textbf{20.902} & \textbf{0.952} & 2.513 & \textbf{0.183} & \textbf{20.931} & \textbf{0.951} & \textbf{2.515}\\
\cmidrule{1-10}
 & Frequentist & 1.314 & 95.480 & 0.051 & \textbf{0.949} & 0.265 & 37.819 & 0.798 & 1.523\\

 & BART & \textbf{0.149} & \textbf{22.899} & 0.922 & 1.110 & \textbf{0.144} & \textbf{22.752} & \textbf{0.922} & \textbf{1.111}\\

 & SoftBART & 0.175 & 23.666 & \textbf{0.933} & 1.205 & 0.174 & 23.630 & 0.915 & 1.146\\

\multirow{-4}{*}{\raggedright\arraybackslash 5000} & SoftBCF & 0.176 & 23.741 & 0.918 & 1.159 & 0.175 & 23.712 & 0.912 & 1.145\\
\bottomrule
\end{tabular}
\end{table}

\begin{table}

\caption{Static regular data comparison, for uniform coverage and length, of plug-in and EIF estimators across methods and sample sizes, with best in class performance being bold-faced.}
\centering
\begin{tabular}[t]{llcccc}
\toprule
\multicolumn{2}{c}{ } & \multicolumn{2}{c}{Plug-in} & \multicolumn{2}{c}{EIF} \\
\cmidrule(l{3pt}r{3pt}){3-4} \cmidrule(l{3pt}r{3pt}){5-6}
$n$ & Method & Un Cov & Un Len & Un Cov & Un Len\\
\midrule
 & Frequentist & 0.894 & \textbf{4.086} & 0.944 & \textbf{4.275}\\

 & BART & 0.852 & 4.419 & 0.876 & 4.460\\

 & SoftBART & \textbf{0.942} & 4.406 & \textbf{0.948} & 4.411\\

\multirow{-4}{*}{\raggedright\arraybackslash 500} & SoftBCF & 0.924 & 4.343 & 0.924 & 4.352\\
\cmidrule{1-6}
 & Frequentist & 0.912 & \textbf{2.890} & 0.958 & \textbf{3.024}\\

 & BART & 0.840 & 3.100 & 0.866 & 3.128\\

 & SoftBART & \textbf{0.958} & 3.103 & \textbf{0.962} & 3.104\\

\multirow{-4}{*}{\raggedright\arraybackslash 1000} & SoftBCF & 0.940 & 3.065 & 0.936 & 3.070\\
\cmidrule{1-6}
 & Frequentist & 0.896 & \textbf{1.294} & 0.950 & \textbf{1.355}\\

 & BART & 0.878 & 1.371 & 0.894 & 1.381\\

 & SoftBART & \textbf{0.964} & 1.400 & \textbf{0.960} & 1.392\\

\multirow{-4}{*}{\raggedright\arraybackslash 5000} & SoftBCF & 0.934 & 1.368 & 0.938 & 1.371\\
\bottomrule
\end{tabular}
\end{table}

\begin{table}

\caption{Static transformed data comparison, for uniform coverage and length, of plug-in and EIF estimators across methods and sample sizes, with best in class performance being bold-faced.}
\centering
\begin{tabular}[t]{llcccc}
\toprule
\multicolumn{2}{c}{ } & \multicolumn{2}{c}{Plug-in} & \multicolumn{2}{c}{EIF} \\
\cmidrule(l{3pt}r{3pt}){3-4} \cmidrule(l{3pt}r{3pt}){5-6}
$n$ & Method & Un Cov & Un Len & Un Cov & Un Len\\
\midrule
 & Frequentist & 0.354 & \textbf{3.984} & 0.810 & 11.719\\

 & BART & 0.822 & 4.597 & 0.856 & 4.743\\

 & SoftBART & 0.934 & 4.705 & 0.928 & 4.745\\

\multirow{-4}{*}{\raggedright\arraybackslash 500} & SoftBCF & \textbf{0.938} & 4.622 & \textbf{0.938} & \textbf{4.658}\\
\cmidrule{1-6}
 & Frequentist & 0.100 & \textbf{2.738} & 0.696 & 4.924\\

 & BART & 0.878 & 3.209 & 0.888 & 3.277\\

 & SoftBART & \textbf{0.964} & 3.306 & 0.946 & 3.303\\

\multirow{-4}{*}{\raggedright\arraybackslash 1000} & SoftBCF & 0.962 & 3.256 & \textbf{0.962} & \textbf{3.261}\\
\cmidrule{1-6}
 & Frequentist & 0.000 & \textbf{1.166} & 0.428 & 1.893\\

 & BART & 0.914 & 1.428 & 0.890 & \textbf{1.426}\\

 & SoftBART & \textbf{0.952} & 1.567 & \textbf{0.926} & 1.492\\

\multirow{-4}{*}{\raggedright\arraybackslash 5000} & SoftBCF & 0.940 & 1.500 & 0.906 & 1.486\\
\bottomrule
\end{tabular}
\end{table}

\FloatBarrier

As discussed in the main text, in the $\pi$-fixed scenario if $\mu(P_0)$ is estimated well then the posterior correction may not offer much improvement. However the Bayesian paradigm still provides advantages, as the Bayesian methods outperform the frequentist in the transformed data case.

\subsection{Additional Applied Study Details}
Default settings were used for SoftBART as described in the additional simulation study details. The model was fit on $I = 100$ values of $\delta$ log-uniform between $\exp(-2.3)$ and $\exp(2.3)$. Below are trace plots of the nuisance functions for randomly selected individuals and $\sigma$. We further have plots to confirm the normality assumption on the error term $\varepsilon \sim \sigma I$ for the outcome regression. In the probit model the error is assumed to be fixed at 1. Finally we plot the overlap
\begin{figure}[h]
\includegraphics[width=1\textwidth]{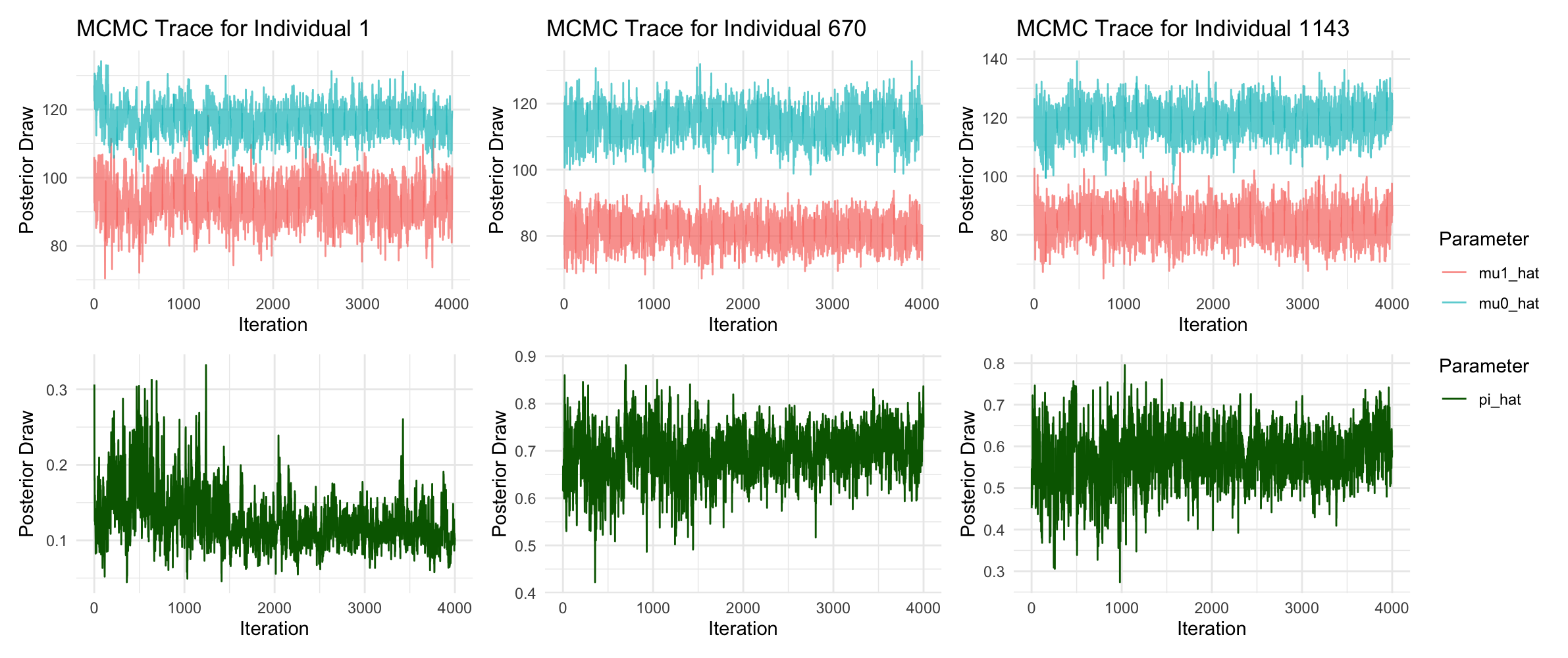}
\caption{Trace plots of nuisance function parameters for randomly selected individuals, shown to evaluate MCMC convergence and mixing.}
\label{trueProp}
\end{figure}

\begin{figure}[h]
\includegraphics[width=0.95\textwidth]{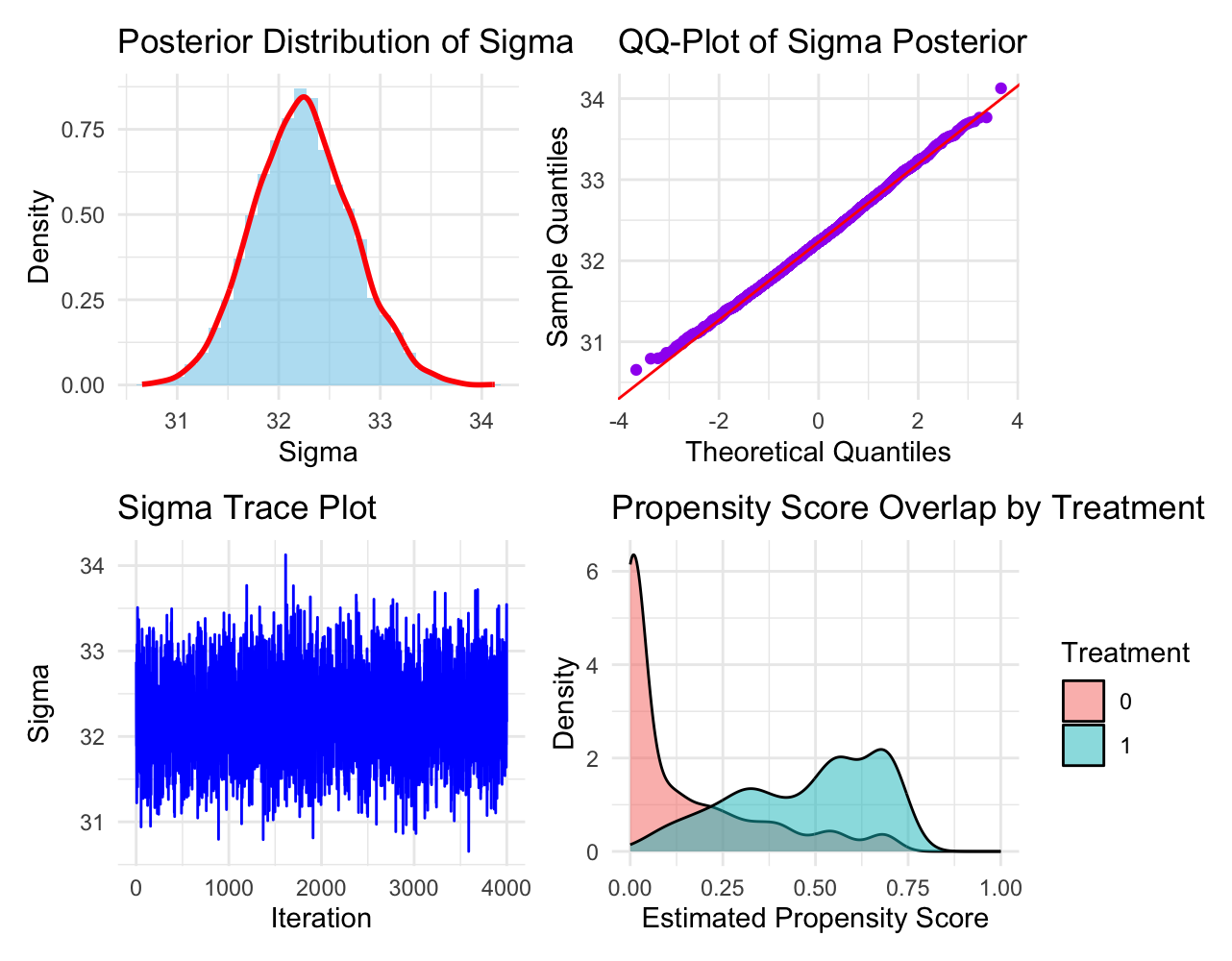}
\caption{Diagnostic plots for model assessment. The histogram and normal Q--Q plot evaluate the normality of $\sigma$. The trace plot of $\sigma$ assesses MCMC convergence. The propensity score overlap plot evaluates the positivity assumption.}
\label{trueProp}
\end{figure}
\FloatBarrier
Burn-in occurs at 2000 iterations, with the chains being mostly stable after that point. Further, as seen above through the histogram and QQ-Plot the normality assumption for $\sigma$ is justified. Finally, we have weak overlap and a lot of the estimated propensity score density is near zero. This validates the necessity of the incremental propensity score intervention for this analysis. Below is the plug-in shift effect per log-delta, as opposed to the one-step version in Section~\ref{sec:Real_data},

\begin{figure}[h]
\includegraphics[width=0.95\textwidth]{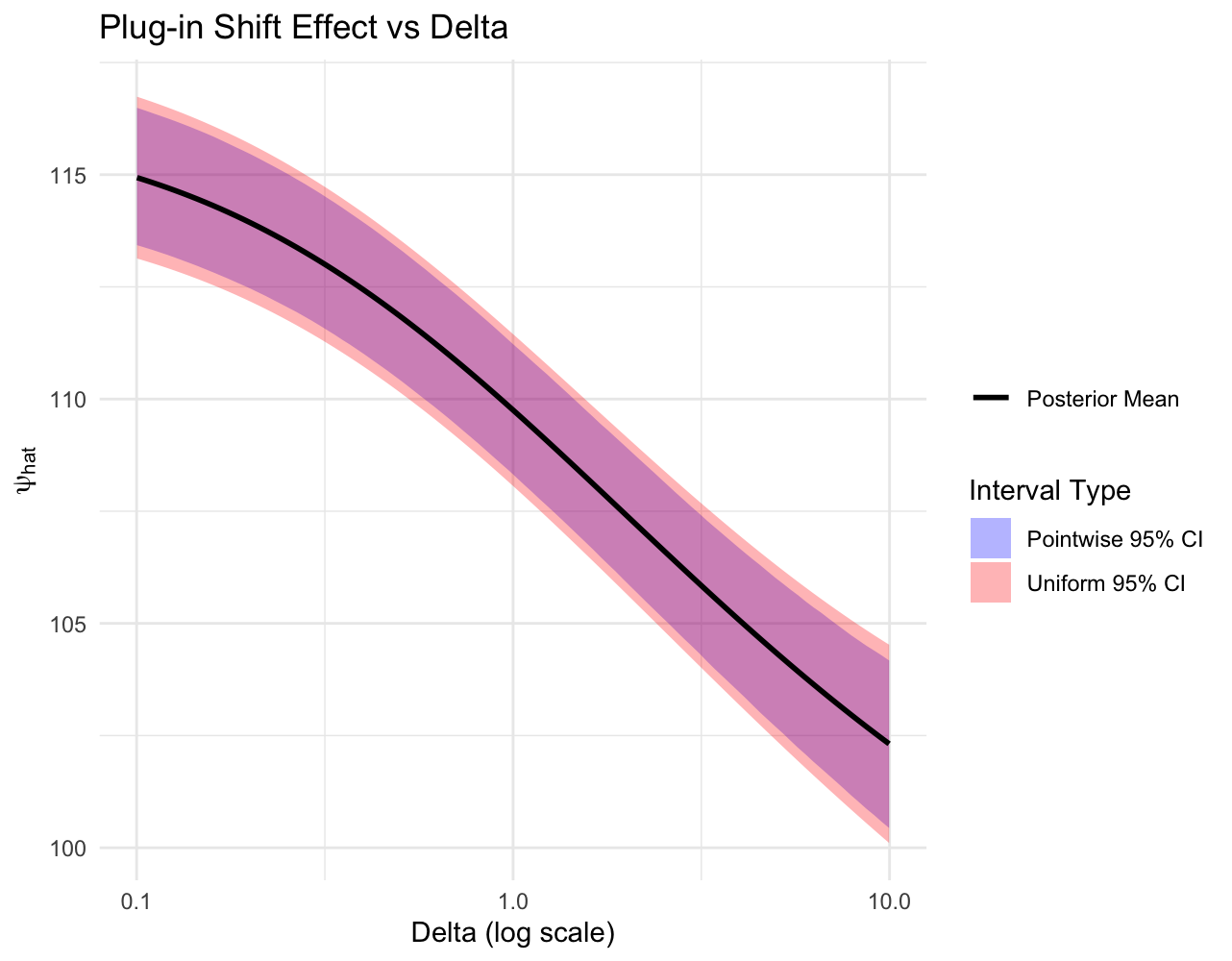}
\caption{Plug-in estimates of the effect of a hypothetical shift in statin treatment probability on LDL levels. The x-axis shows the magnitude of the shift in odds of receiving a statin ($\delta$, log scale), and the y-axis shows the corresponding estimated change in LDL ($\hat{\psi}$). The solid line represents the posterior-corrected estimate, while the shaded areas indicate pointwise (blue) and uniform (red) 95\% credible intervals.}
\label{plug_in_psi}
\end{figure}
\FloatBarrier

Finally, if we extend $\delta$ to (exp(-20.3), exp(20.3)) we recover the approximate average treatment effect. 

\begin{figure}[h]
\includegraphics[width=0.95\textwidth]{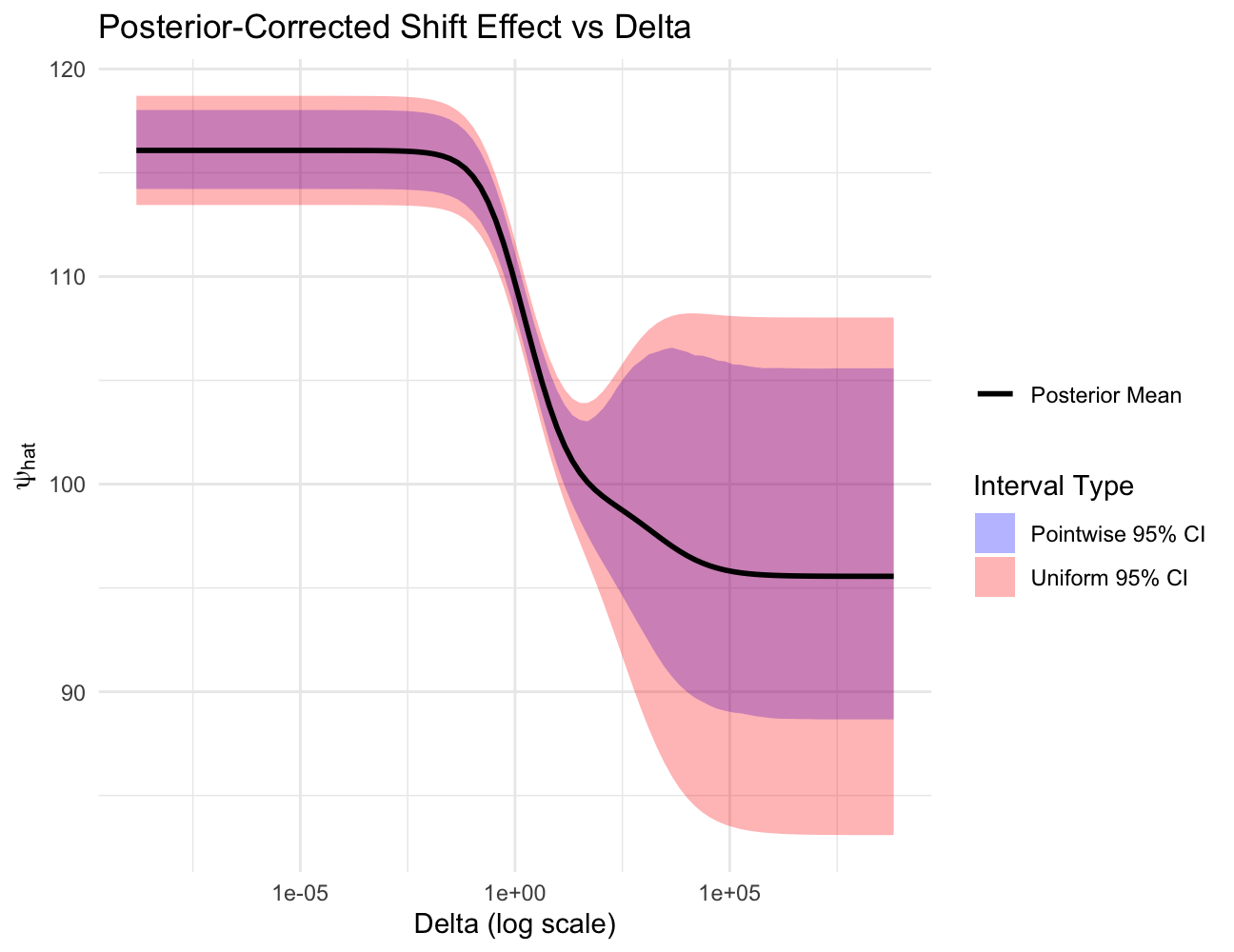}
\caption{Posterior-corrected estimates of the effect of a hypothetical shift in statin treatment probability on LDL levels. The x-axis shows the magnitude of the shift in odds of receiving a statin ($\delta$, log scale), and the y-axis shows the corresponding estimated change in LDL ($\hat{\psi}$). The solid line represents the posterior-corrected estimate, while the shaded areas indicate pointwise (blue) and uniform (red) 95\% credible intervals. Compared to Figure 4 the $\delta$ scale is much larger}
\label{plug_in_psi}
\end{figure}

\end{document}